\documentclass[12pt,onecolumn,twoside]{IEEEtran}

\usepackage{calc}

\usepackage{multirow}
\usepackage{cite}
\usepackage{graphicx,subfigure}
\usepackage{psfrag}
\usepackage{amsmath,amssymb,amsthm}
\usepackage{color}
\usepackage{tikz} 

\usepackage{bbm}

\usepackage{cases}

\interdisplaylinepenalty=2500







\DeclareMathOperator*{\defeq}{\triangleq}

\newtheorem{theorem}{Theorem}
\newtheorem{corollary}{Corollary}

\newtheorem{lemma}{Lemma}










\newcommand{\bit}{\begin{itemize}}
\newcommand{\eit}{\end{itemize}}

\newcommand{\bc}{\begin{center}}
\newcommand{\ec}{\end{center}}

\newcommand{\ba}{\begin{array}}
\newcommand{\ea}{\end{array}}

\newcommand{\beq}{\begin{equation}}
\newcommand{\eeq}{\end{equation}}

\newcommand{\beqn}{\begin{equation*}}
\newcommand{\eeqn}{\end{equation*}}

\newcommand{\bean}{\begin{eqnarray*}}
\newcommand{\eean}{\end{eqnarray*}}
\newcommand{\bea}{\begin{eqnarray}}
\newcommand{\eea}{\end{eqnarray}}

\def\E{\mathbb{E}}

\def\F{\mathbb{F}}



\def\sv{\boldsymbol{s}}



\newcommand{\Lc}{{\mathcal L}}

\newcommand{\Nc}{{\mathcal N}}

\newcommand{\Rc}{{\mathcal R}}
\newcommand{\Sc}{{\mathcal S}}

\newcommand{\Zc}{{\mathcal Z}}
\newcommand{\T}{{\scriptscriptstyle\mathsf{T}}}




\newcommand{\non}{\nonumber}

\newcommand{\Hen}{\mathbb{H}}
\newcommand{\hen}{\mathrm{h}}
\newcommand{\Imu}{\mathbb{I}}

\newcommand{\bln}{n}

\newcommand{\Ho}{\mathcal{H}_{\text{out}}}
\newcommand{\Hob}{\bar{\mathcal{H}}_{\text{out}}}

\IEEEoverridecommandlockouts

\pagestyle{empty}

\begin{document}
\sloppy

\title{ On the Optimality of Secure Communication Without Using Cooperative Jamming}
\title{Secure Communication over Interference Channel: To Jam or Not to Jam?}
\author{Jinyuan Chen 
\thanks{Jinyuan Chen is with Louisiana Tech University, Department of Electrical Engineering, Ruston, LA 71272, USA (email: jinyuan@latech.edu).  The work was partly supported by Louisiana Board of Regents Support Fund (BoRSF) Research Competitiveness Subprogram (RCS) under grant 32-4121-40336.
This work was presented in part at the 54th Annual Allerton Conference on Communication, Control, and Computing, 2016 and the 56th Annual Allerton Conference on Communication, Control, and Computing, 2018.
} 
}

\maketitle
\pagestyle{headings}

\begin{abstract}
We consider a secure communication over a two-user Gaussian interference channel, where each transmitter sends a \emph{confidential} message  to its legitimate receiver. 
For this setting,  we identify a regime where  the simple scheme of using Gaussian wiretap codebook at each transmitter (without cooperative jamming) and treating interference as noise at each intended receiver (in short, GWC-TIN scheme) achieves the optimal secure sum capacity to within a constant gap. 
For the symmetric case, this simple scheme is optimal when the interference-to-signal ratio (all link strengths in decibel scale) is no more than $2/3$. However, when the ratio is more than $2/3$, we show that this simple scheme is not optimal anymore and a scheme with cooperative jamming is proposed to achieve the optimal secure sum generalized degrees-of-freedom (GDoF).  
Specifically, for the symmetric case, we complete the optimal secure sum GDoF characterization for all the interference regimes.

\end{abstract}

\section{Introduction}

The notion of information-theoretic secrecy was first introduced by Shannon  in his seminal work \cite{Shannon:49}, which studied a secure communication in the presence of a private key that is revealed to both transmitter and legitimate receiver  but not to the eavesdropper. 
Later,  Wyner introduced the notion of secure capacity via  a degraded  wiretap channel, in which a  transmitter intends to send a confidential message to a legitimate receiver by hiding it from a degraded eavesdropper \cite{Wyner:75}. 
The secure capacity is the maximum rate at which the confidential message can be transmitted reliably and securely to the legitimate receiver. 
Wyner's result  was subsequently generalized to the non-degraded wiretap channel by Csisz\`ar and  K{\"o}rner \cite{CsiszarKorner:78}, and the Gaussian wiretap channel by Leung-Yan-Cheong and Hellman \cite{CH:78}.
This line of  secure capacity research has been extended to many multiuser channels, most notably,  broadcast  channels \cite{LMSY:08, LLPS:10, XCC:09,ChiaGamal:12,KTW:08}, multiple access channels  \cite{TY:08cj, TekinYener:08d, LP:08, LLP:11, KG:15, HKY:13},  and interference channels \cite{LMSY:08,LBPSV:09,LYT:08,HY:09,YTL:08, PDT:09, KGLP:11, XU:14, XU:15, MDHS:14, MM:14o, MM:16, GTJ:15, GJ:15, MXU:17}.

In the line of secure capacity research, cooperative jamming has been proposed extensively to improve the achievable secure rates in many channels (see \cite{TY:08cj,LMSY:08, XU:14, XU:15} and references therein). 
In particular,  cooperative jamming has been proposed in \cite{XU:14} and \cite{XU:15} to achieve the optimal secure sum degrees-of-freedom (DoF) in the interference channel with confidential messages,  wiretap channel with helpers, multiple access wiretap channel, and broadcast channel with confidential messages. 
The basic idea of the cooperative jamming scheme is to send jamming signals to confuse the potential eavesdroppers, while keeping legitimate receivers' abilities to decode the desired messages. This might involve a cooperation between the transmitters, and a careful design on the \emph{direction} and/or \emph{power}  of the cooperative jamming signals (see \cite{TY:08cj, LMSY:08, XU:14, XU:15}).  It is therefore implicit that the cooperative jamming schemes might  incur  some extra overhead, e.g., due to network coordination,  channel state information (CSI) acquisition, and power consumption.
In this work we seek to understand when it is  necessary to use cooperative jamming and when it is not,  for the secure communication over the interference channel.

Specifically, we focus on a secure communication over a two-user Gaussian interference channel, where each transmitter sends a \emph{confidential} message  to its legitimate receiver.  
For this setting, we identify a regime in which  the simple scheme of using Gaussian wiretap codebook at each transmitter, without cooperative jamming, and treating interference as noise at each intended receiver (in short, GWC-TIN scheme) achieves the optimal secure sum capacity to within a constant gap. 
The secrecy offered by this GWC-TIN scheme is information-theoretic secrecy, which holds for any decoding methods at any unintended receiver (eavesdropper). 
In this simple scheme, the transmitters do \emph{not} need to know all the information of channel realizations. Therefore, the overhead associated with acquiring  channel state information at the transmitters (CSIT) is minimal for the  GWC-TIN scheme.
For the symmetric case, this simple scheme is optimal when the interference-to-signal ratio (all link strengths in decibel scale), denoted by $\alpha$,  is no more than $2/3$, i.e., $0 \leq \alpha \leq 2/3$. However, when the ratio is more than $2/3$, we show that this simple scheme is not optimal anymore and a scheme with cooperative jamming is proposed to achieve the optimal secure sum generalized degrees-of-freedom (GDoF).

{\bf Some related works:}  For a two-user Gaussian interference channel \emph{without} secrecy constraints, the capacity has been characterized to within one bit in \cite{ETW:08} for a decade.  Our work focuses on the secure capacity approximation for the case with secrecy constraints. Specifically, for the symmetric case this work completes the full characterization on secure sum GDoF for all the interference regimes. Furthermore, this work identifies a regime in which the low-complexity scheme, i.e., the GWC-TIN scheme, achieves the optimal secure sum capacity to within a constant gap.  
The low-complexity communication schemes have received significant attention in the literature (see, e.g., \cite{GNAJ:13}). Specifically, the  work in \cite{GNAJ:13} considered the Gaussian interference channel \emph{without} secrecy constraints and investigated the optimality of the low-complexity scheme that treats interference as noise (TIN scheme), in terms of GDoF region and constant-gap capacity region.
For the Gaussian interference channel with secrecy constraints, the previous work in \cite{GJ:15} showed that secrecy constraints incur no penalty in GDoF, i.e., the secure GDoF region and the GDoF region remain the same  under the condition in which the TIN scheme was proved to be GDoF-optimal (cf.~\cite{GNAJ:13}).  
Note that for the two-user symmetric Gaussian interference channel, the condition identified in \cite{GNAJ:13} for the TIN scheme to be GDoF-optimal---the same condition in which secrecy constraints incur no penalty in GDoF (cf.~\cite{GJ:15})---is simplified to  $0 \leq \alpha \leq 1/2$.  Interestingly,  for the two-user symmetric Gaussian interference channel \emph{with secrecy constraints},  our work shows that the proposed low-complexity scheme  (GWC-TIN scheme) is optimal in terms of secure sum GDoF \emph{if and only if} $0 \leq \alpha \leq 2/3$.  When $ \alpha > 2/3$, our work also characterizes  the optimal secure sum GDoF, which is achievable by  the proposed scheme using interference alignment and  cooperative jamming. In addition to the new achievability scheme, our work also provides a new converse to prove the results. 
In a related work in \cite{GTJ:15}, the authors considered the two-user symmetric  deterministic interference channel  and provided the upper bound and lower bound on the secure symmetric capacity. Note that, there is still a gap between the upper bound and lower bound derived in  \cite{GTJ:15},  when the interference-to-signal ratio $\alpha$ (all link strengths now measured in bits) is  $1/2 < \alpha < 3/4$.  In our work in \cite{allerton:16} (conference version), we closed the gap by providing a new capacity upper bound. 
In a related work in \cite{MM:16}, the authors considered the secure communication over a two-user symmetric interference channel with \emph{transmitter cooperation}, and provided the secure capacity upper bound and lower bound.  Note that, for the deterministic channel model and removing the transmitter cooperation, the  upper bound and lower bound derived in \cite{MM:16} are not matched for some channel parameters. We believe that our optimal secure GDoF results derived in our setting can be extended to solve the open problems in the other communication channels, including the channel considered in \cite{MM:16}.

The remainder of this work is organized as follows. 
Section~\ref{sec:system} describes the  system model and the simple scheme without cooperative jamming.  
Section~\ref{sec:mainresult} provides the  main results of this work. 
A scheme with cooperative jamming is proposed in Sections~\ref{sec:CJGau}. Some analysis on the cooperative jamming scheme  is provided in  Section~\ref{sec:rateerror341} and the appendices. 
The converse proof is provided in  Section~\ref{sec:converse} and the appendices.
The work is concluded in Section~\ref{sec:conclusion}.  
Throughout this work, $\Imu(\bullet)$, $\Hen(\bullet)$ and $\hen(\bullet)$ denote the mutual information, entropy and differential entropy,  respectively.  
$(\bullet)^\T$ and $(\bullet)^{-1}$ denote the transpose  and inverse operations, respectively.  
$\F^{q}_{2}$ denotes a set of $q$-tuples of binary numbers.  
     $\Zc$,  $\Zc^+$ and $\Rc$  denote the sets of integers, positive integers, and real numbers, respectively.   
        $o(\bullet)$ comes from the standard Landau notation, where  $f(x)=o(g(x))$ implies that $\lim_{x \to \infty} f(x)/g(x) =0$. 
$(\bullet)^+= \max\{0, \bullet\}$. Logarithms are in base~$2$.  
Unless for some specific parameters,  matrix,  scalar, and  vector are usually denoted by the  italic uppercase symbol (e.g., $S$),  italic lowercase symbol (e.g., $s$), and  the bold italic lowercase symbol  (e.g., $\sv$), respectively.
$s \sim \mathcal{N}(0, \sigma^2)$  denotes that the random variable $s$ has a normal distribution with zero mean and $\sigma^2$ variance.

\section{System model and preliminaries  \label{sec:system} }

This section provides the system model and discusses the  simple scheme without using cooperative jamming.

\begin{figure}[t!]
\centering
\includegraphics[width=6cm]{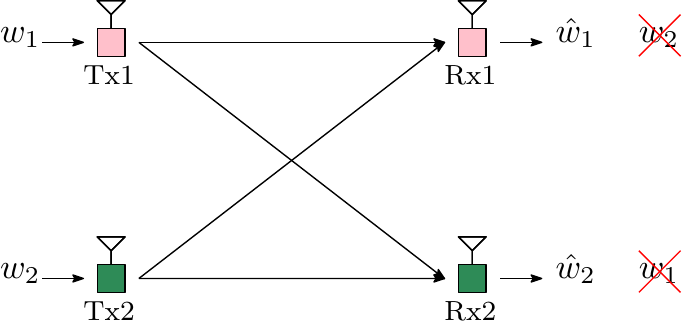}
 \vspace{-.05 in}
\caption{Two-user interference channel with confidential messages.}
\label{fig:IC_secrecy}
\end{figure}

\subsection{Gaussian interference channel  \label{sec:systemGaussian} }

We begin with a two-user $(K=2)$ Gaussian interference channel (see Fig.~\ref{fig:IC_secrecy}).  By following the convention in \cite{NM:13},  the channel output at receiver~$k$ at time~$t$ is given by  
\begin{align}
y_{k}(t) &= \sum_{\ell=1}^{K}  2^{m_{k\ell}} h_{k\ell} x_{\ell}(t) +z_{k}(t), \quad \quad k=1,\cdots,K  \label{eq:channelG101} 
\end{align}
$t=1,2, \cdots, \bln $, where  $x_{\ell}(t)$ is the channel input at transmitter~$\ell$  subject to a normalized power constraint  $\E |x_{\ell}(t)|^2 \leq 1$, $z_k(t) \sim \mathcal{N}(0, 1)$ is additive white Gaussian noise at receiver~$k$,  $m_{k\ell}$ is a nonnegative integer, and  $h_{k\ell} \in (1, 2]$,  for $k, \ell \in\{1,\cdots, K\}$.  Note that all the channel gains greater than or equal to one can be represented in the form of $2^{m_{k\ell}} h_{k\ell}$. Thus, the model in \eqref{eq:channelG101} can represent the general cases of the channels  in terms of capacity approximations.
Let us define  
\begin{align}
\alpha_{k\ell} &\defeq  \frac{\log 2^{m_{k\ell}}}{ \frac{1}{2}\log P}  \quad \quad k, \ell \in\{1,\cdots, K\}  \label{eq:alpha11} 
\end{align}
where $P \defeq \max_{k}\{ 2^{2 m_{kk}} \}$. 
Then, the original channel model  can be rewritten by 
\begin{align}
y_{k}(t) &= \sum_{\ell =1}^{K}  \sqrt{P^{\alpha_{k\ell}}} h_{k\ell} x_{\ell}(t) +z_{k}(t), \quad \quad k=1,\cdots,K  \label{eq:channelG} 
\end{align}
where the exponent  $\alpha_{k\ell} \geq 0$ represents the \emph{channel strength} of the link between transmitter~$\ell$ and receiver~$k$, and 
$h_{k\ell} \in (1, 2]$ represents  the normalized channel coefficient (we call it as \emph{channel phase}).  
In what follows, we will consider the channel model in \eqref{eq:channelG}. 
It is assumed that  each node  knows all the channel strengths and phases.  However,  in the  scheme without using cooperative jamming  that will be discussed later on, the transmitters do not need to know  the channel phases $\{h_{k\ell} \}_{k, \ell}$. 
For the \emph{symmetric} case, it is assumed that 
\[ \alpha_{11}= \alpha_{22}=1, \quad  \alpha_{21}= \alpha_{12}=\alpha,  \quad   \alpha \geq 0.\]

For this interference channel, transmitter~$k$ wishes to send to  receiver~$k$ a confidential message $w_k$ that is  uniformly chosen from a set $\mathcal{W}_k \defeq \{1,2,\cdots, 2^{\bln R_k}\}$, where $R_k$ is  the rate  (bits/channel use) of this message and $\bln$ is the total number of channel uses. 
At each transmitter, a corresponding stochastic function is employed to map  the message  to a transmitted codeword (cf.~\cite{TY:08cj,LMSY:08, XU:14, XU:15, YTL:08} and references therein). Specifically, at transmitter~$k$, the following function \[f_k: \mathcal{W}_k  \times   \mathcal{W}'_{k} \times   \mathcal{W}''_{k}  \to   \mathcal{R} ^{\bln}, \quad  \quad k=1,2\] maps the message $w_k \in \mathcal{W}_k$  to a transmitted codeword  $ x_k^{\bln}  = f_k(w_k, w'_k , w''_k)   \in  \mathcal{R} ^{\bln}$, where $w'_k \in  \mathcal{W}'_k$ and $w''_k \in  \mathcal{W}''_k$ represent the randomness in this mapping, and $\{w'_k, w''_k \}$ are available at transmitter~$k$ only (cf.~\cite{YTL:08}). 
The random variables $\{w_1, w'_1 , w''_1, w_2, w'_2 , w''_2\}$ are assumed to be mutually independent.  In our setting,  the randomness coming from $w'_k$ at transmitter~$k$ is used to guarantee the security (see \eqref{eq:defsecrecy1} and \eqref{eq:defsecrecy2} below) of its own confidential message $w_k$; while the randomness coming from $w''_k$ at transmitter~$k$ is used to potentially improve the secure rate of the \emph{other} transmitter's confidential message. Thus, $w''_k$ can be treated as the \emph{helping} randomness and the signal mapped from $w''_k$ can be treated as a \emph{cooperative jamming signal}.
Hereafter, when any of $w''_1$ and $w''_2$ are used in a communication scheme, we will call it as a scheme with  cooperative jamming.       
When $w''_1$ and $w''_2$ are not used in a communication scheme, i.e., $\mathcal{W}''_1=\mathcal{W}''_2= \phi$, we will call it as a scheme \emph{without} cooperative jamming.       
A secure rate pair $(R_1, R_2)$ is said to be achievable  if for any $\epsilon >0$ there exists a sequence of $\bln$-length codes such that each receiver can decode its own message reliably, i.e., 
 \begin{align}
 \text{Pr}[w_k  \neq \hat{w}_k  ]  \leq \epsilon, \quad \forall k   \label{eq:Pedef}
  \end{align}
and the messages are kept secret such that  
 \begin{align}
 \Imu(w_1; y_{2}^{\bln})  &\leq  \bln \epsilon   \label{eq:defsecrecy1}  \\
\Imu(w_2; y_{1}^{\bln})  &\leq  \bln \epsilon   \label{eq:defsecrecy2}
 \end{align}
where $y_{k}^{\bln}$ represents the $\bln$-length channel output  of receiver~$k$, $k=1,2$. 
The secure capacity region $C$ is the closure of the set of all achievable secure rate pairs.
The secure sum capacity is defined as: 
 \begin{align}
 C_{\text{sum}} \defeq \sup \big\{ R_1 + R_2 |  \  (R_1, R_2) \in C  \big\} .  \label{eq:defGDoFsum}
  \end{align}
The secure sum GDoF is defined as   
 \begin{align}
 d_{\text{sum}}  \defeq   \lim_{P \to \infty}   \frac{C_{\text{sum}}}{ \frac{1}{2} \log P}.  \label{eq:defGDoF}
 \end{align}

\subsection{A scheme without cooperative jamming   \label{sec:noCJGau} }

This subsection  discusses a scheme without cooperative jamming, for the two-user Gaussian interference channel defined in Section~\ref{sec:systemGaussian}.
In the proposed scheme without cooperative jamming, each transmitter simply employs a \emph{Gaussian wiretap codebook} (GWC) to guarantee the secrecy \emph{without using cooperative jamming}, while each receiver simply \emph{treats interference as noise} (TIN) when decoding its desired message. It is called as a  GWC-TIN scheme hereafter.  Note that the secrecy offered by the GWC-TIN scheme is information-theoretic secrecy, which holds for any decoding methods at any eavesdropper. 
Some details of the scheme are discussed as follows.

\subsubsection{Gaussian wiretap codebook}
To build the  codebook, transmitter~$k$  generates a total of $2^{\bln (R_k + R_k')}$  independent  codewords  $v^{\bln}_k$  with each element independent and identically distributed (i.i.d.) according to a  Gaussian distribution with zero mean and variance $P^{- \beta_k}$, $k=1,2$, for some $R_k, R_k'$ and $\beta_k \geq 0$ that will be designed specifically later on. 
The codebook $\mathcal{B}_{k}$ is defined as a set of the labeled codewords:
   \begin{align}
     \mathcal{B}_{k} \defeq \bigl\{ v^{\bln}_k (w_k,  w_k'): \  w_k \in \{1,2,\cdots, 2^{\bln R_k}\}, \  w_k' \in \{1,2,\cdots, 2^{\bln R_k'}\}   \bigr\},  \quad k=1,2.      \label{eq:code2341}
     \end{align}
 To transmit the message $w_k$,  transmitter~$k$ at first selects a bin (sub-codebook)  $\mathcal{B}_{k}( w_k) $ that is defined as 
\[   \mathcal{B}_{k} (w_k)  \defeq \bigl\{ v^{\bln}_k (w_k,  w_k'): \  w_k' \in \{1,2,\cdots, 2^{\bln R_k'}\}   \bigr\},  \quad k=1,2  \]
and then \emph{randomly} chooses a codeword $v^{\bln}_k$ from the selected bin according to a uniform distribution.
Since this scheme will not use cooperative jamming, the chosen codeword $v^{\bln}_k$ will be mapped exactly as a channel input sequence by transmitter~$k$, that is,  $x_k (t) =  v_k (t), \ t=1,2, \cdots, \bln$,
where $v_k (t)$ is the   $t$th element of  the codeword $v^{\bln}_k$,  $k=1,2$. Based on this one-to-one mapping and Gaussian codebook, it implies that  
\begin{align}
x_k (t)= v_k (t) \sim \Nc (0, P^{- \beta_k}), \quad \forall t,  \quad   k=1,2.  \label{eq:map888}
\end{align}
Then,  the received signals take the following forms (removing the time index for simplicity):
\begin{align}
y_{1} &=  \underbrace{\sqrt{P^{\alpha_{11}}} h_{11} v_1}_{| h_{11}|^2 P^{\alpha_{11}  - \beta_1}}+  \underbrace{\sqrt{P^{\alpha_{12}}} h_{12} v_2}_{| h_{12}|^2 P^{\alpha_{12} - \beta_2}}  +  \underbrace{z_{1}}_{P^{0}},  \label{eq:y441}  \\
   y_{2} &= \underbrace{\sqrt{P^{\alpha_{22}}} h_{22} v_{2} }_{| h_{22}|^2P^{\alpha_{22}-\beta_2}} + \underbrace{ \sqrt{P^{\alpha_{21}}} h_{21} v_{1}}_{| h_{21}|^2 P^{\alpha_{21}-\beta_1}}    + \underbrace{z_{2}}_{P^{0}} \label{eq:y442}
\end{align}
(cf.~\eqref{eq:channelG}). In the above equations, the  average power is noted under each  summand term.

\subsubsection{Treating interference as noise}
In terms of decoding, each intended receiver simply treats interference as noise. This implies that receiver~$k$ can  decode the codeword $v^{\bln}_k (w_k,  w_k')$  with arbitrarily small error probability when  $\bln$ gets large and  the rate of the codeword (i.e., $R_k + R_k'$) satisfies the following condition: 
\begin{align}
R_k + R_k' <  \Imu(v_k; y_k ) ,  \quad k=1,2      \label{eq:Rk876}
\end{align}
(cf.~\cite{CT:06}). 
Note that $R_k$ and $R_k'$ represent the rates of the secure message $w_k$ and the confusion message $w_k'$, respectively (cf.~\eqref{eq:code2341}).
Once  the codeword  $v^{\bln}_k (w_k,  w_k')$ is decoded, the message $w_k$ can be decoded directly from the codebook mapping.
At this point, let us set
\begin{align}
R_k &\defeq    \Imu(v_k; y_k) -  \Imu ( v_k; y_{\ell} | v_{\ell} ) - \epsilon,  \label{eq:Rk623} \\
R_k'   &\defeq   \Imu ( v_k; y_{\ell} | v_{\ell}) - \epsilon  \label{eq:Rk623b}  
\end{align}
for some $\epsilon >0$ and $k,\ell   \in \{1,2\}, k \neq \ell$. Obviously,  $R_k$ and $R_k'$ designed in \eqref{eq:Rk623} and \eqref{eq:Rk623b} satisfy the condition in \eqref{eq:Rk876}.

\subsubsection{Secure rate}
 From the  proof of \cite[Theorem~2]{XU:15}   (or \cite[Theorem~2]{LMSY:08})  it implies that, given the above wiretap codebook and the rates designed in \eqref{eq:Rk623} and \eqref{eq:Rk623b},  the messages $w_1$ and $w_2$ are secure from their eavesdroppers, that is,  
$\Imu(w_1; y_{2}^{\bln})  \leq  \bln \epsilon$ and $\Imu(w_2; y_{1}^{\bln})  \leq  \bln \epsilon$. 
Therefore, by letting $\epsilon \to 0$, the scheme 
achieves the secure rate pair  $R_1  =  \Imu(v_1; y_1) -  \Imu ( v_1; y_2 | v_2 ) $ and $R_2  = \Imu(v_2; y_2) -  \Imu ( v_2; y_1 | v_1 )$.   Due to the Gaussian inputs  and  outputs, this achievable secure rate pair is expressed as  
 \begin{align}
R_1  &=\underbrace{\frac{1}{2} \log \bigl(  1+    \frac{|h_{11}|^2 P^{\alpha_{11} - \beta_1}}{ 1+ |h_{12}|^2 P^{\alpha_{1 2} - \beta_{2}}}  \bigr)}_{= \Imu(v_1; y_1)}   -  \underbrace{\frac{1}{2} \log (1+ |h_{21}|^2P^{\alpha_{21} - \beta_{1}})}_{  = \Imu ( v_1; y_2 | v_2 ) }   \non\\
R_2    &=  \underbrace{\frac{1}{2} \log \bigl(  1+     \frac{|h_{22}|^2 P^{\alpha_{22} - \beta_2}}{ 1+ |h_{21}|^2P^{\alpha_{21} - \beta_{1}}}  \bigr) }_{ = \Imu(v_2; y_2)  } -   \underbrace{\frac{1}{2} \log (1+ |h_{12}|^2 P^{\alpha_{12} - \beta_{2}}) }_{ = \Imu ( v_2; y_1 | v_1 ) } \non
\end{align}
for some $\beta_1, \beta_2 \geq 0$.  By setting $\beta_1 =  \alpha_{21}$ and  $\beta_2 =  \alpha_{12}$, then the interference at each receiver is scaled down to the noise level (see~\eqref{eq:y441} and \eqref{eq:y442}) and the  achievable secure rate pair becomes 
 \begin{align}
R_1 &=  \frac{1}{2} \log \bigl(  1+    \frac{|h_{11}|^2 P^{\alpha_{11} - \alpha_{21} }}{1+ |h_{12}|^2 }  \bigr)  -  \frac{1}{2} \log (1+ |h_{21}|^2)     \label{eq:Rk111}    \\
R_2  &=  \frac{1}{2} \log \bigl(  1+     \frac{|h_{22}|^2 P^{\alpha_{22} -  \alpha_{12} }}{ 1+ |h_{21}|^2 }\bigr)  -   \frac{1}{2} \log (1+ |h_{12}|^2).   \label{eq:Rk432}  
\end{align}
Note that the above secure rate pair is achieved by using a simple choice of $\beta_1 $ and  $\beta_2$. One can improve the secure rate pair by selecting the optimized parameters  of $\beta_1 $ and  $\beta_2$.
From the achievable secure rate pair expressed in \eqref{eq:Rk111} and \eqref{eq:Rk432}, it implies that the GWC-TIN scheme achieves the following  secure sum GDoF
 \begin{align}
 d_{\text{sum}}^{GT}  &= (\alpha_{11} - \alpha_{21})^+ + (\alpha_{22} -  \alpha_{12})^+   .  \label{eq:GWC-TINGDoF}    
\end{align}
Note that in this GWC-TIN scheme, the transmitters do \emph{not} need to know  the realizations of $\{h_{k\ell} \}_{k, \ell}$.

\section{Main results  \label{sec:mainresult}}

In this section we provide the main results of this work.   At first we provide the secure sum GDoF characterization for the  two-user symmetric  Gaussian interference channel defined in Section~\ref{sec:systemGaussian}.

\begin{theorem}  \label{thm:GDoF}
Considering the two-user symmetric  Gaussian interference channel defined in Section~\ref{sec:systemGaussian}, for almost all channel coefficients  $\{h_{k\ell}\} \in (1, 2]^{2\times 2}$,  the optimal secure sum GDoF  is characterized as 
\begin{subnumcases}
{ d_{\text{sum}}  =} 
     2(1- \alpha)    &    for   \ $ 0 \leq \alpha \leq  \frac{2}{3}$        \label{thm:capacitydet1} \\
    2(2\alpha- 1)  &  for \ $\frac{2}{3}  \leq \alpha \leq  \frac{3}{4}$    \label{thm:capacitydet2} \\ 
        2(1 -  2\alpha / 3)  &  for  \   $\frac{3}{4}  \leq \alpha \leq  1$    \label{thm:capacitydet3} \\ 
            2\alpha/3  &  for  \  $1  \leq  \alpha \leq  \frac{3}{2}$    \label{thm:capacitydet5} \\ 
                        2(2- \alpha)  &  for \   $\frac{3}{2} \leq  \alpha \leq 2$   \label{thm:capacitydet6} \\ 
                                               0  &  for  \   $2\leq  \alpha$ .  \label{thm:capacitydet7}
\end{subnumcases}
Moreover,  a simple scheme without using cooperative jamming, that is, GWC-TIN  scheme, achieves the optimal GDoF if and only if  $\alpha \in [0,  \frac{2}{3}]$. 
\end{theorem}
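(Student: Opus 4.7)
The plan is to prove Theorem~\ref{thm:GDoF} by establishing matching achievability and converse bounds on the secure sum GDoF in each of the six interference regimes of \eqref{thm:capacitydet1}--\eqref{thm:capacitydet7}, and then comparing the GWC-TIN GDoF derived in Section~\ref{sec:noCJGau} to the resulting characterization.

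\textbf{Achievability.} For $\alpha\in[0,\tfrac{2}{3}]$ I would invoke the GWC-TIN scheme of Section~\ref{sec:noCJGau}: specializing \eqref{eq:Rk111}--\eqref{eq:Rk432} to $\alpha_{11}=\alpha_{22}=1$, $\alpha_{12}=\alpha_{21}=\alpha$ gives $R_1+R_2=(1-\alpha)\log P+O(1)$, so $d_{\text{sum}}\geq 2(1-\alpha)$. For the four non-trivial regimes $\alpha\in(\tfrac{2}{3},\tfrac{3}{4}]$, $(\tfrac{3}{4},1]$, $(1,\tfrac{3}{2}]$, and $(\tfrac{3}{2},2]$ the GWC-TIN scheme is strictly suboptimal, and I would use the cooperative jamming scheme constructed in Section~\ref{sec:CJGau}. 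The underlying idea is that each transmitter superposes a jamming signal whose power exponent is chosen so that, at the unintended receiver, the jamming aligns at the signal level with the other user's confidential codeword (providing enough equivocation to meet \eqref{eq:defsecrecy1}--\eqref{eq:defsecrecy2}), while at the intended receiver the jamming sits below (or in a non-interfering subspace of) the noise floor. Tuning the power exponents of message and jamming components in each sub-regime yields the piecewise-linear values $2(2\alpha-1)$, $2(1-2\alpha/3)$, $2\alpha/3$, and $2(2-\alpha)$ of \eqref{thm:capacitydet2}--\eqref{thm:capacitydet6}. The regime $\alpha\geq 2$ is trivial.

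\textbf{Converse.} I would combine Fano's inequality with the secrecy constraints \eqref{eq:defsecrecy1}--\eqref{eq:defsecrecy2} to obtain the starting inequality
\begin{equation}
n(R_1+R_2)\leq \Imu(w_1;y_1^{\bln})-\Imu(w_1;y_2^{\bln})+\Imu(w_2;y_2^{\bln})-\Imu(w_2;y_1^{\bln})+n\epsilon_n, \non
\end{equation}
and then bound each difference using genie-aided side information at each receiver (for instance, giving receiver~$k$ a noisy version of the interfering transmit signal or of the other user's confusion randomness $w'_\ell$). Expanding the resulting conditional differential entropies under the power constraint $\E|x_k|^2\leq 1$ and applying the Gaussian maximum-entropy lemma, then normalizing by $\tfrac{1}{2}\log P$, produces piecewise-linear upper bounds matching the claimed $d_{\text{sum}}$. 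A distinct choice of side information (or a suitable convex combination of single-user secrecy bounds) is needed for each of the kinks at $\alpha=\tfrac{2}{3},\tfrac{3}{4},1,\tfrac{3}{2}$.

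\textbf{GWC-TIN optimality iff $\alpha\in[0,2/3]$.} This follows by comparing the GWC-TIN expression $d_{\text{sum}}^{GT}=2(1-\alpha)^+$ from \eqref{eq:GWC-TINGDoF} to the matched formula for $d_{\text{sum}}$. On $[0,\tfrac{2}{3}]$ the two coincide since $2(1-\alpha)=d_{\text{sum}}$. At $\alpha=\tfrac{2}{3}$ one has $2(2\alpha-1)=\tfrac{2}{3}=2(1-\alpha)$, but for $\alpha\in(\tfrac{2}{3},1]$ the optimum exceeds $2(1-\alpha)$ (since $2(2\alpha-1)>2(1-\alpha)$ on $(\tfrac{2}{3},\tfrac{3}{4}]$ and $2(1-2\alpha/3)>2(1-\alpha)$ on $(\tfrac{3}{4},1]$), while on $(1,2)$ the cooperative-jamming GDoF is strictly positive whereas $2(1-\alpha)^+=0$, giving the ``only if'' direction.

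\textbf{Main obstacle.} The hard part will be the converse in the regime $\tfrac{2}{3}<\alpha<1$ (and, by reciprocity of the GDoF formula under $\alpha\leftrightarrow 2-\alpha$ scaling, in $1<\alpha<\tfrac{3}{2}$), where the upper bound has two distinct slopes and where the gap noted in \cite{GTJ:15} was closed only via the argument in the conference version \cite{allerton:16}. Standard wiretap converse techniques give the looser bound $2(1-\alpha)$, matching GWC-TIN, and thus cannot prove the kink at $\alpha=\tfrac{2}{3}$. To obtain a tight bound one must introduce genie signals that represent a carefully chosen linear combination of the two transmitted signals, in a manner dual to the signal-level alignment that the cooperative-jamming achievability exploits; only then does the converse detect the extra secure GDoF provided by jamming. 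Once that ingredient is in hand, the remaining regimes follow from more routine wiretap/interference-channel converse manipulations.
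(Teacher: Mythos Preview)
Your overall architecture is right, but two concrete ingredients are missing and the plan as written will not close.

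\textbf{Converse.} A single $R_1+R_2$ inequality of the type you describe cannot produce the pieces $d_{\text{sum}}\le 2(1-2\alpha/3)$ on $[\tfrac34,1]$ and $d_{\text{sum}}\le 2\alpha/3$ on $[1,\tfrac32]$. The paper's $R_1+R_2$ bound \eqref{eq:gaussianupbound} yields, in the symmetric case, $d_{\text{sum}}\le 2\max\{1-\alpha,\,2\alpha-1\}$, which is tight on $[0,\tfrac34]$ but equals $2$ at $\alpha=1$. The $1-2\alpha/3$ and $\alpha/3$ pieces come instead from \emph{asymmetric} bounds on $2R_1+R_2$ and $2R_2+R_1$ (Lemma~\ref{lm:gupper}, \eqref{eq:detup4}--\eqref{eq:detup5}): one bounds $R_1\le\Imu(w_1;y_1^{\bln},\tilde{x}_1^{\bln},\tilde{x}_2^{\bln}\mid y_2^{\bln})$ with genies $\tilde{x}_k=\sqrt{P^{\max\{\alpha_{kk},\alpha_{\ell k}\}}}x_k+\tilde{z}_k$, combines with $R_k\le \hen(y_k^{\bln})-\hen(s_{\ell k}^{\bln})$, and only after summing the two $2R_k+R_\ell$ inequalities and dividing by $3$ does the shape in \eqref{eq:Gup45} appear. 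This is not a ``convex combination of single-user secrecy bounds'' and is structurally different from the starting inequality you wrote. (Separately, your ``Main obstacle'' paragraph has the direction reversed: a converse giving $2(1-\alpha)$ on $(\tfrac23,\tfrac34]$ would be \emph{tighter} than the truth $2(2\alpha-1)$ and would contradict achievability; the paper's sum-rate bound \eqref{eq:gaussianupbound} gives exactly $2(2\alpha-1)$ there and is the new bound that closes the gap left open in \cite{GTJ:15}.)

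\textbf{Achievability for $\tfrac34\le\alpha\le\tfrac32$.} Power-level alignment does not suffice here: at receiver~1 the three groups $v_{1,c}$, $v_{2,c}+u_1$, $u_2$ occupy interleaved levels whose separations are smaller than the constellation spans, so successive decoding (which does work in $(\tfrac23,\tfrac34]$ and $[\tfrac32,2]$, Lemmas~\ref{lm:rateerror2334} and~\ref{lm:rateerror322}) fails. The paper instead uses PAM constellations and \emph{real interference alignment}: one decodes the integer-linear combination $g_0q_0+A_1g_1q_1+A_2g_2q_2$ jointly, where the $g_i$ are products of channel phases, and recovers $(q_0,q_1,q_2)$ by rational independence of the $g_i$. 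The required minimum-distance lower bound (Lemmas~\ref{lm:distance3432} and~\ref{lm:distance132}, via \cite[Lemma~14]{NM:13}) holds only outside a Lebesgue-null set of channel realizations, and this is precisely why the theorem is stated ``for almost all channel coefficients''. Your description of jamming that ``sits below the noise floor'' does not capture this step and would not explain the almost-everywhere qualifier.
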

\begin{proof}
The converse follows from Lemma~\ref{lm:gupper} and Corollary~\ref{cor:symSGDoF} provided in Section~\ref{sec:converse}.
When $\alpha \in [0,  \frac{2}{3}]$, the optimal secure sum GDoF is achievable by the  proposed GWC-TIN  scheme (see \eqref{eq:GWC-TINGDoF}). When $\alpha \in (\frac{2}{3}, 2)$, the optimal secure sum GDoF is achievable by the  proposed   scheme with cooperative jamming.   Section~\ref{sec:CJGau}  provides  the cooperative jamming scheme for achieving the optimal secure sum GDoF when $\alpha \in (\frac{2}{3}, 2)$. 
\end{proof}

Fig.~\ref{fig:Scapacity}  depicts the sum GDoF  with  secrecy constraint (cf.~Theorem~\ref{thm:GDoF}), as well as the sum GDoF without  secrecy constraint (cf.~\cite{ETW:08}), for the two-user \emph{symmetric} Gaussian interference channel.  
Note that the secrecy constraint  incurs no  penalty in sum GDoF \emph{if and only if} $\alpha \in [0,  \frac{1}{2}]$. 
In the following we focus on the general two-user  Gaussian interference channel defined in Section~\ref{sec:systemGaussian}, and provide the optimality conditions  in which the GWC-TIN scheme is optimal in terms of  secure  sum capacity to within a constant gap.

\begin{figure}[t!]
\centering
\includegraphics[width=8.9cm]{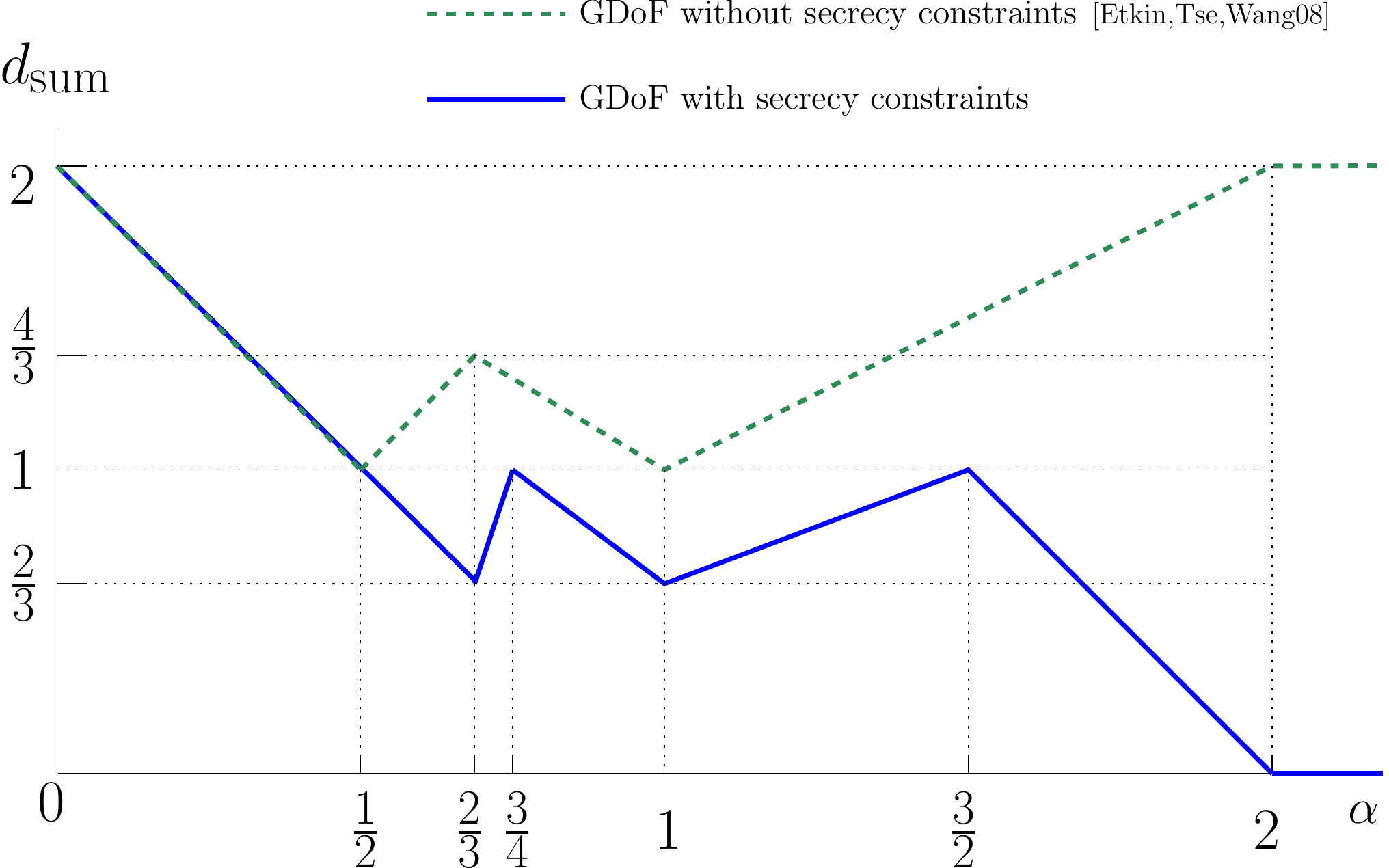}
\caption{GDoF vs. $\alpha$ for the two-user \emph{symmetric} Gaussian interference channel with and without secrecy constraints.  Note that a simple scheme  without using cooperative jamming achieves the optimal secure sum GDoF if and only if  $\alpha \in [0,  \frac{2}{3}]$.}
\label{fig:Scapacity}
\end{figure}

\begin{theorem}  \label{thm:GaussianNCJ}
For the two-user  Gaussian interference channel defined in Section~\ref{sec:systemGaussian}, where $\alpha_{k\ell} $ denotes the channel strength from transmitter~$\ell$ to receiver~$k$, $\forall k, \ell \in \{1,2\}$,  if the following conditions are satisfied,  
\begin{align}
    \alpha_{22} + (\alpha_{11 } - \alpha_{12})^+    & \geq   \alpha_{21}+  \alpha_{12},   \label{eq:capGaussian1}    \\
    \alpha_{11} + (\alpha_{22 } - \alpha_{21})^+     & \geq   \alpha_{21}+  \alpha_{12}       \label{eq:capGaussian2} 
\end{align}
then the simple scheme of using Gaussian wiretap codebook at each transmitter (without using cooperative jamming) and treating interference as noise at each intended receiver (that is, GWC-TIN scheme) achieves the optimal secure sum capacity $C_{\text{sum}}$ to within a constant gap of no larger than $11$ bits.   
More specifically, given the conditions of \eqref{eq:capGaussian1} and \eqref{eq:capGaussian2}, the optimal secure sum capacity $C_{\text{sum}}$ satisfies  
 \begin{align}
C^{lb}_{\text{sum}}   \leq  C_{\text{sum}}    \leq   C^{lb}_{\text{sum}}  +  11  \non 
  \end{align}
  where the lower bound $C^{lb}_{\text{sum}}$ is defined by
 \begin{align*}
  &C^{lb}_{\text{sum}}  \defeq \frac{1}{2} \log \bigl(  1+    \frac{|h_{11}|^2 P^{\alpha_{11} - \alpha_{21} }}{1+ |h_{12}|^2 }  \bigr)  -  \frac{1}{2} \log (1+ |h_{21}|^2)  +  \frac{1}{2} \log \bigl(  1+     \frac{|h_{22}|^2 P^{\alpha_{22} -  \alpha_{12} }}{ 1+ |h_{21}|^2 }\bigr)  -   \frac{1}{2} \log (1+ |h_{12}|^2) .
  \end{align*}  
\end{theorem}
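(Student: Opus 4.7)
Achievability is immediate: summing the GWC--TIN rates derived in (\ref{eq:Rk111})--(\ref{eq:Rk432}) produces $R_1+R_2=C^{lb}_{\text{sum}}$, so $C_{\text{sum}}\geq C^{lb}_{\text{sum}}$ holds unconditionally and in particular under (\ref{eq:capGaussian1})--(\ref{eq:capGaussian2}). The substantive work is therefore the matching converse $C_{\text{sum}}\leq C^{lb}_{\text{sum}}+11$ under the stated conditions.

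The plan for the converse is to invoke the general secure-sum outer bound of Lemma~\ref{lm:gupper}, developed in Section~\ref{sec:converse}, and specialize it to the Gaussian channel (\ref{eq:channelG}). Structurally, Lemma~\ref{lm:gupper} arises from combining Fano's inequality applied to (\ref{eq:Pedef}) with the secrecy constraints (\ref{eq:defsecrecy1})--(\ref{eq:defsecrecy2}), yielding
\begin{align*}
n(R_1+R_2)\leq I(w_1;y_1^n)-I(w_1;y_2^n)+I(w_2;y_2^n)-I(w_2;y_1^n)+O(n\epsilon),
\end{align*}
followed by the introduction of genie (side-information) signals at each receiver chosen so that, after invoking the Gaussian maximum-entropy property, the right-hand side collapses to a closed-form expression in $(\alpha_{k\ell},h_{k\ell},P)$. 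The objective is then to show that this closed-form expression exceeds $C^{lb}_{\text{sum}}$ by at most an additive $11$ bits, by rearranging its SINR-like terms into the same analytic shape as $C^{lb}_{\text{sum}}$.

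The role of (\ref{eq:capGaussian1})--(\ref{eq:capGaussian2}) is to pin this additive gap at a constant: these conditions force the cross-link exponents $\alpha_{12}+\alpha_{21}$ below the ``signal plus trim'' budget $\alpha_{kk}+(\alpha_{\ell\ell}-\alpha_{\ell k})^+$, which is precisely the threshold beyond which the cooperative-jamming/interference-alignment scheme of Section~\ref{sec:CJGau} strictly outperforms GWC--TIN in GDoF. Inside the conditions the converse and $C^{lb}_{\text{sum}}$ share the same $\log P$ slope and can only differ by an $O(1)$ term; outside them a genuine $\log P$ gap opens, which is why the result is restricted to (\ref{eq:capGaussian1})--(\ref{eq:capGaussian2}).

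The main obstacle I anticipate is the constant-bit accounting needed to pin the gap at $11$ bits. Each operation on the converse---splitting $\log(1+a+b)$ into $\log(1+a)+\log(1+b)+1$, absorbing the channel-phase factors $|h_{k\ell}|^2\in(1,4]$ into constants, reconciling the $(\cdot)^+$ operators in (\ref{eq:capGaussian1})--(\ref{eq:capGaussian2}) when rewriting the converse in the form of $C^{lb}_{\text{sum}}$, and handling the $\tfrac{1}{2}\log(1+\cdot)$ normalizations---contributes a few bits of slack, and these must be bookkept jointly to certify that the total never exceeds $11$ uniformly in $(\alpha_{k\ell},h_{k\ell},P)$. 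Once this accounting is verified, combining it with the GWC--TIN achievability yields the claimed constant-gap capacity result.
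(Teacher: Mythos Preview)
Your proposal is correct and follows essentially the same route as the paper: achievability via the GWC--TIN rates \eqref{eq:Rk111}--\eqref{eq:Rk432}, converse via the sum-rate bound \eqref{eq:gaussianupbound} of Lemma~\ref{lm:gupper} (derived exactly as you describe, from Fano plus secrecy then genie side information and Gaussian maximum entropy), and a final constant-gap comparison carried out in Appendix~\ref{sec:gap}. The conditions \eqref{eq:capGaussian1}--\eqref{eq:capGaussian2} enter precisely where you anticipate, to ensure the two $P$-exponents inside each $\log$ of \eqref{eq:gaussianupbound} are dominated by the $P^{\alpha_{kk}-\alpha_{\ell k}}$ term, after which the $|h_{k\ell}|\in(1,2]$ bounds and the elementary inequality $\frac{1+a_1 b}{1+a_2 b}\leq \frac{a_1}{a_2}$ for $a_1\geq 1\geq a_2>0$ finish the $11$-bit accounting.
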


\begin{proof}
As discussed in Section~\ref{sec:noCJGau}, the GWC-TIN scheme achieves the  secure sum capacity lower bound $C^{lb}_{\text{sum}}$  (see \eqref{eq:Rk111} and \eqref{eq:Rk432}).
To prove the optimality of GWC-TIN scheme, we provide a secure sum capacity upper bound  in Lemma~\ref{lm:gupper} (see \eqref{eq:gaussianupbound} in Section~\ref{sec:converse}). 
 The derived  upper bound  reveals that, if the conditions in \eqref{eq:capGaussian1} and \eqref{eq:capGaussian2} are satisfied,  then the achievable secure sum rate of GWC-TIN scheme  indeed approaches the secure sum capacity  to within a constant gap of no larger than $11$ bits (see Appendix~\ref{sec:gap} for the details on bounding the gap).
\end{proof}

\section{Cooperative Jamming Scheme \label{sec:CJGau} }

This section  provides  the cooperative jamming scheme for the two-user \emph{symmetric} Gaussian  interference channel defined in Section~\ref{sec:systemGaussian}, focusing on the regime of  $\frac{2}{3} < \alpha \leq 2$. Note that when $0 \leq  \alpha \leq \frac{2}{3}$, a scheme without cooperative jamming has been proposed  in Section~\ref{sec:noCJGau} to achieve the optimal secure sum capacity to within a constant gap (see Theorem~\ref{thm:GaussianNCJ}).  In this cooperative jamming scheme,  pulse amplitude modulation (PAM) and interference alignment will be used.

At first let us review the  PAM modulation, which will be used in our cooperative jamming scheme described later. 
Assume that a random variable $x$ is uniformly drawn from 
a PAM constellation set, defined as
\begin{align}
   \Omega (\xi,  Q)  \defeq   \{ \xi \cdot a :   \    a \in  \Zc  \cap [-Q,   Q]   \}      \label{eq:constellation}  
 \end{align}
for some $Q \in \Zc^+ $ and $\xi \in \Rc$.  The role of $\xi$ is to regularize  the  average power of $x$.  
Given that $x$ is uniformly drawn from $ \Omega (\xi,  Q)$, the average power of $x$ is 
\begin{align}
 \E |x|^2  =  \frac{2  \xi^2 }{ 2Q +1}  \sum_{i=1}^{Q} i^2  =  \frac{  \xi^2  Q(Q+1)}{3} \label{eq:xpower}  
 \end{align} 
which implies that  
\begin{align}
 \E |x|^2  \leq 1/3 <1,   \quad  \text{if} \quad  \xi \leq  \frac{  1}{ \sqrt{2}Q} .    \label{eq:xpowerbeta}  
\end{align} 
Let us consider the communication of $x$  over  a  channel model given as
  \begin{align}
   y=  \sqrt{P^{\alpha_1}}  h x + \sqrt{P^{\alpha_2}} g + z .     \label{eq:aqmy1}  
 \end{align}
 In the above channel model,   $x \in \Omega (\xi,  Q)$  is the input with a normalized power constraint  $\E |x|^2 \leq 1$. $z\sim \mathcal{N}(0, \sigma^2)$ is additive white Gaussian noise. $g  \in  \Sc_{g}$ is a discrete random variable such that \[|g | \leq  g_{\max}, \quad \forall g \in  \Sc_{g}\] for a given set $\Sc_{g} \subset \Rc$,  where $g_{\max}$ is a positive and finite constant  independent of $P$.    In this setting $\alpha_1$, $\alpha_2$, $\sigma$ and $h$ are  four given positive and finite parameters independent of $P$.  
Note that,  the  minimum distance of the constellation  for $\sqrt{P^{\alpha_1}}  h x$ is  
 \begin{align}
  d_{\text{min}} (\sqrt{P^{\alpha_1}} hx) =\sqrt{P^{\alpha_1}}h \cdot \xi.  \label{eq:distance1}  
 \end{align}
 For this setting, we have the following result regarding the probability of error for decoding $x$   from $y$.

 \begin{lemma}  \label{lm:AWGNic}
Consider the channel model in \eqref{eq:aqmy1} and consider the case of $\alpha_1 - \alpha_2 >0$. 
By setting  $Q$ and $\xi$ such that 
  \begin{align}
   Q =  \frac{P^{\frac{\bar{\alpha}}{2}} \cdot h \gamma }{2 g_{\max} },    \quad  \quad    \xi =  \gamma \cdot \frac{ 1}{Q},   \quad   \quad   \forall \bar{\alpha} \in (0, \alpha_1 -\alpha_2)  \label{eq:settingAWGNIC}                           
 \end{align}
where $\gamma > 0$ is a finite constant independent of $P$, then  
the probability of error for decoding a symbol $x$ from $y$ is \[ \text{Pr} (e) \to 0  \quad  \text{as} \quad   P\to \infty.\] 
\end{lemma}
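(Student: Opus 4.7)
The strategy is a nearest-neighbor (hard-decision) decoder for $x$: the receiver rounds $y$ to the closest element of the discrete set $\mathcal{A} \defeq \sqrt{P^{\alpha_1}}\, h \cdot \Omega(\xi, Q)$, then lifts the estimate back to a symbol $\hat{x} \in \Omega(\xi, Q)$. Because $\mathcal{A}$ is a one-dimensional PAM lattice of $2Q+1$ equally spaced points, a decoding error occurs only if the total disturbance $w \defeq \sqrt{P^{\alpha_2}} g + z$ pushes $y$ out of the Voronoi cell of $\sqrt{P^{\alpha_1}}\, h x$, i.e.\ only if $|w| \geq d_{\min}(\sqrt{P^{\alpha_1}}\, h x)/2$. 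The proof then reduces to showing that this half-minimum distance eventually dominates the interference magnitude, so that only a vanishing Gaussian tail remains.

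Substituting the prescribed choices of $Q$ and $\xi$ from \eqref{eq:settingAWGNIC} into \eqref{eq:distance1} yields
\[
d_{\min}(\sqrt{P^{\alpha_1}}\, h x) \;=\; \sqrt{P^{\alpha_1}}\, h \cdot \frac{\gamma}{Q} \;=\; 2\, g_{\max}\, \sqrt{P^{\alpha_1 - \bar{\alpha}}},
\]
so the half-distance is $g_{\max}\sqrt{P^{\alpha_1 - \bar{\alpha}}}$. Since $|g| \leq g_{\max}$ deterministically, $|\sqrt{P^{\alpha_2}} g| \leq g_{\max}\sqrt{P^{\alpha_2}}$, and the error event is contained in
\[
\Bigl\{\, |z| \;\geq\; g_{\max}\bigl(\sqrt{P^{\alpha_1-\bar{\alpha}}} - \sqrt{P^{\alpha_2}}\bigr)\,\Bigr\}.
\]
The hypothesis $\bar{\alpha} < \alpha_1 - \alpha_2$ forces $\alpha_1 - \bar{\alpha} > \alpha_2$, so this threshold diverges as a positive power of $P$, while $z \sim \mathcal{N}(0,\sigma^2)$ has variance independent of $P$. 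A standard Gaussian tail bound thus yields $\Pr(e) \to 0$ as $P \to \infty$, which is the desired conclusion.

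A brief check confirms feasibility of the input: plugging $\xi = \gamma/Q$ into \eqref{eq:xpower} gives $\E|x|^2 = \gamma^2(Q+1)/(3Q)$, which is uniformly bounded by a constant depending only on $\gamma$, so any sufficiently small $\gamma$ independent of $P$ satisfies the normalized power constraint $\E|x|^2 \leq 1$ for all $P$. The delicate point worth flagging, and the main obstacle in calibrating the scheme, is the role of the strict inequality in the hypothesis: the choice $Q \propto P^{\bar{\alpha}/2}$ is selected precisely so that $d_{\min}$ absorbs the $\sqrt{P^{\alpha_1}}$ channel gain into the constant $2 g_{\max}$, while leaving a residual margin of order $\sqrt{P^{\alpha_1-\bar{\alpha}}} - \sqrt{P^{\alpha_2}}$ to overpower the interference. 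Were $\bar{\alpha}$ allowed to reach $\alpha_1 - \alpha_2$, this margin would vanish, which is exactly why strict inequality is essential and why $\bar{\alpha}$ parametrizes a continuous family of admissible choices.
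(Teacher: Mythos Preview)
Your proof is correct and follows essentially the same route as the paper: nearest-neighbor decoding on the PAM lattice, bounding the error event by $\{|z| \geq d_{\min}/2 - g_{\max}\sqrt{P^{\alpha_2}}\}$, and then observing that the substitution of \eqref{eq:settingAWGNIC} makes this threshold a positive power of $P$ so the Gaussian tail vanishes. The only cosmetic difference is ordering: the paper first derives the generic bound $\text{Pr}(e)\le 2\,{\bf Q}\bigl((d_{\min}/2 - P^{\alpha_2/2}g_{\max})/\sigma\bigr)$ and then substitutes the parameters, whereas you substitute first and then invoke the tail bound; your extra power-constraint check is a welcome addition not made explicit in the paper's appendix.
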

\begin{proof}
See Appendix~\ref{sec:AWGNic}. 
\end{proof}

Now let us go back to the  two-user  \emph{symmetric}  Gaussian interference channel defined in Section~\ref{sec:systemGaussian}. For this setting, we will provide a cooperative jamming scheme by using PAM modulation and interference alignment.  For the proposed scheme, the details of codebook generation, signal mapping, secure rate analysis,  PAM constellation, and interference alignment are provided as follows.

\subsubsection{Codebook generation and signal mapping}
To build the  codebook, transmitter~$k$, $k=1,2$,  generates a total of $2^{\bln (R_k + R_k')}$  independent  sequences (codewords)  $v^{\bln}_k $, where all the elements of all the sequences are  independent and identically distributed  according to a distribution  that will be designed specifically in the scheme,
and  $R_k, R_k'$ will be  defined  later on.
The codebook $\mathcal{B}_{k}$ is defined as a set of the labeled codewords:
   \begin{align}
     \mathcal{B}_{k} \defeq \Bigl\{ & v^{\bln}_k (w_k,  w_k'):  \  w_k \in \{1,2,\cdots, 2^{\bln R_k}\},   \  w_k' \in \{1,2,\cdots, 2^{\bln R_k'}\}   \Bigr\},  \quad k=1,2.      \label{eq:code2341J}
     \end{align}
 To transmit the message $w_k$,  transmitter~$k$ at first selects a bin (sub-codebook)  $\mathcal{B}_{k}( w_k) $ that is defined as 
\[   \mathcal{B}_{k} (w_k)  \defeq \bigl\{ v^{\bln}_k (w_k,  w_k'): \  w_k' \in \{1,2,\cdots, 2^{\bln R_k'}\}   \bigr\},  \quad k=1,2  \]
and then \emph{randomly} chooses a codeword $v^{\bln}_k$ from the selected bin according to a uniform distribution.
Since this scheme uses \emph{cooperative jamming}, the chosen codeword $v^{\bln}_k$ will be mapped to the channel input using a stochastic function given as
 \begin{align}
  x_k (t) = h_{\ell\ell} \cdot v_k (t)  +    \sqrt{P^{ - \beta_{u_k}}}  \cdot   h_{k\ell}  \cdot   u_{k}(t)   \label{eq:xvkkk}      
   \end{align}
  for $\ell \neq k, \ k, \ell =1,2$, and  $t=1,2, \cdots, \bln$, where $\beta_{u_k}$ is a parameter that will be defined later, and
 $\{u_{k}(t)\}_{t}$ are i.i.d. random variables (jamming signals) \emph{uniformly} and \emph{independently} drawn from a constellation set  that will be designed specifically in the scheme.

\subsubsection{PAM constellation and interference alignment}
Specifically,   the codebook at transmitter $k$ is generated such that  each element takes the following form
 \begin{align}
  v_{k}  =     \sqrt{P^{ - \beta_{v_{k,c}}}}  \cdot  v_{k,c} +   \sqrt{P^{ - \beta_{v_{k,p}}}}    \cdot  v_{k,p}    \label{eq:xvk}  
 \end{align}
 and then the channel input in~\eqref{eq:xvkkk} becomes 
 \begin{align}
  x_k  =   \sqrt{P^{ - \beta_{v_{k,c}}}}  \cdot h_{\ell\ell} \cdot v_{k,c} +   \sqrt{P^{ - \beta_{v_{k,p}}}}    \cdot h_{\ell\ell} \cdot v_{k,p} +    \cdot \sqrt{P^{ - \beta_{u_k}}}  \cdot  h_{k\ell}  \cdot u_{k}  \label{eq:xvkkk1}  
 \end{align}
(removing the time index for simplicity), for $\ell \neq k$,  $k, \ell =1,2$,   
where $v_{k,c}$, $v_{k,p}$ and $u_{k}$ are \emph{independent} random variables   \emph{uniformly} drawn from their PAM constellation sets, 
 \begin{align}
   v_{k,c}      &  \in    \Omega ( \xi =  \gamma_{v_{k,c}} \cdot \frac{ 1}{Q} ,   \   Q =  P^{ \frac{ \lambda_{v_{k,c}} }{2}} )  \label{eq:constellationGsym1}   \\ 
   v_{k,p}      &  \in    \Omega ( \xi =  \gamma_{v_{k,p}} \cdot \frac{ 1}{Q} ,   \   Q = P^{ \frac{  \lambda_{v_{k,p}} }{2}} )    \label{eq:constellationGsym2}    \\
      u_{k}      &  \in    \Omega ( \xi =  \gamma_{u_{k}} \cdot \frac{ 1}{Q} ,      \quad  Q =  P^{ \frac{ \lambda_{u_{k}} }{2}} )  \label{eq:constellationGsym1u} 
 \end{align}
 respectively, for $k=1,2$, where  $\Omega (\xi,    Q )$ is defined in  \eqref{eq:constellation};  and  $\{\gamma_{v_{k,c}}, \gamma_{v_{k,p}}, \gamma_{u_k}\}$ are some finite constants independent of $P$, such that
  \begin{align}
  \gamma_{v_{1,c}} =  \gamma_{v_{2,c}} =\gamma_{u_1} = \gamma_{u_2} =  2\gamma_{v_{1,p}}=  2\gamma_{v_{2,p}} = 2\gamma  \in \bigl(0, \frac{1}{2\sqrt{2}}\bigr].  \label{eq:gammadef} 
 \end{align}
In our scheme,  $\{\beta_{v_{k,c}}, \beta_{v_{k,p}}, \beta_{u_k}, \lambda_{v_{k,c}},  \lambda_{v_{p,c}}, \lambda_{u_{k}}\}_{k=1,2}$ are given parameters designed  in Table~\ref{tab:para} for different cases of $\alpha$.  Note that for this \emph{symmetric} Gaussian interference channel, the parameters are designed symmetrically for the two transmitters, for example, $\beta_{v_{1,c}}= \beta_{v_{2,c}}$ and $\beta_{v_{1,p}}= \beta_{v_{2,p}}$.     
For the parameters designed as $\beta_{v_{1,p}} =\infty$ and $\lambda_{v_{1,p}}=0$, it implies that the random variable $v_{1,p}$ can be treated as an empty term  in the transmitted signal $x_1$. This implication holds for the other parameters.   
Based on our design on $x_k$ (see \eqref{eq:xvkkk1}-\eqref{eq:constellationGsym1u}) and the parameters, one can easily check that the power constraint  $\E |x_k|^2 \leq 1$ is satisfied (cf.~\eqref{eq:xpowerbeta}) for $k=1,2$.

Given the above design on the transmitted signals,  the  signals received at the receivers then take the following forms (without the time index)
\begin{align}
y_{1} &=    \sqrt{P^{ 1 - \beta_{v_{1,c}}}} h_{11}h_{22} v_{1,c}  +    \sqrt{P^{ 1 - \beta_{v_{1,p}}}} h_{11}h_{22} v_{1,p}    +     h_{12} h_{11}(   \sqrt{P^{ \alpha - \beta_{v_{2,c}}}} v_{2,c}  +   \sqrt{P^{ 1 - \beta_{u_1}}} u_1)    \non\\
&\quad +   \sqrt{P^{ \alpha - \beta_{u_{2}}}}  h_{12} h_{21} u_{2}  +    \sqrt{P^{ \alpha - \beta_{v_{2,p}}}}  h_{12} h_{11}  v_{2,p}   +  z_{1}    \label{eq:yvk1}  \\
y_{2} &=     \sqrt{P^{ 1 - \beta_{v_{2,c}}}} h_{22}h_{11} v_{2,c}    +    \sqrt{P^{ 1 - \beta_{v_{2,p}}}} h_{22}h_{11} v_{2,p}    +     h_{21} h_{22}(   \sqrt{P^{ \alpha - \beta_{v_{1,c}}}} v_{1,c}  +   \sqrt{P^{ 1 - \beta_{u_2}}} u_2)       \non\\
 &\quad +   \sqrt{P^{ \alpha - \beta_{u_{1}}}}  h_{21} h_{12} u_{1}  +   \sqrt{P^{ \alpha - \beta_{v_{1,p}}}}  h_{21} h_{22}  v_{1,p}   +  z_{2} .   \label{eq:yvk2}  
\end{align}
 With the signal design in \eqref{eq:xvkkk1}, the signal $v_{2,c}$ is aligned with the jamming signal $u_1$  at receiver~1, while the signal $v_{1,c}$ is aligned with the jamming signal $u_2$ at receiver~2.

\subsubsection{Secure rate analysis} Let us now define   $R_k$ and $R_k'$ as 
\begin{align}
R_k &\defeq   \Imu(v_k; y_k) -  \Imu ( v_k; y_{\ell} | v_{\ell} ) - \epsilon   \label{eq:Rk623J} \\  
R_k'  &\defeq  \Imu ( v_k; y_{\ell} | v_{\ell}) - \epsilon  \label{eq:Rk623bJ}  
\end{align}
for some $\epsilon >0$, $\ell \neq k, \  k,\ell   \in \{1,2\}$. From the  proof of \cite[Theorem~2]{XU:15}   (or \cite[Theorem~2]{LMSY:08})  it implies that, given the above codebook and signal mapping, the  rate pair $(R_1, R_2)$ defined in \eqref{eq:Rk623J} and \eqref{eq:Rk623bJ} is achievable  and  the messages $w_1$ and $w_2$ are secure from their eavesdroppers, that is,  
$\Imu(w_1; y_{2}^{\bln})  \leq  \bln \epsilon$ and $\Imu(w_2; y_{1}^{\bln})  \leq  \bln \epsilon$. 
In what follows we will analyze the secure rate performance of the proposed scheme, focusing on the regime of  $2/3 < \alpha \leq 2$.

\begin{table}
\caption{Parameter design for the symmetric  channel, for some $\epsilon >0$.}
\begin{center}
{\renewcommand{\arraystretch}{1.7}
\begin{tabular}{|c|c|c|c|c|}
  \hline
                     &   $\frac{2}{3} < \alpha \leq \frac{3}{4}$  &  $\frac{3}{4} \leq  \alpha \leq 1$  & $1 \leq \alpha \leq \frac{3}{2}$   & $\frac{3}{2} \leq \alpha \leq 2$  \\
   \hline
   $\beta_{v_{1,c}}, \ \beta_{v_{2,c}}$    		&   $0$     			&   $ 0$    &    $\alpha -1 $    &   $\alpha -1 $  \\
    \hline
   $\beta_{u_1}, \  \beta_{u_2}$     		     		&   $1- \alpha$      	&    $1- \alpha$   &   0    &    0  \\
    \hline
   $\beta_{v_{1,p}}, \ \beta_{v_{2,p}}$ 			&   $\alpha$    		&    $\alpha$    &   $\infty$    &    $\infty$  \\
    \hline
   $\lambda_{v_{1,c}}, \ \lambda_{v_{2,c}}$ 		&   $3\alpha -2 - \epsilon$ 	&  $\alpha/3 - \epsilon$     &   $\alpha/3 - \epsilon$     &  $2- \alpha - \epsilon$   \\
    \hline
   $\lambda_{u_1}, \  \lambda_{u_2}$ 					&    $3\alpha -2 - \epsilon$ &   $\alpha/3 - \epsilon$    &   $\alpha/3 - \epsilon$     &   $2- \alpha - \epsilon$  \\
  \hline
   $\lambda_{v_{1,p}}, \ \lambda_{v_{2,p}}$  &   $1 - \alpha - \epsilon$  	&  $1- \alpha - \epsilon$     &     $0 $  &    0  \\
    \hline
    \end{tabular}
}
\end{center}
\label{tab:para}
\end{table}

 \subsection{Rate analysis when $2/3 <  \alpha \leq  3/4$   \label{sec:CJscheme2334}}

 For the case with $2/3 <  \alpha \leq  3/4$, we design the parameters  such that
 \begin{align}
\beta_{v_{1,c}}&= \beta_{v_{2,c}} =0,        \quad\quad \quad    \quad\quad       \beta_{u_1}=  \beta_{u_2}= 1- \alpha,   \quad  \quad  \quad \quad\quad \beta_{v_{1,p}}=\beta_{v_{2,p}}= \alpha  \label{eq:para111} \\
 \lambda_{v_{1,c}} &= \lambda_{v_{2,c}}= 3\alpha -2 - \epsilon,\quad     \lambda_{u_1}=  \lambda_{u_2}= 3\alpha -2 - \epsilon,   \quad \quad  \quad  \lambda_{v_{1,p}}= \lambda_{v_{2,p}}  = 1 - \alpha - \epsilon  \label{eq:para222}
 \end{align}
 where  $\epsilon>0$ can be set arbitrarily small.
 In this case, the transmitted signal at transmitter~$k$ takes the following form
 \begin{align}
  x_k  =    h_{\ell\ell} \cdot v_{k,c} +   \sqrt{P^{ -\alpha}}    \cdot h_{\ell\ell} \cdot v_{k,p} +    \cdot \sqrt{P^{ -(1- \alpha)}}  \cdot  h_{k\ell}  \cdot u_{k}    \label{eq:xvkkk12334}
 \end{align}
  for $\ell \neq k$,  $k, \ell =1,2$.   
Then the received signals at the receivers take the following forms 
\begin{align}
y_{1} &=    \sqrt{P} h_{11}h_{22} v_{1,c}  +    \sqrt{P^{ 1 - \alpha}} h_{11}h_{22} v_{1,p}    +    \sqrt{P^{ \alpha}}  h_{12} h_{11}(   v_{2,c}  +  u_1)    +   \sqrt{P^{ 2\alpha - 1 }}  h_{12} h_{21} u_{2}  +    h_{12} h_{11}  v_{2,p}   +  z_{1}    \label{eq:yvk12334}  \\
y_{2} &=     \sqrt{P} h_{22}h_{11} v_{2,c}    +    \sqrt{P^{ 1 - \alpha}} h_{22}h_{11} v_{2,p}    +     \sqrt{P^{ \alpha}} h_{21} h_{22}(  v_{1,c}  +    u_2)       +  \sqrt{P^{ 2\alpha - 1 }}   h_{21} h_{12} u_{1}  +    h_{21} h_{22}  v_{1,p}   +  z_{2} .   \label{eq:yvk22334}  
\end{align}

For the proposed jamming scheme, the secure rate pair $(R_1, R_2)$ defined in \eqref{eq:Rk623J} and \eqref{eq:Rk623bJ} is achievable. By setting $\epsilon \to 0$, this secure rate pair is expressed as    
\begin{align}
R_1  & =    \Imu(v_1; y_1) - \Imu(v_1; y_2 | v_2 )      \label{eq:lboundit1} \\
R_2  & =    \Imu(v_2; y_2) - \Imu(v_2; y_1 | v_1 ) .    \label{eq:lboundit2}
\end{align}

Let us first focus on the lower bound of $R_1$ expressed  in \eqref{eq:lboundit1}. We will  begin with the lower bound of $ \Imu(v_1; y_1)$.
For this case, $v_1$ takes the form \[v_1 = v_{1,c} +   \sqrt{P^{ -\alpha}}   \cdot v_{1,p}\] (see \eqref{eq:xvk} and \eqref{eq:para111}).   
From $y_1$ expressed in \eqref{eq:yvk12334},  $\{v_{1,c}, v_{1,p}\}$  can be  estimated by using a successive decoding method, given the  design in \eqref{eq:constellationGsym1}-\eqref{eq:constellationGsym1u} and \eqref{eq:para111}-\eqref{eq:xvkkk12334}. 
The following lemma provides a result on the error probability for this estimation.  Later on we will use it to  derive the lower bound on $ \Imu(v_1; y_1)$.

 \begin{lemma}  \label{lm:rateerror2334}
When $2/3 < \alpha \leq 3/4$, given the signal design in \eqref{eq:constellationGsym1}-\eqref{eq:constellationGsym1u} and \eqref{eq:para111}-\eqref{eq:xvkkk12334},  the error probability of estimating  $\{v_{k,c}, v_{k,p} \}$ from $y_k$  is
 \begin{align}
 \text{Pr} [  \{ v_{k,c} \neq \hat{v}_{k,c} \} \cup  \{ v_{k,p} \neq \hat{v}_{k,p} \}  ]  \to 0         \quad \text {as}\quad  P\to \infty    \label{eq:error1c1p}
 \end{align}
for $k=1,2$, where  $ \hat{v}_{k,c}$ and  $\hat{v}_{k,p}  $ are the corresponding estimates for  $v_{k,c}$ and $v_{k,p}$, respectively, based on the observation $y_k$ expressed in \eqref{eq:yvk12334} and \eqref{eq:yvk22334}. 
 \end{lemma}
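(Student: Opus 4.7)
The plan is to establish \eqref{eq:error1c1p} via successive interference cancellation (SIC), treating $k=1$ only as receiver~$2$ is symmetric. From \eqref{eq:yvk12334}, each random summand has a distinct $P$-exponent, and in the regime $2/3 < \alpha \leq 3/4$ they satisfy the strict ordering
\[
1 \;>\; \alpha \;>\; 2\alpha-1 \;>\; 1-\alpha \;>\; 0,
\]
corresponding respectively to $v_{1,c}$, the aligned sum $v_{2,c}+u_{1}$, $u_{2}$, $v_{1,p}$, and the bottom layer $v_{2,p}+z_{1}$. I would decode the four upper layers in this top-down order, subtracting each decoded contribution before moving on. Note that recovering $v_{2,c}+u_{1}$ and $u_{2}$ is only an intermediate cancellation step --- the individual constituents of the aligned sum need not be identified at receiver~$1$, only the combined PAM point.

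At each stage I would invoke Lemma~\ref{lm:AWGNic} after rewriting the current observation in the form $\sqrt{P^{\alpha_{1}}}\,h\,x + \sqrt{P^{\alpha_{2}}}\,\tilde g + z_{1}$, with $x$ the desired layer at exponent $\alpha_{1}$, $\alpha_{2}$ the exponent of the strongest surviving residual, and $\tilde g$ the sum of all remaining lower-level contributions after pulling out the factor $\sqrt{P^{\alpha_{2}}}$. Because every surviving PAM variable (and the aligned combined constellation $v_{2,c}+u_{1}$) is supported on a set of $P$-independent diameter, $\tilde g$ is a discrete random variable bounded by a constant that does not depend on $P$, and the Gaussian $z_{1}$ is absorbed by a standard tail estimate. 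The parameter check $\lambda < \alpha_{1}-\alpha_{2}$ required by Lemma~\ref{lm:AWGNic} specializes, using Table~\ref{tab:para} together with \eqref{eq:para111}--\eqref{eq:para222}, to
\[
\begin{aligned}
\text{(i) decode } v_{1,c}: \ 3\alpha-2-\epsilon &< 1-\alpha, & \text{(ii) decode } v_{2,c}+u_{1}: \ 3\alpha-2-\epsilon &< 1-\alpha, \\
\text{(iii) decode } u_{2}: \ 3\alpha-2-\epsilon &< 3\alpha-2, & \text{(iv) decode } v_{1,p}: \ 1-\alpha-\epsilon &< 1-\alpha,
\end{aligned}
\]
each of which holds for $2/3 < \alpha \leq 3/4$ and any sufficiently small $\epsilon>0$. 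Lemma~\ref{lm:AWGNic} then gives a vanishing per-stage error probability, and a union bound over the four stages yields \eqref{eq:error1c1p}.

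The main, albeit mostly routine, obstacle is certifying that $\tilde g$ at each stage is genuinely $O(1)$ in $P$: after normalizing by $\sqrt{P^{\alpha_{2}}}$, every leftover layer contributes a factor of $\sqrt{P^{e-\alpha_{2}}}$ where $e$ is its own exponent, and one must verify each such exponent is nonpositive. This is enforced precisely by the strict ordering displayed above, which is exactly what pins down the boundary $\alpha > 2/3$ (so that $u_{2}$ dominates $v_{1,p}$) and $\alpha \leq 3/4$ (so that $v_{1,c}$'s received minimum distance exceeds the aligned-interference magnitude in stage~(i)); the $\epsilon$-shrinkage of the constellation exponents in Table~\ref{tab:para} is what supplies the strict margin at the boundary $\alpha = 3/4$.
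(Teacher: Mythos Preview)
Your proposal is correct and follows essentially the same route as the paper: top-down successive cancellation of $v_{1,c}$, then the aligned sum $v_{2,c}+u_1$, then $u_2$, then $v_{1,p}$, each step handled by Lemma~\ref{lm:AWGNic} after checking the relevant exponent gap, followed by a union bound. Your parameter checks (i)--(iv) match the paper's verification, and your observation that each PAM layer has $P$-independent diameter (since $\xi\cdot Q=\gamma$) is exactly what makes the residual $\tilde g$ bounded at every stage.
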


\begin{proof}
See Appendix~\ref{sec:rateerror2334}.
\end{proof}

Since   $v_{1,c} \in    \Omega (\xi   =  \gamma_{v_{1,c}} \cdot \frac{ 1}{Q},   \   Q =  P^{ \frac{ 3\alpha -2 - \epsilon}{2}} ) $ and $v_{1,p}  \in    \Omega (\xi   =\gamma_{v_{1,p}} \cdot \frac{ 1}{Q},   \   Q = P^{ \frac{ 1 - \alpha - \epsilon}{2}} ) $  are  \emph{uniformly} and \emph{independently}  drawn from their PAM constellation sets, the rates of $v_{1,c} $ and $v_{1,p}$ are given as 
  \begin{align}
  \Hen(v_{1,c}) &=  \log (2 \cdot P^{ \frac{ 3\alpha -2 - \epsilon}{2}} +1)   \label{eq:rateAWGNIC111}     \\
  \Hen (v_{1,p}) &=  \log (2 \cdot P^{ \frac{ 1 - \alpha - \epsilon}{2}} +1)  \label{eq:rateAWGNIC222}                              
 \end{align}
 where $\epsilon$ can be set as $\epsilon \to 0$. 
 Due to the signal design $v_{1} \defeq   v_{1,c} +   \sqrt{P^{ - \alpha}}    \cdot  v_{1,p} $ (see \eqref{eq:xvk}), it is true that  $\{ v_{1,p}, v_{1,c}\}$ can be reconstructed from $v_{1}$,  and vice versa. 
Then, 
  \begin{align}
 \Hen(v_{1}) =   \Hen(v_{1,c}, v_{1,p}) &= \Hen(v_{1,c})+  \Hen(v_{1,p})   \non\\
 &=    \log (2 \cdot P^{ \frac{ 3\alpha -2 - \epsilon}{2}} +1)  +  \log (2 \cdot P^{ \frac{ 1 - \alpha - \epsilon}{2}} +1)   \non\\
 &=   \frac{2 \alpha -1 - 2\epsilon}{2} \log P + o(\log P) . \label{eq:rateAWGNIC333}                                  
 \end{align}
At this point, $\Imu(v_1; y_1)$ can be lower bounded by
  \begin{align}
  \Imu(v_1; y_1)   &\geq   \Imu(v_1; \hat{v}_{1,c}, \hat{v}_{1,p})  \label{eq:rateAWGNIC1202}     \\
  &=   \Hen(v_1) -   \Hen(v_1  |  \hat{v}_{1,c}, \hat{v}_{1,p})    \non    \\
    &\geq    \Hen(v_1) -       \bigl( 1+    \text{Pr} [  \{ v_{1,c} \neq \hat{v}_{1,c} \} \cup  \{ v_{1,p} \neq \hat{v}_{1,p} \}  ] \cdot \Hen(v_{1}) \bigr)  \label{eq:rateAWGNIC1404}     \\
        & =     \bigl( 1 -   \text{Pr} [  \{ v_{1,c} \neq \hat{v}_{1,c} \} \cup  \{ v_{1,p} \neq \hat{v}_{1,p} \}  ] \bigr)   \cdot \Hen(v_{1})  - 1  \label{eq:rateAWGNIC1405}     
 \end{align}
where \eqref{eq:rateAWGNIC1202} results from the Markov chain $v_1 \to y_1 \to  \{\hat{v}_{1,c}, \hat{v}_{1,p} \}$;
\eqref{eq:rateAWGNIC1404} results from Fano's inequality.
By combining  \eqref{eq:rateAWGNIC333},  \eqref{eq:rateAWGNIC1405} and Lemma~\ref{lm:rateerror2334}, we have 
  \begin{align}
  \Imu(v_1; y_1)   &\geq    \frac{2 \alpha -1 - 2\epsilon}{2} \log P + o(\log P) .   \label{eq:rateAWGNIC17762}  
 \end{align}

For the term $\Imu(v_1; y_2 | v_2 )$ in \eqref{eq:lboundit1}, we can bound it as
  \begin{align}
  &\Imu(v_1; y_2 | v_2 )     \non\\
\leq &  \Imu(v_1; y_2, u_1,  v_{1,c} + u_2  | v_2 )     \label{eq:rateAWGNIC18374}   \\
=  &    \Imu(v_1;    v_{1,c} + u_2  | u_1,  v_2 )   + \Imu(v_1;    y_2  | u_1,  v_2, v_{1,c} + u_2 )   + \underbrace{ \Imu(v_1; u_1  | v_2 )}_{=0}    \non  \\
=   & \Imu(v_1;    v_{1,c} + u_2  )   +  \Imu(v_1;     h_{21} h_{22}  v_{1,p}   +  z_{2}   | u_1,  v_2 , v_{1,c} + u_2 )     \label{eq:rateAWGNIC18735}   \\
=   & \Hen( v_{1,c} + u_2  )  - \Hen(u_2  )     +  \hen( h_{21} h_{22}  v_{1,p}   +  z_{2}   | u_1,  v_2 , v_{1,c} + u_2 )   -  \hen(   z_{2} )    \label{eq:rateAWGNIC77364}    \\
\leq &  \underbrace{ \log (4  \cdot P^{ \frac{ 3\alpha -2 - \epsilon}{2}} +1)    -   \log (2  \cdot P^{ \frac{ 3\alpha -2 - \epsilon}{2}} +1)}_{\leq  \log 2 }    +   \underbrace{\hen( h_{21} h_{22}  v_{1,p}   +  z_{2} )}_{\leq  \frac{1}{2} \log ( 2 \pi e ( | h_{21}|^2 \cdot | h_{22}|^2\cdot \E |v_{1,p}|^2     +  \E|  z_{2}|^2 )) }   -  \frac{1}{2}\log (2 \pi e)    \label{eq:rateAWGNIC1955}   \\
\leq &  1   +  \frac{1}{2}\log (34\pi e )       -  \frac{1}{2} \log (2\pi e)    \label{eq:rateAWGNIC8356}   \\
= & \log (2\sqrt{17})    \label{eq:rateAWGNIC9562}  
 \end{align}
where \eqref{eq:rateAWGNIC18374}    uses the fact that adding information does not reduce the mutual information;
 \eqref{eq:rateAWGNIC18735} follows from the fact that $v_1, v_2, u_1, u_2 $ are mutually independent;
\eqref{eq:rateAWGNIC77364}   uses the fact that $\{ v_{1,p}, v_{1,c}\}$ can be reconstructed from $v_{1}$;
\eqref{eq:rateAWGNIC1955} holds true because $ \hen(   z_{2} )  =  \frac{1}{2}\log (2\pi e)$,  $\Hen(u_2  ) =   \log (2  \cdot P^{ \frac{ 3\alpha -2 - \epsilon}{2}} +1) $,  and $\Hen( v_{1,c} + u_2  ) \leq  \log (4  \cdot P^{ \frac{ 3\alpha -2 - \epsilon}{2}} +1)$;  note that $u_2$  is uniformly drawn from $ \Omega (\xi   = \gamma_{u_{2}} \cdot \frac{ 1}{Q},   \   Q =  P^{ \frac{ 3\alpha -2 - \epsilon}{2}} )$,  and  $v_{1,c} + u_2 \in  2 \cdot \Omega (\xi   = \gamma_{u_{2}} \cdot \frac{ 1}{Q},   \   Q =  P^{ \frac{ 3\alpha -2 - \epsilon}{2}} ) $, where $2\cdot \Omega (\xi,  Q)  \defeq   \{ \xi \cdot a :   \    a \in  \Zc  \cap [-2Q,   2Q]   \}$;
\eqref{eq:rateAWGNIC8356} follows from the identity that $\log (4a +1) - \log (2 a +1) \leq  \log 2 $ for any $a \geq 0$ and the fact that Gaussian input maximizes the differential entropy; note that $\E|   h_{21} h_{22}  v_{1,p}   +  z_{2}   |^2 =  | h_{21}|^2 \cdot | h_{22}|^2\cdot \E |v_{1,p}|^2    +  \E|  z_{2}|^2  \leq 16  + 1$  (see \eqref{eq:xpower} and \eqref{eq:xpowerbeta}), where $v_{1,p}$ is uniformly drawn from $\Omega (\xi   = \gamma_{v_{1,p}} \cdot \frac{ 1}{Q},    \   Q = P^{ \frac{ 1 - \alpha - \epsilon}{2}} )$.

Finally,  by incorporating \eqref{eq:rateAWGNIC17762}  and \eqref{eq:rateAWGNIC9562} into \eqref{eq:lboundit1}, the secure rate $R_1$ is lower bounded by 
\begin{align}
R_1  & =    \Imu(v_1; y_1) - \Imu(v_1; y_2 | v_2 )      \non \\
& \geq   \frac{2 \alpha -1 - 2\epsilon}{2} \log P + o(\log P).  \label{eq:lbounditfinal1}
\end{align}
Due to the symmetry, by interchanging the roles of users, it gives the lower bound on the secure rate $R_2$: 
\begin{align}
R_2   & =    \Imu(v_2; y_2) - \Imu(v_2; y_1 | v_1 )  \non\\
& \geq     \frac{2 \alpha -1 - 2\epsilon}{2} \log P + o(\log P).  \label{eq:lbounditfinal2}
\end{align}
 By setting $\epsilon \to 0$, then the secure GDoF pair $(d_1 =  2 \alpha -1,  d_2 =  2 \alpha -1 )$ is  achievable by the proposed cooperative jamming scheme when  $2/3 <  \alpha \leq  3/4$.

 \subsection{Rate analysis when $3/2 \leq  \alpha \leq  2$   \label{sec:CJscheme322}}

 For the case with $3/2 \leq   \alpha \leq 2$, we design the parameters  such that
 \begin{align}
\beta_{v_{1,c}}&= \beta_{v_{2,c}} =\alpha -1,        \quad\quad \quad    \quad\quad       \beta_{u_1}=  \beta_{u_2}= 0,   \quad  \quad  \quad \quad\quad \beta_{v_{1,p}}=\beta_{v_{2,p}}= \infty  \label{eq:para333} \\
 \lambda_{v_{1,c}} &= \lambda_{v_{2,c}}= 2 - \alpha - \epsilon,\quad  \quad    \lambda_{u_1}=  \lambda_{u_2}= 2 - \alpha  - \epsilon,   \quad \quad  \quad  \lambda_{v_{1,p}}= \lambda_{v_{2,p}}  = 0  \label{eq:para444}
 \end{align}
 where  $\epsilon>0$ can be set arbitrarily small.
 In this case,  $\beta_{v_{k,p}}  =\infty$ and $\lambda_{v_{k,p}}=0$, it implies that the random variable $v_{k,p}$ can be treated as an empty term  in the transmitted signal $x_k$ for $k=1,2$.  Then,  the transmitted signal at transmitter~$k$ takes the following form
 \begin{align}
  x_k  =   \sqrt{P^{ -(\alpha -1)}}  h_{\ell\ell} \cdot v_{k,c}    +     h_{k\ell}  \cdot u_{k}    \label{eq:xvkkk322}
 \end{align}
  for $\ell \neq k$,  $k, \ell =1,2$.   
Then the received signals at the receivers take the following forms 
\begin{align}
y_{1} &=    \sqrt{P^{ 2- \alpha}} h_{11}h_{22} v_{1,c}   +    \sqrt{P}  h_{12} h_{11}(   v_{2,c}  +  u_1)    +   \sqrt{P^{ \alpha }}  h_{12} h_{21} u_{2}   +  z_{1}    \label{eq:yvk1322}  \\
y_{2} &=      \sqrt{P^{ 2- \alpha}} h_{22}h_{11} v_{2,c}      +     \sqrt{P} h_{21} h_{22}(  v_{1,c}  +    u_2)       +  \sqrt{P^{ \alpha  }}   h_{21} h_{12} u_{1}    +  z_{2} .   \label{eq:yvk2322}  
\end{align}

From \eqref{eq:Rk623J} and \eqref{eq:Rk623bJ},  the proposed scheme achieves  the following  secure rates: $R_1  =    \Imu(v_1; y_1) - \Imu(v_1; y_2 | v_2 ) $ and $R_2  =    \Imu(v_2; y_2) - \Imu(v_2; y_1 | v_1 ) $.
For this case, $v_k$ takes the form \[v_k =  \sqrt{P^{ -(\alpha -1)}}  v_{k,c} , \quad k=1,2 \] (see \eqref{eq:xvk} and \eqref{eq:para333}).   
From $y_k$ expressed in \eqref{eq:yvk1322} and \eqref{eq:yvk2322},  $v_{k,c}$  can be  estimated by using a successive decoding method, $k=1,2$. 
The following lemma provides a result on the error probability for this estimation.

 \begin{lemma}  \label{lm:rateerror322}
When $3/2 \leq \alpha \leq 2$, given the signal design in \eqref{eq:constellationGsym1}-\eqref{eq:constellationGsym1u} and \eqref{eq:para333}-\eqref{eq:xvkkk322},  the error probability of  estimating  $v_{k,c}$ from $y_k$  is
 \begin{align}
 \text{Pr} [ v_{k,c} \neq \hat{v}_{k,c}  ]  \to 0         \quad \text {as}\quad  P\to \infty    \label{eq:error1c1p322}
 \end{align}
for $k=1,2$, where  $ \hat{v}_{k,c}$  is  the corresponding estimate for  $v_{k,c}$  based on the observation $y_k$ expressed in \eqref{eq:yvk1322} and \eqref{eq:yvk2322}. 
 \end{lemma}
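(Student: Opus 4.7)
The plan is to decode $v_{1,c}$ from $y_1$ via a three-stage successive decoding procedure driven by Lemma~\ref{lm:AWGNic}, and to obtain the $k=2$ case by symmetry. From \eqref{eq:yvk1322}, the observation $y_1$ decomposes as a sum of three bounded discrete PAM contributions plus Gaussian noise, located at power exponents $\alpha$ (the jammer $u_2$), $1$ (the aligned pair $v_{2,c}+u_1$), and $2-\alpha$ (the target $v_{1,c}$). For $\alpha\in[3/2,2]$ these exponents are ordered $\alpha\ge 1\ge 2-\alpha$, with a common gap of $\alpha-1\ge 1/2$, so the three levels are resolvable from top to bottom.

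For stage one, I would invoke Lemma~\ref{lm:AWGNic} with $x=u_2$ and $\alpha_1=\alpha$, absorbing the two lower-level terms into a single bounded discrete interferer $g$ of exponent $\alpha_2=1$; because each of $v_{1,c},v_{2,c},u_1$ lives in a PAM constellation whose extremal values are capped by the constants fixed in \eqref{eq:gammadef}, the quantity $g$ satisfies $|g|\le g_{\max}$ for a finite constant independent of $P$. The constellation size is $Q=P^{(2-\alpha-\epsilon)/2}$, so the hypothesis $\bar\alpha<\alpha_1-\alpha_2$ of Lemma~\ref{lm:AWGNic} becomes $2-\alpha-\epsilon<\alpha-1$, equivalently $\alpha>(3-\epsilon)/2$, which holds throughout $\alpha\ge 3/2$ once $\epsilon>0$ is fixed. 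Lemma~\ref{lm:AWGNic} then yields $\text{Pr}[u_2\ne\hat{u}_2]\to 0$.

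Conditioning on correct decoding of $u_2$, I would subtract $\sqrt{P^\alpha}h_{12}h_{21}\hat{u}_2$ from $y_1$ and repeat the argument to handle stage two: decode the aligned pair $v_{2,c}+u_1$. Because the scheme sets $\gamma_{v_{2,c}}=\gamma_{u_1}$ in \eqref{eq:gammadef}, this sum lies in the single PAM constellation $2\Omega(\xi,Q)$ at exponent $1$, and Lemma~\ref{lm:AWGNic} applies with $\alpha_1=1$ and $\alpha_2=2-\alpha$, where $v_{1,c}$ is the only remaining discrete bounded interference. The required inequality $2-\alpha-\epsilon<\alpha-1$ is identical to stage one and again holds. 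For stage three, subtracting $\sqrt{P}h_{12}h_{11}(\hat{v}_{2,c}+\hat{u}_1)$ leaves the pure PAM-in-AWGN channel $\sqrt{P^{2-\alpha}}h_{11}h_{22}v_{1,c}+z_1$, and Lemma~\ref{lm:AWGNic} with $\alpha_1=2-\alpha$ and trivial interference ($g\equiv 0$) gives $\text{Pr}[v_{1,c}\ne\hat{v}_{1,c}\mid \text{prior stages correct}]\to 0$ whenever $2-\alpha-\epsilon<2-\alpha$, which is automatic; at the endpoint $\alpha=2$ the constellation for $v_{1,c}$ collapses to $\{0\}$ and the claim is trivial.

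A union bound over the three stages then yields \eqref{eq:error1c1p322} for $k=1$, and the same argument verbatim with the roles of transmitters~1 and~2 interchanged handles $k=2$. I expect the main bookkeeping hurdle to be verifying at each stage that the residual high-level interference, after rescaling by $\sqrt{P^{\alpha_2}}$, is uniformly bounded by a constant independent of $P$---this is precisely where the explicit cap $\gamma\le 1/(2\sqrt{2})$ from \eqref{eq:gammadef} and the fact that the channel phases satisfy $h_{k\ell}\in(1,2]$ come in. Everything else reduces to straightforward successive-decoding arithmetic driven by Lemma~\ref{lm:AWGNic}.
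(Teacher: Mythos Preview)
Your proposal is correct and matches the paper's own proof essentially step for step: the paper also proceeds by three-stage successive decoding of $u_2$, then $v_{2,c}+u_1$, then $v_{1,c}$, invoking Lemma~\ref{lm:AWGNic} at each stage with the residual lower-level terms absorbed into a bounded discrete interferer (the paper records the explicit bounds $|\tilde g|\le 3\sqrt{2}$ and $|h_{11}h_{22}v_{1,c}|\le\sqrt{2}$), and finishes with a union bound and symmetry.
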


\begin{proof}
See Appendix~\ref{sec:rateerror322}.
\end{proof}

We will proceed with the lower bound on the secure rate $R_1$. 
In this case, given  that $v_{1,c}$  is  \emph{uniformly}  drawn from  $\Omega ( \xi =  \gamma_{v_{1,c}} \cdot \frac{ 1}{Q} ,   \   Q =  P^{ \frac{2 - \alpha  - \epsilon }{2}} )$ and that $v_1 =  \sqrt{P^{ -(\alpha -1)}}  v_{1,c}$,  the rate of $v_{1} $ is  
  \begin{align}
 \Hen(v_{1}) =  \Hen(v_{1,c}) &=  \log (2 \cdot P^{ \frac{2 - \alpha  - \epsilon }{2}} +1)   \label{eq:rateAWGNIC1c322}     
 \end{align}
 where $\epsilon$ can be set as $\epsilon \to 0$. 
Then, $\Imu(v_1; y_1)$ can be lower bounded by
  \begin{align}
  \Imu(v_1; y_1)      &=   \Hen(v_1) -   \Hen(v_1  |  y_1)    \non    \\
    &\geq    \Hen(v_1) -       \bigl( 1+    \text{Pr} [  v_{1,c} \neq \hat{v}_{1,c}  \}  ] \cdot \Hen(v_{1}) \bigr)  \label{eq:rateAWGNIC1404322}     \\
        & =     \bigl( 1 -   \text{Pr} [   v_{1,c} \neq \hat{v}_{1,c}  ] \bigr)   \cdot \Hen(v_{1})  - 1  \label{eq:rateAWGNIC1405322}       \\
        & =  \frac{ 2 - \alpha  - \epsilon}{2} \log P + o(\log P)   \label{eq:rateAWGNIC17762322}  
 \end{align}
where
\eqref{eq:rateAWGNIC1404322} results from Fano's inequality;
and \eqref{eq:rateAWGNIC17762322}   follows from  \eqref{eq:rateAWGNIC1c322} and Lemma~\ref{lm:rateerror322}.
On the other hand, by following the steps \eqref{eq:rateAWGNIC18374}-\eqref{eq:rateAWGNIC9562}, the term $\Imu(v_1; y_2 | v_2 )$ can be bounded as 
  \begin{align}
&\Imu(v_1; y_2 | v_2 )     \non\\
\leq &  \Imu(v_1; y_2, u_1,  v_{1,c} + u_2  | v_2 )     \non \\  
=  &    \Imu(v_1;    v_{1,c} + u_2  | u_1,  v_2 )   + \Imu(v_1;    y_2  | u_1,  v_2, v_{1,c} + u_2 )   + \Imu(v_1; u_1  | v_2 )    \non  \\
=   & \Imu(v_1;    v_{1,c} + u_2  )   +  \underbrace{ \Imu(v_1;      z_{2}   | u_1,  v_2 , v_{1,c} + u_2 ) }_{=0}    \label{eq:rateAWGNIC18735322}   \\
=   & \Hen( v_{1,c} + u_2  )  - \Hen(u_2  )     \label{eq:rateAWGNIC77364322}    \\
\leq &  \log (4  \cdot P^{ \frac{ 2 - \alpha - \epsilon}{2}} +1)    -   \log (2  \cdot P^{ \frac{ 2 - \alpha - \epsilon}{2}} +1)  \label{eq:rateAWGNIC1955322}   \\
\leq &  1      \label{eq:rateAWGNIC9562322}  
 \end{align}
where   \eqref{eq:rateAWGNIC18735322} and \eqref{eq:rateAWGNIC77364322} follow from the fact that $v_1, v_2, u_1, u_2 $ and $z_2$ are mutually independent;
 note that $v_1 =  \sqrt{P^{ -(\alpha -1)}}  v_{1,c}$ for this case;
 \eqref{eq:rateAWGNIC1955322} results from the facts that  $\Hen(u_2  ) =   \log (2  \cdot P^{ \frac{ 2 - \alpha - \epsilon}{2}} +1) $,  and  that $\Hen( v_{1,c} + u_2  ) \leq  \log (4  \cdot P^{ \frac{ 2 - \alpha - \epsilon}{2}} +1)$.

Finally,  by combining the results in \eqref{eq:rateAWGNIC17762322}  and \eqref{eq:rateAWGNIC9562322}, the secure rate $R_1$ is lower bounded by 
\begin{align}
R_1  & =    \Imu(v_1; y_1) - \Imu(v_1; y_2 | v_2 )      \non \\
& \geq  \frac{ 2 - \alpha  - \epsilon}{2} \log P + o(\log P).  \label{eq:lbounditfinal1322}
\end{align}
Due to the symmetry, by interchanging the roles of users, it gives the lower bound on the secure rate $R_2$: 
\begin{align}
R_2   & =    \Imu(v_2; y_2) - \Imu(v_2; y_1 | v_1 )  \non\\
& \geq     \frac{ 2 - \alpha  - \epsilon}{2} \log P  + o(\log P).  \label{eq:lbounditfinal2322}
\end{align}
By setting $\epsilon \to 0$, then the secure GDoF pair $(d_1 =  2 - \alpha ,  d_2 = 2 - \alpha  )$ is  achievable by the proposed cooperative jamming scheme when  $3/2 \leq  \alpha \leq 2$.

\subsection{Rate analysis when $3/4 \leq  \alpha \leq  1$   \label{sec:CJscheme341}}
 
The rate analysis for this case with $3/4 \leq  \alpha \leq  1$, as well as for the case with $1 \leq  \alpha \leq  3/2$ described in the next subsection, is different from that for the previous two cases with $2/3 <  \alpha \leq  3/4$ and $3/2 \leq  \alpha \leq 2$.  In the previous two cases, the information signal $v_k$ can be  estimated from $y_k$  by using a successive decoding method and the error probability of this estimation vanishes as $P \to \infty$  (see Lemma~\ref{lm:rateerror2334} and Lemma~\ref{lm:rateerror322}),  which can be used to bound the secure rates $R_k$ for $k=1,2$ (see \eqref{eq:rateAWGNIC1202}-\eqref{eq:lbounditfinal2} and \eqref{eq:rateAWGNIC1404322}- \eqref{eq:lbounditfinal2322}). 
However, in this case, the previous successive decoding method cannot be used in the rate analysis. Instead, we will use the approaches of noise removal and signal separation.

For the case with $3/4 \leq  \alpha \leq  1$, we design the parameters  such that
 \begin{align}
\beta_{v_{1,c}}&= \beta_{v_{2,c}} =0,        \quad\quad \quad    \quad\quad       \beta_{u_1}=  \beta_{u_2}= 1- \alpha,   \quad  \quad  \quad \quad\quad \beta_{v_{1,p}}=\beta_{v_{2,p}}= \alpha  \label{eq:para34111} \\
 \lambda_{v_{1,c}} &= \lambda_{v_{2,c}}= \alpha/3 - \epsilon,\quad   \quad  \  \lambda_{u_1}=  \lambda_{u_2}= \alpha/3 - \epsilon,   \quad \quad  \quad  \lambda_{v_{1,p}}= \lambda_{v_{2,p}}  = 1 - \alpha - \epsilon  \label{eq:para34122}
 \end{align}
 where  $\epsilon>0$ can be set arbitrarily small.
 In this case, the transmitted signal at transmitter~$k$ takes the following form
 \begin{align}
  x_k  =    h_{\ell\ell} \cdot v_{k,c} +   \sqrt{P^{ -\alpha}}    \cdot h_{\ell\ell} \cdot v_{k,p} +     \sqrt{P^{ -(1- \alpha)}}  \cdot  h_{k\ell}  \cdot u_{k}    \label{eq:xvkkk1341}
 \end{align}
  for $\ell \neq k$,  $k, \ell =1,2$.   
Then the received signals at the receivers become
\begin{align}
y_{1} &=    \sqrt{P} h_{11}h_{22} v_{1,c}  +    \sqrt{P^{ 1 - \alpha}} h_{11}h_{22} v_{1,p}    +    \sqrt{P^{ \alpha}}  h_{12} h_{11}(   v_{2,c}  +  u_1)    +   \sqrt{P^{ 2\alpha - 1 }}  h_{12} h_{21} u_{2}  +    h_{12} h_{11}  v_{2,p}   +  z_{1}    \label{eq:yvk1341}  \\
y_{2} &=     \sqrt{P} h_{22}h_{11} v_{2,c}    +    \sqrt{P^{ 1 - \alpha}} h_{22}h_{11} v_{2,p}    +     \sqrt{P^{ \alpha}} h_{21} h_{22}(  v_{1,c}  +    u_2)       +  \sqrt{P^{ 2\alpha - 1 }}   h_{21} h_{12} u_{1}  +    h_{21} h_{22}  v_{1,p}   +  z_{2} .   \label{eq:yvk2341}  
\end{align}

From \eqref{eq:Rk623J} and \eqref{eq:Rk623bJ},  the proposed scheme achieves  the following  secure rates: $R_1  =    \Imu(v_1; y_1) - \Imu(v_1; y_2 | v_2 ) $ and $R_2  =    \Imu(v_2; y_2) - \Imu(v_2; y_1 | v_1 ) $.
Let us first focus on the lower bound of $R_1$. 
For this case, $v_1$ takes the form $v_1 = v_{1,c} +   \sqrt{P^{ -\alpha}}   \cdot v_{1,p}$.   
From $y_1$ expressed in \eqref{eq:yvk1341},  we will show that $\{v_{1,c}, v_{1,p}\}$  can be  estimated with vanishing error probability when $P$ is large, given the  design in \eqref{eq:constellationGsym1}-\eqref{eq:constellationGsym1u} and \eqref{eq:para34111}-\eqref{eq:xvkkk1341}. 
The following lemma provides a result on the error probability for this estimation.

 \begin{lemma}  \label{lm:rateerror341}
 When $3/4 \leq \alpha \leq 1$, given the signal design in \eqref{eq:constellationGsym1}-\eqref{eq:constellationGsym1u} and \eqref{eq:para34111}-\eqref{eq:xvkkk1341}, then for almost all channel coefficients  $\{h_{k\ell}\} \in (1, 2]^{2\times 2}$,  the error probability of estimating  $\{v_{k,c}, v_{k,p} \}$ from $y_k$  is
 \begin{align}
 \text{Pr} [  \{ v_{k,c} \neq \hat{v}_{k,c} \} \cup  \{ v_{k,p} \neq \hat{v}_{k,p} \}  ]  \to 0         \quad \text {as}\quad  P\to \infty    \label{eq:error1c1p341}
 \end{align}
for $k=1,2$, where  $ \hat{v}_{k,c}$ and  $\hat{v}_{k,p}  $ are the corresponding estimates for  $v_{k,c}$ and $v_{k,p}$, respectively, based on the observation $y_k$ expressed in \eqref{eq:yvk1341} and \eqref{eq:yvk2341}. 
 \end{lemma}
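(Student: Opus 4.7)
\emph{Plan for Lemma~\ref{lm:rateerror341}.} My strategy is to reduce the joint decoding of $(v_{1,c},v_{1,p})$ from $y_1$ to a minimum-distance computation on a combined PAM constellation, and then to invoke a Khintchine--Groshev-type Diophantine bound to control that minimum distance. Unlike the regime $2/3<\alpha\leq 3/4$ handled in Lemma~\ref{lm:rateerror2334}, successive cancellation is no longer sufficient: once $\alpha>3/4$, the amplitude $\Theta(P^{\alpha/2})$ of the aligned level-$\alpha$ term $\sqrt{P^{\alpha}}h_{12}h_{11}(v_{2,c}+u_{1})$ already exceeds the inter-point spacing $\Theta(P^{1/2-\alpha/6})$ of the top codeword $\sqrt{P}\,h_{11}h_{22}v_{1,c}$, so treating any layer as noise breaks down. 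The proof must therefore commit to joint ML decoding of all surviving integer layers simultaneously.

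First I would substitute the PAM representations $v_{k,c}=\xi_{v_{k,c}}a_{v_{k,c}}$, $u_{k}=\xi_{u_k}a_{u_k}$, $v_{k,p}=\xi_{v_{k,p}}a_{v_{k,p}}$ into \eqref{eq:yvk1341}. The alignment identity $\xi_{v_{2,c}}=\xi_{u_1}$ built into Table~\ref{tab:para} collapses the pair $v_{2,c}+u_1$ into a single integer variable $b_{\mathrm{align}}\in\ints$ with $|b_{\mathrm{align}}|\leq 2P^{(\alpha/3-\epsilon)/2}$. The cross-term $h_{12}h_{11}v_{2,p}$ has bounded variance (since $|h_{k\ell}|\leq 2$ and $\E|v_{2,p}|^2\leq 1/3$) and is lumped with $z_1$ into an effective noise $\tilde{z}_1$ of $P$-independent variance $C_0$. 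After factoring out $\gamma P^{\epsilon/2}$ the remainder is
\begin{equation*}
y_1-\tilde{z}_1 \;=\; \gamma P^{\epsilon/2}\bigl[h_{11}h_{22}P^{(3-\alpha)/6}a_{v_{1,c}} + h_{11}h_{22}\,a_{v_{1,p}} + h_{12}h_{11}P^{\alpha/3}b_{\mathrm{align}} + h_{12}h_{21}P^{(5\alpha-3)/6}a_{u_2}\bigr],
\end{equation*}
with four bounded-integer unknowns lying in ranges of size $O(P^{\alpha/6})$ or $O(P^{(1-\alpha)/2})$, giving a total of $O(P^{1/2})$ combined constellation points, and with $(v_{1,c},v_{1,p})$ a deterministic function of this tuple.

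The heart of the proof is a lower bound on the minimum absolute value of the bracketed linear form over nonzero integer difference vectors. Writing $c_1=h_{11}h_{22}$, $c_2=h_{12}h_{11}$, $c_3=h_{12}h_{21}$ and grouping the two terms in $c_1$, I would appeal to a real-interference-alignment style Khintchine--Groshev / Kleinbock--Margulis argument on the analytic three-parameter family parameterised by $(c_1,c_2,c_3)$ to conclude that, for Lebesgue-almost every $\{h_{k\ell}\}\in(1,2]^{2\times 2}$ and any fixed $\epsilon'>0$, the bracketed form is at least $\kappa(h)\,P^{-\epsilon'}$ in absolute value, uniformly in $P$. The actual minimum distance in the received constellation is then at least $\gamma\kappa\,P^{\epsilon/2-\epsilon'}$. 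Choosing $\epsilon>2\epsilon'$ in Table~\ref{tab:para} makes this distance grow polynomially in $P$; a standard union bound of Gaussian $Q$-functions over the $O(P^{1/2})$ constellation points then yields $\text{Pr}[\cdot] \leq O(P^{1/2})\exp\bigl(-\Omega(P^{\epsilon-2\epsilon'})/(8C_0)\bigr)\to 0$. Since $(v_{1,c},v_{1,p})$ is determined by the decoded tuple, \eqref{eq:error1c1p341} for $k=1$ follows, and the $k=2$ case is identical by the symmetry of the scheme.

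The single non-routine step is this Diophantine bound. The effective four-tuple of real coefficients does not range freely over $\reals^4$ but over the three-dimensional analytic submanifold parameterised by $(c_1,c_2,c_3)$ together with the fixed $P$-exponents $\{(3-\alpha)/6,\,0,\,\alpha/3,\,(5\alpha-3)/6\}$, so the vanilla Khintchine--Groshev theorem on $\reals^4$ cannot be quoted off the shelf. The argument must instead use the Kleinbock--Margulis extension to non-degenerate manifolds, or exploit the explicit factorisation through $(c_1,c_2,c_3)$ by hand, and must keep the implicit constants independent of $P$. That uniformity in $P$ is precisely what the deliberate slack $\epsilon>0$ in each of $\lambda_{v_{1,c}},\lambda_{u_1},\lambda_{v_{1,p}}$ purchases; once it is established, the remaining PAM algebra, noise aggregation, and final union bound are standard.
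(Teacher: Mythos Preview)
Your high-level plan is in the right spirit, but it diverges from the paper's proof in two substantive ways, and one of them hides the key structural insight you are missing.

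\textbf{What the paper actually does.} The paper does \emph{not} jointly decode all four integer layers. It proceeds in two stages (``noise removal'' then ``signal separation''): first it lumps both $v_{1,p}$ and $v_{2,p}$ into the bounded-variance noise $\tilde z_1$ and decodes only the three high-power symbols $(v_{1,c},\,v_{2,c}+u_1,\,u_2)$ jointly from $y_1$; after cancelling these, it decodes $v_{1,p}$ on its own via the elementary PAM Lemma~\ref{lm:AWGNic}. For the three-term joint step it rewrites
\[
y_1=\sqrt{P^{5\alpha/3-1+\epsilon}}\cdot 2\gamma\,\bigl(g_0q_0+\sqrt{P^{1-\alpha}}\,g_1q_1+\sqrt{P^{2-2\alpha}}\,g_2q_2\bigr)+\tilde z_1,
\]
and then exploits a feature of the channel model \eqref{eq:channelG101}--\eqref{eq:alpha11} that you overlooked: because $\sqrt{P^{\alpha_{k\ell}}}=2^{m_{k\ell}}$ with integer $m_{k\ell}$, the scalings $A_1=\sqrt{P^{1-\alpha}}$ and $A_2=\sqrt{P^{2-2\alpha}}$ are \emph{positive integers}. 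This lands the minimum-distance problem exactly in the framework of Niesen--Maddah-Ali's Lemma~14 (quoted here as Lemma~\ref{lm:NMb}), yielding the explicit outage bound $\mathcal L(\Ho)\le 258048\,\delta\,P^{-\epsilon/2}$ in Lemma~\ref{lm:distance3432}. No Khintchine--Groshev or Kleinbock--Margulis machinery is invoked.

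\textbf{Where your approach is fragile.} By pushing $v_{1,p}$ into the joint decoder, your four-term linear form has \emph{two} entries sharing the same channel factor $c_1=h_{11}h_{22}$. The Kleinbock--Margulis extension you cite requires nondegeneracy of the coefficient manifold, and the image of $(c_1,c_2,c_3)\mapsto(c_1 P^{(3-\alpha)/6},\,c_1,\,c_2 P^{\alpha/3},\,c_3 P^{(5\alpha-3)/6})$ is a three-dimensional linear slice --- precisely the degenerate case. One can salvage this particular instance by hand (the ranges of $a_{v_{1,c}}$ and $a_{v_{1,p}}$ are tuned so that $|2P^{(3-\alpha)/6}a_{v_{1,c}}+a_{v_{1,p}}|\ge 1$ whenever the pair is nonzero), but that ad~hoc step is exactly what the paper's two-stage decomposition renders unnecessary. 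More importantly, your ``uniformly in $P$'' Diophantine bound with a single $\kappa(h)$ is asserted rather than proved; the paper instead gets a $P$-dependent outage set of vanishing measure via the explicit Niesen--Maddah-Ali counting, which is both sharper and self-contained.

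In short: treat $v_{1,p}$ as noise in the first pass, use the integrality of $\sqrt{P^{1-\alpha}}$ to invoke Lemma~\ref{lm:NMb}, and then peel off $v_{1,p}$ by Lemma~\ref{lm:AWGNic}. That is the argument the paper gives, and it sidesteps both difficulties in your proposal.
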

 \begin{proof}
In this proof  we use the approaches of noise removal and signal separation. The full details are described in  Section~\ref{sec:rateerror341}.
  \end{proof}
 
With the result of Lemma~\ref{lm:rateerror341}, we  proceed to bound the secure rate $R_1$ by following the steps in \eqref{eq:rateAWGNIC111}-\eqref{eq:lbounditfinal2} that were used for the previous case of $2/3 <  \alpha \leq  3/4$. 
For this case of  $3/4 \leq \alpha \leq 1$, since   $v_{1,c} \in    \Omega (\xi   =  \gamma_{v_{1,c}} \cdot \frac{ 1}{Q},   \   Q =  P^{ \frac{\alpha/3 - \epsilon}{2}} ) $ and $v_{1,p}  \in    \Omega (\xi   =\gamma_{v_{1,p}} \cdot \frac{ 1}{Q},   \   Q = P^{ \frac{ 1 - \alpha - \epsilon}{2}} ) $, we have
  \begin{align}
  \Hen(v_{1,c}) &=  \log (2 \cdot P^{ \frac{ \alpha/3  - \epsilon}{2}} +1)   \label{eq:rateAWGNIC111341}     \\
  \Hen (v_{1,p}) &=  \log (2 \cdot P^{ \frac{ 1 - \alpha - \epsilon}{2}} +1)  \label{eq:rateAWGNIC222341}                              
 \end{align}
 where $\epsilon$ can be set as $\epsilon \to 0$. 
 Due to the signal design $v_{1} \defeq   v_{1,c} +   \sqrt{P^{ - \alpha}}    \cdot  v_{1,p} $ (see \eqref{eq:xvk} and \eqref{eq:para34111}), it is true that  $\{ v_{1,p}, v_{1,c}\}$ can be reconstructed from $v_{1}$,  and vice versa. 
Then, 
  \begin{align}
 \Hen(v_{1}) =   \Hen(v_{1,c}, v_{1,p}) &= \Hen(v_{1,c})+  \Hen(v_{1,p}) 
 =   \frac{1- 2\alpha/3  - 2\epsilon}{2} \log P + o(\log P) . \label{eq:rateAWGNIC333341}                                  
 \end{align}
At this point,   $\Imu(v_1; y_1)$ can be lower bounded by
  \begin{align}
  \Imu(v_1; y_1)   &\geq   \Imu(v_1; \hat{v}_{1,c}, \hat{v}_{1,p})  \label{eq:rateAWGNIC1202341}     \\
  &=   \Hen(v_1) -   \Hen(v_1  |  \hat{v}_{1,c}, \hat{v}_{1,p})    \non    \\
    &\geq    \Hen(v_1) -       \bigl( 1+    \text{Pr} [  \{ v_{1,c} \neq \hat{v}_{1,c} \} \cup  \{ v_{1,p} \neq \hat{v}_{1,p} \}  ] \cdot \Hen(v_{1}) \bigr)  \label{eq:rateAWGNIC1404341}     \\
        & =     \bigl( 1 -   \text{Pr} [  \{ v_{1,c} \neq \hat{v}_{1,c} \} \cup  \{ v_{1,p} \neq \hat{v}_{1,p} \}  ] \bigr)   \cdot \Hen(v_{1})  - 1  \label{eq:rateAWGNIC1405341}    \\
       & =   \frac{1- 2\alpha/3  - 2\epsilon}{2} \log P + o(\log P)      \label{eq:rateAWGNIC17762341}  
 \end{align}
 for  almost all channel coefficients  $\{h_{k\ell}\} \in (1, 2]^{2\times 2}$, where \eqref{eq:rateAWGNIC1202341} results from the Markov chain $v_1 \to y_1 \to  \{\hat{v}_{1,c}, \hat{v}_{1,p} \}$;
\eqref{eq:rateAWGNIC1404341} stems from Fano's inequality;
and \eqref{eq:rateAWGNIC17762341}  follows from  \eqref{eq:rateAWGNIC333341} and  Lemma~\ref{lm:rateerror341}. 

On the other hand,  $\Imu(v_1; y_2 | v_2 )$ can be bounded  as
  \begin{align}
  &\Imu(v_1; y_2 | v_2 )     \non\\
\leq   & \Hen( v_{1,c} + u_2  )  - \Hen(u_2  )     +  \hen( h_{21} h_{22}  v_{1,p}   +  z_{2}   | u_1,  v_2 , v_{1,c} + u_2 )   -  \hen(   z_{2} )    \label{eq:rateAWGNIC77364341}    \\
\leq &  \underbrace{ \log (4  \cdot P^{ \frac{  \alpha/3 - \epsilon}{2}} +1)    -   \log (2  \cdot P^{ \frac{  \alpha/3 - \epsilon}{2}} +1)}_{\leq  \log 2 }    +   \underbrace{\hen( h_{21} h_{22}  v_{1,p}   +  z_{2} )}_{\leq  \frac{1}{2} \log ( 2 \pi e \times 17) }   -  \frac{1}{2}\log (2 \pi e)    \label{eq:rateAWGNIC1955341}   \\
\leq  & \log (2\sqrt{17})    \label{eq:rateAWGNIC9562341}  
 \end{align}
where \eqref{eq:rateAWGNIC77364341} follows from the steps in \eqref{eq:rateAWGNIC18374}-\eqref{eq:rateAWGNIC77364};    \eqref{eq:rateAWGNIC1955341} holds true because $ \hen(   z_{2} )  =  \frac{1}{2}\log (2\pi e)$,  $\Hen(u_2  ) =   \log (2  \cdot P^{ \frac{  \alpha/3 - \epsilon}{2}} +1) $,  and $\Hen( v_{1,c} + u_2  ) \leq  \log (4  \cdot P^{ \frac{  \alpha/3 - \epsilon}{2}} +1)$; 
\eqref{eq:rateAWGNIC9562341} follows from the identity that $\log (4a +1) - \log (2 a +1) \leq  \log 2 $ for any $a \geq 0$ and  the fact that $\hen( h_{21} h_{22}  v_{1,p}   +  z_{2} ) \leq   \frac{1}{2} \log ( 2 \pi e ( | h_{21}|^2 \cdot | h_{22}|^2\cdot \E |v_{1,p}|^2     +  \E|  z_{2}|^2 ))  \leq   \frac{1}{2} \log ( 2 \pi e \times 17)  $.

Finally,  with the results in \eqref{eq:rateAWGNIC17762341}  and \eqref{eq:rateAWGNIC9562341}, the secure rate $R_1$ is lower bounded by 
\begin{align}
R_1  & =    \Imu(v_1; y_1) - \Imu(v_1; y_2 | v_2 )   \geq   \frac{1- 2\alpha/3 - 2\epsilon}{2} \log P + o(\log P)  \label{eq:lbounditfinal1001}
\end{align}
for  almost all the channel coefficients.
Due to the symmetry, by interchanging the roles of users, it gives the lower bound on the secure rate $R_2 \geq     \frac{1- 2\alpha/3 - 2\epsilon}{2} \log P + o(\log P)$.
 By setting $\epsilon \to 0$, then the secure GDoF pair $(d_1 =  1- 2\alpha/3,  d_2 =  1- 2\alpha/3 )$ is  achievable by the proposed cooperative jamming scheme for  almost all the channel coefficients  $\{h_{k\ell}\} \in (1, 2]^{2\times 2}$,  for the case of  $3/4 \leq \alpha \leq  1$.

\subsection{Rate analysis when $1 \leq  \alpha \leq  3/2$   \label{sec:CJscheme132}}
 
The rate analysis for this case with $1 \leq  \alpha \leq  3/2$  also uses  the approaches of noise removal and signal separation.  For this case, we design the parameters  such that
 \begin{align}
\beta_{v_{1,c}}&= \beta_{v_{2,c}} =\alpha -1 ,        \quad\quad \quad    \quad\quad       \beta_{u_1}=  \beta_{u_2}= 0 ,   \quad  \quad  \quad \quad \beta_{v_{1,p}}=\beta_{v_{2,p}}= \infty  \label{eq:para13211} \\
 \lambda_{v_{1,c}} &= \lambda_{v_{2,c}}= \alpha/3 - \epsilon,\quad   \quad  \  \lambda_{u_1}=  \lambda_{u_2}= \alpha/3 - \epsilon,   \quad \quad  \quad  \lambda_{v_{1,p}}= \lambda_{v_{2,p}}  =0   \label{eq:para13222}
 \end{align}
 where  $\epsilon>0$ can be set arbitrarily small.
 In this case, the transmitted signal at transmitter~$k$ consists of only two symbols:
  \begin{align}
  x_k  =   \sqrt{P^{ -(\alpha -1)}}   h_{\ell\ell} \cdot v_{k,c}   +     h_{k\ell}  \cdot u_{k}    \label{eq:xvkkk1132}
 \end{align}
  for $\ell \neq k$,  $k, \ell =1,2$.   
Then the received signals at the receivers take the following forms
\begin{align}
y_{1} &=    \sqrt{P^{ 2- \alpha}} h_{11}h_{22} v_{1,c}   +    \sqrt{P}  h_{12} h_{11}(   v_{2,c}  +  u_1)    +   \sqrt{P^{ \alpha }}  h_{12} h_{21} u_{2}   +  z_{1}    \label{eq:yvk1322132}  \\
y_{2} &=      \sqrt{P^{ 2- \alpha}} h_{22}h_{11} v_{2,c}      +     \sqrt{P} h_{21} h_{22}(  v_{1,c}  +    u_2)       +  \sqrt{P^{ \alpha  }}   h_{21} h_{12} u_{1}    +  z_{2} .   \label{eq:yvk2322132}  
\end{align}

In the following we will bound the  secure rates expressed in \eqref{eq:Rk623J} and \eqref{eq:Rk623bJ}. 
For this case, $v_k$ takes the form \[v_k =  \sqrt{P^{ -(\alpha -1)}}  v_{k,c} , \quad k=1,2 \] (see \eqref{eq:xvk} and \eqref{eq:para13211}).   
From $y_k$ expressed in \eqref{eq:yvk1322132} and \eqref{eq:yvk2322132},  $v_{k,c}$  can be  estimated by using the approaches of noise removal and signal separation, $k=1,2$.  The following lemma provides a result on the error probability for this estimation.

  \begin{lemma}  \label{lm:rateerror132}
 When $1 \leq \alpha \leq 3/2$, given the signal design in \eqref{eq:constellationGsym1}-\eqref{eq:constellationGsym1u} and \eqref{eq:para13211}-\eqref{eq:xvkkk1132}, then for almost all channel coefficients  $\{h_{k\ell}\} \in (1, 2]^{2\times 2}$,  the error probability of estimating  $v_{k,c}$ from $y_k$  is
 \begin{align}
 \text{Pr} [  v_{k,c} \neq \hat{v}_{k,c}  ]  \to 0         \quad \text {as}\quad  P\to \infty    \label{eq:error1c1p132}
 \end{align}
for $k=1,2$, where  $ \hat{v}_{k,c}$  is the corresponding estimate for  $v_{k,c}$ based on the observation $y_k$ expressed in \eqref{eq:yvk1322132} and \eqref{eq:yvk2322132}. 
 \end{lemma}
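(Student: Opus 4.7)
The proof follows the same template as Lemma~\ref{lm:rateerror341}: noise removal and signal separation via a Khintchine--Groshev (rational independence) argument applied to the channel products, valid for Lebesgue-almost every $\{h_{k\ell}\}\in(1,2]^{2\times 2}$. I sketch the argument for estimating $v_{1,c}$ from $y_1$; the case $k=2$ follows by interchanging roles. Plugging the mapping \eqref{eq:xvkkk1132} into the channel yields \eqref{eq:yvk1322132}, which exhibits exactly three PAM components (after the alignment $v_{2,c}+u_1$) plus Gaussian noise, at amplitudes $\sqrt{P^{2-\alpha}}$, $\sqrt{P}$, and $\sqrt{P^{\alpha}}$. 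Each of $v_{1,c},v_{2,c},u_1,u_2$ is drawn uniformly from a size-$(2Q{+}1)$ PAM with spacing $\xi=2\gamma/Q$ and $Q=P^{(\alpha/3-\epsilon)/2}$. Recovering $v_{1,c}$ from $y_1$ is equivalent to inverting the integer triple $(a_1,a_{23},a_4)$, where $a_1,a_4\in[-Q,Q]\cap\Zc$ index $v_{1,c}$ and $u_2$, and $a_{23}\in[-2Q,2Q]\cap\Zc$ indexes the aligned sum.

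The core step is a lower bound on the minimum distance
\[
d_{\min}=\xi\cdot\min_{(\Delta_1,\Delta_{23},\Delta_4)\neq 0}\bigl|\sqrt{P^{2-\alpha}}h_{11}h_{22}\Delta_1+\sqrt{P}\,h_{12}h_{11}\Delta_{23}+\sqrt{P^{\alpha}}h_{12}h_{21}\Delta_4\bigr|
\]
taken over nonzero integer triples with $|\Delta_1|,|\Delta_4|\leq 2Q$ and $|\Delta_{23}|\leq 4Q$. Since the three products $h_{11}h_{22},h_{12}h_{11},h_{12}h_{21}$ are jointly rationally independent for almost every $\{h_{k\ell}\}$, the Khintchine--Groshev theorem, after absorbing the $\sqrt{P^{\cdot}}$ factors into an enlarged integer-lattice representation, yields $d_{\min}\geq P^{-\eta}$ for any prescribed $\eta>0$ and $P$ sufficiently large. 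A volume count corroborates this: the composite constellation contains $O(Q^{3})=O(P^{(\alpha-3\epsilon)/2})$ points inside a window of width $O(P^{\alpha/2})$, so the heuristic spacing is of order $P^{3\epsilon/4}\gg 1$. Because $d_{\min}$ dominates the noise standard deviation $\sigma_z=1$, a nearest-neighbour decoder recovers $(a_1,a_{23},a_4)$ with error probability bounded by a finite sum of Gaussian tails whose arguments diverge as $P\to\infty$, proving \eqref{eq:error1c1p132} for $k=1$; the $k=2$ bound follows by symmetry.

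\textbf{Main obstacle.} The delicate point is making the Khintchine--Groshev application uniform in $P$: the ``effective coefficients'' $\sqrt{P^{2-\alpha}}h_{11}h_{22}$, $\sqrt{P}\,h_{12}h_{11}$, and $\sqrt{P^{\alpha}}h_{12}h_{21}$ themselves depend on $P$, whereas the classical theorem concerns fixed real numbers. The remedy, already employed in Lemma~\ref{lm:rateerror341}, is to transfer the $P$-dependence from the coefficients into enlarged integer multipliers, invoke the Diophantine lower bound on the fixed triple $(h_{11}h_{22},h_{12}h_{11},h_{12}h_{21})$, and absorb the resulting polynomial-in-$Q$ loss into the slack $\epsilon>0$ built into $Q=P^{(\alpha/3-\epsilon)/2}$.
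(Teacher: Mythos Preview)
Your overall decomposition is right: factor out the common scale, reduce to a three-term integer linear form in the channel products $(h_{11}h_{22},\,h_{12}h_{11},\,h_{12}h_{21})$, bound the minimum distance, and compare to the unit-variance Gaussian noise. The gap is in the Diophantine step.

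Carry out the computation you propose. After pulling out $\sqrt{P^{2-\alpha}}$ and absorbing the remaining $P$-powers into enlarged integer multipliers, you are applying Khintchine--Groshev to $g_0 m_1+g_1 m_2+g_2 m_3$ with $|m_1|\le 2Q$ but $|m_2|\lesssim Q\sqrt{P^{\alpha-1}}$ and $|m_3|\lesssim Q\,P^{\alpha-1}$, i.e.\ $M:=\max_i|m_i|\asymp P^{7\alpha/6-1}$. The classical bound $|g_0 m_1+g_1 m_2+g_2 m_3|\gtrsim M^{-2-\delta}$ then yields, after reinserting $\xi=2\gamma/Q$ and the prefactor $\sqrt{P^{2-\alpha}}$, a received-signal minimum distance of order $P^{3(1-\alpha)+O(\epsilon)}$. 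For every $\alpha\in(1,3/2]$ this exponent is strictly negative, so $d_{\min}\to 0$ and the nearest-neighbour decoder does \emph{not} beat the noise. Your claimed conclusion $d_{\min}\ge P^{-\eta}$ for every $\eta>0$ therefore does not follow from this argument, and even if it did, ``decays slower than any polynomial'' is still not ``bounded below by a constant.''

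What the paper actually does (both here and in Lemma~\ref{lm:rateerror341}, which you cite as precedent) is \emph{not} classical Khintchine--Groshev but the structured measure bound of Niesen--Maddah-Ali (Lemma~\ref{lm:NMb}). That lemma takes as input not just the sizes $Q_k$ but also the integer scale factors $A_1=\sqrt{P^{\alpha-1}}$ and $A_2=P^{\alpha-1}$, and exploits the fact that $m_2,m_3$ are constrained to the sublattices $A_1\Zc$ and $A_2\Zc$ rather than being arbitrary integers up to $M$. (This is also why the integrality of $\sqrt{P^{\alpha-1}}$, which follows from the channel model $\sqrt{P^{\alpha_{k\ell}}}=2^{m_{k\ell}}$, is essential and should be stated.) Running that lemma with $\bar\beta=\delta P^{-(2-4\alpha/3)/2}$ gives an outage set of measure $O(\delta P^{-\epsilon/2})$; outside it, $\bar d_{\min}\ge \delta P^{-(2-4\alpha/3)/2}$, and since $y_1=\sqrt{P^{2-4\alpha/3+\epsilon}}\cdot 2\gamma\,\bar x_s+z_1$, the effective separation is $2\gamma\delta\sqrt{P^{\epsilon}}\to\infty$, which is exactly what your volume heuristic predicted. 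Replace the Khintchine--Groshev invocation by Lemma~\ref{lm:NMb} (applied with $A_1=\sqrt{P^{\alpha-1}}$, $A_2=P^{\alpha-1}$, $Q_0=Q_2=2Q_{\max}$, $Q_1=4Q_{\max}$) and the rest of your outline goes through.
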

 \begin{proof}
In this proof  we use the approaches of noise removal and signal separation. The full details are described in  Appendix~\ref{sec:rateerror132}.
  \end{proof}

We will proceed with the lower bound on the secure rate $R_1$ expressed in \eqref{eq:Rk623J}.
In this case, given  that $v_{1,c} \in \Omega ( \xi = 2 \gamma \cdot \frac{ 1}{Q} ,   \   Q =  P^{ \frac{\alpha/3 - \epsilon }{2}} )$ and that $v_1 =  \sqrt{P^{ -(\alpha -1)}}  v_{1,c}$,  the rate of $v_{1} $ is  
  \begin{align}
 \Hen(v_{1}) =  \Hen(v_{1,c}) &=  \log (2 \cdot P^{ \frac{\alpha/3 - \epsilon }{2}} +1)   \label{eq:rateAWGNIC1c132}     
 \end{align}
 where $\epsilon$ can be set as $\epsilon \to 0$. 
Then, $\Imu(v_1; y_1)$ can be lower bounded by
  \begin{align}
  \Imu(v_1; y_1)      &=   \Hen(v_1) -   \Hen(v_1  |  y_1)    \non    \\
    &\geq    \Hen(v_1) -       \bigl( 1+    \text{Pr} [  v_{1,c} \neq \hat{v}_{1,c}  \}  ] \cdot \Hen(v_{1}) \bigr)   \non    \\
        & =  \frac{ \alpha/3 - \epsilon}{2} \log P + o(\log P)    \label{eq:rateAWGNIC17762132}  
 \end{align}
 for  almost all the channel coefficients  $\{h_{k\ell}\} \in (1, 2]^{2\times 2}$, where  \eqref{eq:rateAWGNIC17762132}   follows from  \eqref{eq:rateAWGNIC1c132} and Lemma~\ref{lm:rateerror132}.
On the other hand, by following the steps related to \eqref{eq:rateAWGNIC18735322}-\eqref{eq:rateAWGNIC9562322},  $\Imu(v_1; y_2 | v_2 )$  can be bounded as
  \begin{align}
\Imu(v_1; y_2 | v_2 )   \leq   1 .     \label{eq:rateAWGNIC9562132}  
 \end{align}

Finally,  by incorporating \eqref{eq:rateAWGNIC17762132}  and \eqref{eq:rateAWGNIC9562132} into \eqref{eq:Rk623J}, the secure rate $R_1$ is lower bounded by 
\begin{align}
R_1  & =    \Imu(v_1; y_1) - \Imu(v_1; y_2 | v_2 )      \non \\
& \geq  \frac{ \alpha/3  - \epsilon}{2} \log P + o(\log P)  \label{eq:lbounditfinal1132}
\end{align}
for  almost all the channel coefficients.
Due to the symmetry, by interchanging the roles of users, it gives the lower bound on the second secure rate:  $R_2  =    \Imu(v_2; y_2) - \Imu(v_2; y_1 | v_1 )  \geq     \frac{ \alpha/3 - \epsilon}{2} \log P  + o(\log P)$.
 By setting $\epsilon \to 0$, then the secure GDoF pair $(d_1 =  \alpha/3 ,  d_2 = \alpha/3  )$ is  achievable by the proposed cooperative jamming scheme for  almost all the channel coefficients  $\{h_{k\ell}\} \in (1, 2]^{2\times 2}$,  for the case of  $1 \leq \alpha \leq  3/2$.

\section{Proof of Lemma~\ref{lm:rateerror341}  \label{sec:rateerror341} }

We will prove that  when $3/4 \leq \alpha \leq 1$, given the signal design in \eqref{eq:constellationGsym1}-\eqref{eq:constellationGsym1u} and \eqref{eq:para34111}-\eqref{eq:xvkkk1341}, then for almost all the channel coefficients  $\{h_{k\ell}\} \in (1, 2]^{2\times 2}$,  the error probability of estimating  $\{v_{k,c}, v_{k,p} \}$ from $y_k$  is
 \begin{align}
 \text{Pr} [  \{ v_{k,c} \neq \hat{v}_{k,c} \} \cup  \{ v_{k,p} \neq \hat{v}_{k,p} \}  ]  \to 0         \quad \text {as}\quad  P\to \infty    \label{eq:error1c1p341a}
 \end{align}
for $k=1,2$, where  $ \hat{v}_{k,c}$ and  $\hat{v}_{k,p}  $ are the corresponding estimates for  $v_{k,c}$ and $v_{k,p}$, respectively, based on the observation $y_k$ expressed in \eqref{eq:yvk1341} and \eqref{eq:yvk2341}. 
In this case, we will use the approaches of noise removal and signal separation that will be discussed below. 

Due to the symmetry we will focus on the proof for the first user ($k=1$). 
At first,  we will estimate three symbols
  \begin{align}
v_{1,c} &\in \Omega ( \xi =  \gamma_{v_{1,c}} \cdot \frac{ 1}{Q} ,   \   Q =  P^{ \frac{ \alpha/3 - \epsilon }{2}} )    \non \\
v_{2,c}  +  u_1  &\in 2 \cdot \Omega ( \xi =  \gamma_{v_{2,c}} \cdot \frac{ 1}{Q} ,   \   Q =  P^{ \frac{ \alpha/3 - \epsilon }{2}} ) \non\\
u_{2}  &\in \Omega ( \xi =  \gamma_{u_{2}} \cdot \frac{ 1}{Q} ,   \   Q =  P^{ \frac{ \alpha/3 - \epsilon }{2}} )  \non
 \end{align}
 simultaneously from $y_1$  by treating the other signals as noise, where $y_1$ is expressed in \eqref{eq:yvk1341}. Note that we will estimate the sum $v_{2,c}  +  u_1$ but not the individual symbols $v_{2,c}$ and $u_1$.  From  $y_1$ expressed in \eqref{eq:yvk1341}, it  can be rewritten as 
\begin{align}
 y_1  &=    \sqrt{P} h_{11}h_{22} v_{1,c}  +    \sqrt{P^{ 1 - \alpha}} h_{11}h_{22} v_{1,p}    +    \sqrt{P^{ \alpha}}  h_{12} h_{11}(   v_{2,c}  +  u_1)    +   \sqrt{P^{ 2\alpha - 1 }}  h_{12} h_{21} u_{2}  +    h_{12} h_{11}  v_{2,p}   +  z_{1}   \non\\
 &=   \sqrt{P^{ 2\alpha - 1 }} (h_{12} h_{21}  u_{2}  +  \sqrt{P^{1-  \alpha }}  h_{12} h_{11} (   v_{2,c}  +  u_1)  +  \sqrt{P^{2- 2\alpha}}  h_{11}h_{22} v_{1,c})   +  \tilde{z}_{1}     \non \\
         &=    \sqrt{P^{ 5\alpha/3 - 1  + \epsilon}} \cdot 2\gamma \cdot ( g_0 q_0 + \sqrt{P^{1-  \alpha }} g_1 q_1 + \sqrt{P^{2- 2\alpha}} g_2 q_2 )   +  \tilde{z}_{1}    \label{eq:yvk4835}
\end{align}
where   $\tilde{z}_{1}   \defeq     \sqrt{P^{ 1 - \alpha}} h_{11}h_{22} v_{1,p}  +    h_{12} h_{11}  v_{2,p}   +  z_{1}$ and 
 \[ g_0\defeq  h_{12} h_{21},    \quad  g_1\defeq   h_{12} h_{11}  , \quad   g_2 \defeq h_{11}h_{22} \] 
    \[ q_0  \defeq  \frac{Q_{\max}}{2\gamma} \cdot   u_{2}  ,    \quad  q_1  \defeq  \frac{Q_{\max}}{2\gamma} \cdot   (   v_{2,c}  +  u_1), \quad   q_2  \defeq  \frac{ Q_{\max}}{2\gamma} \cdot  v_{1,c}, \quad Q_{\max} \defeq P^{ \frac{ \alpha/3 - \epsilon }{2}} \]
   for a given constant $\gamma  \in \bigl(0, \frac{1}{4\sqrt{2}}\bigr]$ (see \eqref{eq:gammadef}).
Based on our definitions, it holds true that $q_0, q_1, q_2 \in \Zc$ and $|q_0| \leq Q_{\max}$, $|q_1| \leq 2 Q_{\max}$,  $|q_2| \leq Q_{\max}$.
From the definition in \eqref{eq:alpha11},  i.e., $P \defeq \max_{k}\{ 2^{2 m_{kk}} \}$ and $\sqrt{P^{\alpha_{k\ell}}} =   2^{m_{k\ell}} ,  k, \ell =1,2$, it implies that $ \sqrt{P^{1-  \alpha }} \in  \Zc^+$ and $\sqrt{P^{2- 2\alpha}} \in \Zc^+$ for this case with $3/4 \leq \alpha \leq 1$.

From $y_1$ expressed in \eqref{eq:yvk4835}, $q_0, q_1, q_3$ can be estimated by using a demodulator,  which  searches for the corresponding estimates $\hat{q}_0, \hat{q}_1, \hat{q}_3$  by minimizing 
\[ |y_1 -     \sqrt{P^{ 5\alpha/3 - 1  + \epsilon}} \cdot 2\gamma \cdot ( g_0 \hat{q}_0 + \sqrt{P^{1-  \alpha }} g_1 \hat{q}_1 + \sqrt{P^{2- 2\alpha}} g_2 \hat{q}_2 ) | .   \]
Let us consider the minimum distance between the  signals generated by  $(q_0, q_1, q_2)$ and $(q_0', q_1', q_2') $, defined as
  \begin{align}
d_{\min}  (g_0, g_1, g_2)   \defeq    \min_{\substack{ q_0, q_2, q_0', q_2' \in \Zc  \cap [- Q_{\max},    Q_{\max}]  \\  q_1, q_1' \in \Zc  \cap [- 2Q_{\max},   2 Q_{\max}]   \\  (q_0, q_1, q_2) \neq  (q_0', q_1', q_2')  }}  | g_0  (q_0 - q_0') + \sqrt{P^{1-  \alpha }} g_1 (q_1 - q_1') + \sqrt{P^{2- 2\alpha}} g_2 (q_2 - q_2')  | .    \label{eq:minidis111}
 \end{align}
The following Lemma~\ref{lm:distance3432} shows that, given the signal design in \eqref{eq:constellationGsym1}-\eqref{eq:constellationGsym1u}
and \eqref{eq:para34111}-\eqref{eq:xvkkk1341},  the minimum distance $d_{\min}$ defined in \eqref{eq:minidis111} is sufficiently large for almost all the channel  coefficients $\{h_{k\ell}\} \in (1, 2]^{2\times 2} $ when $P$ is large. Before providing Lemma~\ref{lm:distance3432}, we will present an existing result from \cite{NM:13}, which will be used in the proof of Lemma~\ref{lm:distance3432}.

\begin{lemma} \cite[Lemma~14]{NM:13} \label{lm:NMb}
Let $\beta \in (0,1]$, $A_1, A_2 \in \Zc^+$, and $Q_0, Q_1, Q_2 \in \Zc^+$. Define the event
\[  B'(q_0, q_1, q_2)  \defeq \{ (g_0, g_1, g_2)  \in (1,4]^3 :  |  g_0q_0 + A_1 g_1 q_1 + A_2 g_2 q_2    | < \beta \}    \]
and set 
\begin{align*}
B'  \defeq   \bigcup_{\substack{ q_0, q_1, q_2 \in \Zc:  \\  (q_0, q_1, q_2) \neq  0,  \\  |q_k| \leq Q_k  \ \forall k }}  B'(q_0, q_1, q_2) . 
\end{align*}
Then the Lebesgue measure of $B' $, denoted by $\Lc (B' )$,  is bounded by
\begin{align*}
\Lc (B' )  \leq   504 \beta \Bigl( & 2 \min \bigl\{ Q_2,  \frac{Q_0}{A_2} \bigr\}  + \min \bigl\{ Q_1 \tilde{Q}_2,  \frac{Q_0 \tilde{Q}_2}{A_1},  \frac{A_2 \tilde{Q}_2 \tilde{Q}_2}{A_1}   \bigr\} \\  + &2 \min \bigl\{ Q_1,  \frac{Q_0}{A_1} \bigr\}  + \min \bigl\{ Q_2 \tilde{Q}_1,  \frac{Q_0 \tilde{Q}_1}{A_2},  \frac{A_1 \tilde{Q}_1 \tilde{Q}_1}{A_2}   \bigr\}        \Bigr)
\end{align*}
where 
\begin{align*}
\tilde{Q}_1 &\defeq \min\Bigl\{Q_1,  8\cdot \frac{\max\{Q_0, A_2 Q_2\}}{A_1}\Bigr\} \\
\tilde{Q}_2 &\defeq \min\Bigl\{Q_2,  8\cdot \frac{\max\{Q_0, A_1 Q_1\}}{A_2}\Bigr\} .
\end{align*}
\end{lemma}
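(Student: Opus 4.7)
The plan is to prove Lemma~\ref{lm:NMb} by a union bound over lattice triples, combined with a Fubini slab estimate for each single-triple measure, organised by the support pattern of $(q_0,q_1,q_2)$ so that each of the four summands of the stated bound comes from a distinct group. Writing $A_0\defeq 1$ for uniformity, the core single-triple estimate is that for any index $k\in\{0,1,2\}$ with $q_k\neq 0$,
\[ \Lc(B'(q_0,q_1,q_2)) \;\leq\; 9\,\min\!\Bigl\{3,\ \frac{2\beta}{A_k|q_k|}\Bigr\}, \]
since, with the other two $g$'s frozen, the set of $g_k\in(1,4]$ satisfying $|g_0 q_0+A_1 g_1 q_1+A_2 g_2 q_2|<\beta$ is an interval of length at most $2\beta/(A_k|q_k|)$, while the other two $g$'s contribute a factor $3\times 3=9$. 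Minimising over indices $k$ in the support of $q$ then gives $\Lc(B'(q))\leq 18\beta/\max_k A_k|q_k|$, capped at $27$.

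Substituting this into $\Lc(B')\leq \sum_{q\neq 0,\,|q_k|\leq Q_k}\Lc(B'(q))$ reduces the problem to a purely combinatorial sum. I would partition the nonzero lattice triples into four groups matching the four summands of the bound: (i) $q_1=0$ and $q_2\neq 0$, contributing $2\min\{Q_2,Q_0/A_2\}$; (ii) $q_2=0$ and $q_1\neq 0$, contributing $2\min\{Q_1,Q_0/A_1\}$; (iii) $q_1 q_2\neq 0$ with $A_2|q_2|\geq A_1|q_1|$, contributing the $\tilde Q_2$ term; and (iv) $q_1 q_2\neq 0$ with $A_1|q_1|>A_2|q_2|$, contributing the $\tilde Q_1$ term. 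Triples with $q_1=q_2=0$ and $q_0\neq 0$ can be absorbed into group (i) or (ii), since the relevant slab is then in $g_0$ and its contribution is dominated.

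The reason the final bound is linear, not logarithmic, in the $Q_k$'s is a geometric slab-miss observation: in group (i), once $q_2\neq 0$ is fixed, the requirement $|g_0 q_0+A_2 g_2 q_2|<\beta$ with $g_0,g_2\in(1,4]$ forces $A_2|q_2|/5 \lesssim |q_0| \lesssim 5 A_2|q_2|$, so the number of admissible $q_0$ is $O(\min\{Q_0,A_2|q_2|\})$; multiplying by the per-triple measure $18\beta/(A_2|q_2|)$ and summing over $|q_2|\leq Q_2$ yields $O(\beta\min\{Q_2,Q_0/A_2\})$. Group (ii) is symmetric. For groups (iii) and (iv), the dominance hypothesis on $A_k|q_k|$ permits the truncation $|q_k|\leq\tilde Q_k$, since otherwise the dominant linear term $A_k g_k q_k$ cannot be balanced by the other two terms whose magnitudes are at most $4(Q_0+A_{k'}Q_{k'})$; this is precisely the factor $8$ in the definition of $\tilde Q_k$. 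The three candidates inside each $\min$ then correspond to three orderings of the iterated sum over the remaining coordinates under this truncation, each ordering being optimal in a different regime of $(Q_0,A_1 Q_1,A_2 Q_2)$.

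The main obstacle will be verifying, for groups (iii) and (iv), that for every ordering of $Q_0,A_1 Q_1,A_2 Q_2$ at least one of the three candidates inside $\min$ realises the correct bound, so that the $\min$ itself is always a valid upper bound, while simultaneously respecting the box constraint $g_k\in(1,4]$ and the slab-miss constraint that cut the harmonic tails. The universal constant $504$ absorbs the factor $9$ from orthogonal-direction volumes, the factor $2$ from signs on $q_0,q_1,q_2$, the $8$ in the truncation $\tilde Q_k$, and the small constants coming from replacing $\sum 1/|q|$ by linear counts via slab-miss. Once this case split and constant-tracking are in place, each individual term reduces to a short Fubini-and-count computation.
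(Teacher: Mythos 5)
This lemma is cited verbatim as Lemma~14 of Niesen and Maddah-Ali \cite{NM:13}; the paper reproduces the statement but not its proof, so there is no in-paper argument to compare against and your sketch must be judged on its own.

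Your framework is the correct one, and it is the mechanism of the source: union bound over nonzero integer triples, Fubini slab estimate $\Lc(B'(q))\leq 18\beta/\max_k A_k|q_k|$, a slab-miss constraint that prevents the union from producing harmonic sums (the slab $|g_0q_0+A_1g_1q_1+A_2g_2q_2|<\beta$ can only meet $(1,4]^3$ when the weighted $|q_k|$'s nearly cancel, collapsing each inner sum to an interval of bounded ratio), and the dominance truncation $A_2|q_2|\leq 4Q_0+4A_1Q_1+\beta\leq 8\max\{Q_0,A_1Q_1\}+1$ when $q_2$ carries the heaviest weight (symmetrically for $q_1$), which explains the factor $8$ in $\tilde Q_k$. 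Two small remarks on the partition: the triples with $q_1=q_2=0$, $q_0\neq 0$ contribute exactly zero, since $|g_0q_0|>1\geq\beta$, so they are empty rather than merely dominated; and the split of groups (iii)/(iv) by the sign of $A_2|q_2|-A_1|q_1|$ should also note that $q_0$ cannot be dominant beyond a bounded factor, because the slab-miss already forces $|q_0|\lesssim 8\max\{A_1|q_1|,A_2|q_2|\}$, so the per-triple bound $\lesssim\beta/(A_2|q_2|)$ survives regardless of which index is chosen. What you have is a sound plan rather than a finished proof: establishing that the three candidates inside each $\min$ arise from three admissible Fubini orderings of the remaining two coordinates, and checking that the accumulated constants multiply out to $504$, are the genuinely nontrivial remaining steps, and you correctly flag them as such.
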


\vspace{10pt}

\begin{lemma}  \label{lm:distance3432}
Consider the case $\alpha \in [3/4, 1]$, and  consider some constants $\delta \in (0, 1]$ and  $\epsilon >0$.   Given the signal design in \eqref{eq:constellationGsym1}-\eqref{eq:constellationGsym1u} and \eqref{eq:para34111}-\eqref{eq:xvkkk1341},  then the minimum distance $d_{\min}$ defined in \eqref{eq:minidis111} is bounded by
 \begin{align}
d_{\min}    \geq   \delta P^{- \frac{8\alpha/3 -2 }{2}}    \label{eq:distancegeq}
 \end{align}
for all  the channel  coefficients $\{h_{k\ell}\} \in (1, 2]^{2\times 2} \setminus \Ho$, and the Lebesgue measure of the outage set  $\Ho \subseteq (1,2]^{2\times 2}$ , denoted by $\mathcal{L}(\Ho)$,  satisfies  
 \begin{align}
\mathcal{L}(\Ho) \leq 258048 \delta   \cdot     P^{ - \frac{ \epsilon  }{2}}.    \label{eq:LebmeaB}
 \end{align}
 \end{lemma}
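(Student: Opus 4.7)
The plan is to reduce the statement to a direct application of Lemma~\ref{lm:NMb}, after passing to the product variables $g_0=h_{12}h_{21}$, $g_1=h_{12}h_{11}$, $g_2=h_{11}h_{22}\in(1,4]$ already exhibited in the rewriting~\eqref{eq:yvk4835}. In those coordinates, $d_{\min}$ has exactly the form $|g_0\Delta q_0+A_1 g_1\Delta q_1+A_2 g_2\Delta q_2|$ minimised over nonzero integer triples with $|\Delta q_0|\leq 2Q_{\max}$, $|\Delta q_1|\leq 4Q_{\max}$, $|\Delta q_2|\leq 2Q_{\max}$, where $Q_{\max}=P^{\alpha/6-\epsilon/2}$, and $A_1=\sqrt{P^{1-\alpha}}$, $A_2=\sqrt{P^{2-2\alpha}}$ are positive integers for $\alpha\in[3/4,1]$ by the observation immediately following~\eqref{eq:yvk4835}. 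Applying Lemma~\ref{lm:NMb} with $\beta=\delta P^{-(8\alpha/3-2)/2}$ therefore produces an outage set $B'\subseteq(1,4]^3$ outside which the required lower bound on $d_{\min}$ holds, together with an explicit bound on $\mathcal{L}(B')$.

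The second step is to transport that measure bound from the three-dimensional product space back to the four-dimensional link-gain space $(1,2]^{2\times 2}$ in which $\Ho$ actually lives. For each fixed $h_{11}\in(1,2]$, the restricted map $\Phi_{h_{11}}:(h_{12},h_{21},h_{22})\mapsto(g_0,g_1,g_2)$ is a diffeomorphism onto its image with Jacobian determinant $-h_{11}^2 h_{12}$, whose absolute value lies in $[1,8]$. Because $|\det J|\geq 1$ pointwise, the change-of-variables formula gives $\mathcal{L}_3(\Phi_{h_{11}}^{-1}(S))\leq \mathcal{L}_3(S)$ for every measurable $S\subseteq(1,4]^3$; integrating over $h_{11}$ in the unit-length interval $(1,2]$ yields $\mathcal{L}(\Ho)\leq \mathcal{L}(B')$, so the bound from Lemma~\ref{lm:NMb} transfers without any inflation.

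What remains is to specialise the explicit bound of Lemma~\ref{lm:NMb} to the present parameters and check that it collapses to a universal multiple of $\delta P^{-\epsilon/2}$ uniformly over $\alpha\in[3/4,1]$. With $Q_i\asymp P^{\alpha/6-\epsilon/2}$, $A_1=P^{(1-\alpha)/2}$, $A_2=P^{1-\alpha}$, a short case analysis gives $\tilde Q_2\asymp P^{-1/2+2\alpha/3-\epsilon/2}$ and $\tilde Q_1\asymp P^{\alpha/6-\epsilon/2}$; each of the two single-variable $\min$'s is then attained by its $Q_0/A_j$ branch, and each of the two three-variable $\min$'s collapses to the $Q_0\tilde Q_j/A_\ell$ branch, of order $P^{4\alpha/3-1-\epsilon}$. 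Multiplying the resulting sum by $\beta=\Theta(\delta P^{1-4\alpha/3})$ produces four terms with $P$-exponents $-\alpha/6-\epsilon/2$, $-\epsilon$, $1/2-2\alpha/3-\epsilon/2$, and $-\epsilon$, each of which is $\leq -\epsilon/2$ throughout $\alpha\in[3/4,1]$; combining with the prefactor $504$ and the small integer constants from the $Q_i$'s delivers the claimed bound $258048\,\delta P^{-\epsilon/2}$. The delicate part I expect is this bookkeeping across the $\min$-branches---verifying that the claimed branch wins at both endpoints $\alpha=3/4$ and $\alpha=1$, and that no exponent fails to dominate $-\epsilon/2$---whereas the Jacobian pullback is cost-free precisely because $|\det J|\geq 1$ cannot inflate measures.
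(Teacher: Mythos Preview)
Your proposal is correct and follows essentially the same approach as the paper's proof: apply Lemma~\ref{lm:NMb} with $\beta=\delta P^{-(8\alpha/3-2)/2}$, $A_1=P^{(1-\alpha)/2}$, $A_2=P^{1-\alpha}$, $Q_0=2Q_{\max}$, $Q_1=4Q_{\max}$, $Q_2=2Q_{\max}$ to bound the measure of the bad set $B'\subseteq(1,4]^3$, then pull back to $(1,2]^4$ via the change of variables $(h_{12},h_{21},h_{22})\mapsto(g_0,g_1,g_2)$ for each fixed $h_{11}$, using that the Jacobian factor $|h_{11}^2 h_{12}|^{-1}\leq 1$ cannot inflate measures, and finally integrate over $h_{11}\in(1,2]$. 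Your exponent bookkeeping matches the paper's computation (the paper bounds the four summands more crudely by a single dominant term $\asymp P^{4\alpha/3-1-\epsilon/2}$ before multiplying by $\beta$, whereas you track each exponent separately, but the outcome is the same), and your observation that the critical exponent $1/2-2\alpha/3-\epsilon/2\leq -\epsilon/2$ requires precisely $\alpha\geq 3/4$ is exactly the boundary condition the paper's calculation implicitly relies on.
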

 \begin{proof}
We consider the case of $3/4 \leq \alpha \leq 1$. Let 
\[ \beta \defeq  \delta  P^{- \frac{8\alpha/3 -2 }{2}} , \quad A_1 \defeq   P^{\frac{1-  \alpha}{2} } ,  \quad  A_2 \defeq   P^{1-  \alpha } \]
 \[ g_0\defeq  h_{12} h_{21},    \quad  g_1\defeq   h_{12} h_{11}  , \quad   g_2 \defeq h_{11}h_{22} \] 
\[ Q_0 \defeq 2 Q_{\max},    \quad Q_1\defeq 4 Q_{\max} , \quad Q_2 \defeq 2 Q_{\max},  \quad  Q_{\max} \defeq P^{ \frac{ \alpha/3 - \epsilon }{2}}       \]
for some $\epsilon >0$ and $\delta \in (0, 1]$.
Let us define the event
\begin{align}
B(q_0, q_1, q_2)  \defeq \{ (g_0, g_1, g_2)  \in (1,4]^3 :  |  g_0q_0 + A_1 g_1 q_1 + A_2 g_2 q_2    | < \beta \}       \label{eq:Boutage11}
\end{align}
and set 
\begin{align}
B  \defeq   \bigcup_{\substack{ q_0, q_1, q_2 \in \Zc:  \\  (q_0, q_1, q_2) \neq  0,  \\  |q_k| \leq Q_k  \ \forall k }}  B(q_0, q_1, q_2) .   \label{eq:Boutage22}
\end{align}
From  Lemma~\ref{lm:NMb},  the Lebesgue measure of $B$, denoted by $\Lc (B)$,  is bounded by
\begin{align}
\Lc (B ) & \leq   504 \beta \Bigl(  2 \min \bigl\{ 2 Q_{\max},  \frac{ 2 Q_{\max}}{ P^{1-  \alpha }} \bigr\}  +  \tilde{Q}_2 \cdot \min \bigl\{ 4 Q_{\max} ,  \frac{2 Q_{\max} }{P^{\frac{1-  \alpha}{2} }},  \frac{P^{1-  \alpha } \tilde{Q}_2}{P^{\frac{1-  \alpha}{2} }}   \bigr\} \non\\ & \quad\quad\quad + 2 \min \bigl\{ 4 Q_{\max},  \frac{ 2 Q_{\max}}{ P^{\frac{1-  \alpha}{2} }} \bigr\}  + \tilde{Q}_1 \cdot \min \bigl\{ 2 Q_{\max} ,  \frac{2 Q_{\max} }{P^{1-  \alpha }},  \frac{P^{\frac{1-  \alpha}{2} }  \cdot 4 Q_{\max}}{P^{1-  \alpha }}   \bigr\}        \Bigr)   \non\\
 &= 504 \beta \Bigl(  \frac{ 4 Q_{\max}}{ P^{1-  \alpha }}  +  \tilde{Q}_2 \cdot   \frac{2 Q_{\max} }{P^{\frac{1-  \alpha}{2} }}  + \frac{ 4 Q_{\max}}{ P^{\frac{1-  \alpha}{2} }}   + \tilde{Q}_1 \cdot \frac{2 Q_{\max} }{P^{1-  \alpha }}       \Bigr)  \non \\
  &= 504 \beta \Bigl(  \frac{ 4 Q_{\max}}{ P^{1-  \alpha }}  + 2 Q_{\max} \cdot \min\{1, 16 P^{-\frac{1-  \alpha}{2} }\}  \cdot   \frac{2 Q_{\max} }{P^{\frac{1-  \alpha}{2} }}  + \frac{ 4 Q_{\max}}{ P^{\frac{1-  \alpha}{2} }}   + 4 Q_{\max} \cdot \frac{2 Q_{\max} }{P^{1-  \alpha }}       \Bigr)  \non\\
   &\leq  504 \beta \Bigl(  \frac{ 4 Q_{\max}}{ P^{1-  \alpha }}  + 2 Q_{\max} \cdot  16 \cdot   \frac{2 Q_{\max} }{P^{1-  \alpha }}  + \frac{ 4 Q_{\max}}{ P^{\frac{1-  \alpha}{2} }}   + 4 Q_{\max} \cdot \frac{2 Q_{\max} }{P^{1-  \alpha }}       \Bigr)   \non\\
   & \leq  504 \beta \cdot    \frac{ 16Q_{\max}}{ P^{\frac{1-  \alpha}{2} }}  \cdot \max \{  16  \frac{Q_{\max}}{ P^{\frac{1-  \alpha}{2} }} , 1 \}  \non\\ 
      & =  504 \beta \cdot    16 P^{ \frac{ 4\alpha/3 -1 - \epsilon  }{2}} \cdot \max \{  16 P^{ \frac{ 4\alpha/3 -1 - \epsilon  }{2}} , 1 \}  \non\\
            & \leq  504 \beta \cdot    16 P^{ \frac{ 4\alpha/3 -1 - \epsilon  }{2}} \cdot   16 P^{ \frac{ 4\alpha/3 -1   }{2}}   \non\\ 
             & =  504 \delta P^{- \frac{8\alpha/3 -2 }{2}}  \cdot    16 P^{ \frac{ 4\alpha/3 -1 - \epsilon  }{2}} \cdot   16 P^{ \frac{ 4\alpha/3 -1   }{2}}   \non\\   
                & =  129024 \delta   \cdot     P^{ - \frac{ \epsilon  }{2}}          \label{eq:LeBmeasure}
\end{align}
for a constant $\delta \in (0, 1]$, where
\begin{align*}
\tilde{Q}_1 &= \min\Bigl\{4 Q_{\max}, \   8 \cdot \frac{\max\{2 Q_{\max}, \  P^{1-  \alpha }\cdot 2 Q_{\max} \}}{P^{\frac{1-  \alpha}{2} }}\Bigr\} = \min\Bigl\{4 Q_{\max},   \  8 P^{\frac{1-  \alpha}{2} }\cdot 2 Q_{\max} \Bigr\} =4 Q_{\max}\\
\tilde{Q}_2 &\defeq \min\Bigl\{2 Q_{\max}, \   8\cdot \frac{\max\{2 Q_{\max},  \  P^{\frac{1-  \alpha}{2} } \cdot4 Q_{\max}\}}{ P^{1-  \alpha }}\Bigr\}  = 2 Q_{\max} \cdot \min\{1,  \ 16 P^{-\frac{1-  \alpha}{2} }\}.
\end{align*}
The set  $B$ defined in \eqref{eq:Boutage22} is the collection of  $(g_0, g_1, g_2) \in (1,4]^3$,  and  for any $(g_0, g_1, g_2) \in B$ there exists at least one triple  $(q_0, q_1, q_2) \in \{ q_0, q_1, q_2:  q_0, q_1, q_2 \in \Zc, (q_0, q_1, q_2) \neq  0,  |q_k| \leq Q_k  \ \forall k \}$, such that $|  g_0q_0 + A_1 g_1 q_1 + A_2 g_2 q_2    | <  \delta P^{- \frac{8\alpha/3 -2 }{2}}$.
Therefore,  $B$ can be considered as an outage set. 
For any  triple $(g_0, g_1, g_2)$  outside the outage set $B$, i.e., $(g_0, g_1, g_2)\notin B$, it is apparent that $d_{\min} (g_0, g_1, g_2)   \geq   \delta P^{- \frac{8\alpha/3 -2 }{2}}$. 

In our setting $g_0\defeq  h_{12} h_{21},      g_1\defeq   h_{12} h_{11},    g_2 \defeq h_{11}h_{22}$. 
Let us define  $\Ho$  as the collection of the quadruples $(h_{11}, h_{12}, h_{22}, h_{21} ) \in (1, 2]^{2\times 2}$   such that the corresponding triples $(g_0, g_1, g_2)$ are in the outage set $B$, that is,
\[ \Ho \defeq \{  (h_{11}, h_{12}, h_{22}, h_{21} ) \in (1, 2]^{2\times 2} :      (g_0, g_1, g_2) \in B  \} . \]
In the second step, the goal is to bound the Lebesgue measure of $\Ho$: 
\begin{align}
\Lc (\Ho ) & =  \int_{h_{11}=1}^2  \int_{h_{12}=1}^2 \int_{h_{21}=1}^2 \int_{h_{22}=1}^2    \mathbbm{1}_{\Ho}  ( h_{11}, h_{12}, h_{22}, h_{21}) d h_{22} d h_{21} d h_{12}  d h_{11}           \label{eq:LeBmeasure882441}\\  
& =  \int_{h_{11}=1}^2  \int_{h_{12}=1}^2 \int_{h_{21}=1}^2 \int_{h_{22}=1}^2    \mathbbm{1}_{B}  (h_{12} h_{21}, h_{12} h_{11}, h_{11}h_{22}) d h_{22} d h_{21} d h_{12}  d h_{11}          \non\\ 
& \leq    \int_{h_{11}=1}^2  \int_{g_{2}=1}^{4} \int_{g_{1} =1}^4 \int_{g_{0}=1}^4    \mathbbm{1}_{B}  (g_0, g_{1}, g_{2})  g_1^{-1} h_{11}^{-1} d g_{0} d g_{1} d g_{2}  d h_{11}          \non\\ 
& \leq    \int_{h_{11}=1}^2  \int_{g_{2}=1}^{4} \int_{g_{1} =1}^4 \int_{g_{0}=1}^4    \mathbbm{1}_{B}  (g_0, g_{1}, g_{2})  d g_{0} d g_{1} d g_{2}  d h_{11}          \non\\ 
& =    \int_{h_{11}=1}^2    \mathcal{L}(B)  d h_{11}          \non\\ 
& \leq     \int_{h_{11}=1}^2     129024 \delta   \cdot     P^{ - \frac{ \epsilon  }{2}}    d h_{11}           \label{eq:LeBmeasure0993}  \\ 
& =     258048 \delta   \cdot     P^{ - \frac{ \epsilon  }{2}}              \label{eq:LeBmeasure9999}  
\end{align}
where  $ \mathbbm{1}_{\Ho} (h_{11}, h_{12}, h_{22}, h_{21} )= 1 $ if  $(h_{11}, h_{12}, h_{22}, h_{21} ) \in  \Ho$, else   $\mathbbm{1}_{\Ho} (h_{11}, h_{12}, h_{22}, h_{21} )= 0$;  similarly, $ \mathbbm{1}_{B}  (g_0, g_1, g_2) =1$ if $(g_0, g_1, g_2) \in  B$, else $ \mathbbm{1}_{B}  (g_0, g_1, g_2) =0$;
and \eqref{eq:LeBmeasure0993} follows from the result in \eqref{eq:LeBmeasure}. Note that the approach on bounding $\Lc (\Ho )$ in the second step has been previously used in \cite{NM:13}.
At this point we complete the proof of this lemma.
 \end{proof}

 Lemma~\ref{lm:distance3432} shows that, when  $3/4 \leq \alpha \leq 1$, and given the signal design in \eqref{eq:constellationGsym1}-\eqref{eq:constellationGsym1u}
and \eqref{eq:para34111}-\eqref{eq:xvkkk1341},  the minimum distance $d_{\min}$ defined in \eqref{eq:minidis111} is sufficiently large
for all  the channel  coefficients $\{h_{k\ell}\} \in (1, 2]^{2\times 2} $ except for an outage set   $\Ho \subseteq (1,2]^{2\times 2}$ with Lebesgue measure  $\mathcal{L}(\Ho)$ satisfying  \[ \mathcal{L}(\Ho)  \to 0, \quad \text {as}\quad  P\to \infty \] 
(see \eqref{eq:LeBmeasure9999}).
In what follows, we will consider the channel  coefficients $\{h_{k\ell}\} \in (1, 2]^{2\times 2} $ that are not in the an outage set   $\Ho$. Under this channel condition, the minimum distance $d_{\min}$ defined in \eqref{eq:minidis111}  is bounded by  $d_{\min}    \geq   \delta P^{- \frac{8\alpha/3 -2 }{2}} $   (see \eqref{eq:distancegeq} in Lemma~\ref{lm:distance3432}) for a given constant $\delta \in (0, 1]$.

At this point, we go back to the expression of $y_1$ in \eqref{eq:yvk4835} and decode the sum \[x_{s}  \defeq  g_0 q_0 + \sqrt{P^{1-  \alpha }} g_1 q_1 + \sqrt{P^{2- 2\alpha}} g_2 q_2 \] by treating other signals as noise (noise removal).  Once  $x_{s} $ is decoded correctly, then $q_0, q_1, q_2$ can be recovered due to the fact that $g_0, g_1, g_2$ are rationally independent (signal separation, cf.~\cite{MGMK:14}).
Note that $y_1$ expressed in \eqref{eq:yvk4835} can also be rewritten as
\begin{align}
 y_1    &=    \sqrt{P^{ 5\alpha/3 - 1  + \epsilon}} \cdot 2\gamma  \underbrace{( g_0 q_0 + \sqrt{P^{1-  \alpha }} g_1 q_1 + \sqrt{P^{2- 2\alpha}} g_2 q_2 )}_{ \defeq x_{s}  }   +  \sqrt{P^{ 1 - \alpha}}  \bigl(\underbrace{h_{11}h_{22} v_{1,p}  +    \frac{1}{\sqrt{P^{ 1 - \alpha}}}h_{12} h_{11}  v_{2,p}\bigr)}_{\defeq  \tilde{g} }   +  z_{1}   \non\\
 &=    \sqrt{P^{ 5\alpha/3 - 1  + \epsilon}} \cdot 2\gamma \cdot x_{s}   +  \sqrt{P^{ 1 - \alpha}} \tilde{g}  +  z_{1}     \label{eq:yk182375}  
\end{align}
where $\tilde{g} \defeq h_{11}h_{22} v_{1,p}  +    \frac{1}{\sqrt{P^{ 1 - \alpha}}}h_{12} h_{11}  v_{2,p}$ and 
\[ |\tilde{g} | \leq  \tilde{g}_{\max} \defeq \sqrt{2}  \quad  \forall   \tilde{g} \]
for this case of  $3/4 \leq \alpha \leq 1$, and given the signal design in \eqref{eq:constellationGsym1}-\eqref{eq:constellationGsym1u}
and \eqref{eq:para34111}-\eqref{eq:xvkkk1341}.   Based on our definition, the minimum distance for $x_{s}$ is   $d_{\min}$ defined in \eqref{eq:minidis111}.  Lemma~\ref{lm:distance3432} reveals that, under the channel condition $\{h_{k\ell}\} \notin \Ho$, the  minimum distance for $x_{s}$ is bounded by  $d_{\min}    \geq   \delta P^{- \frac{8\alpha/3 -2 }{2}} $.
For this channel model in \eqref{eq:yk182375}, the probability of error for decoding $x_{s}$ from $y_1$ is 
 \begin{align}
 &\quad \  \text{Pr} [ x_s \neq \hat{x}_s ]     \non\\
  &\leq     \text{Pr} \Bigl[   | z_1  + P^{ \frac{1 - \alpha}{2}} \tilde{g}|  >    P^{ \frac{5\alpha/3 - 1  + \epsilon}{2}} \cdot 2\gamma   \cdot \frac{d_{\text{min}} }{2}  \Bigr]   \non \\
    &=   \text{Pr} [   z_1   >       P^{ \frac{5\alpha/3 - 1  + \epsilon}{2}} \cdot 2\gamma   \cdot \frac{d_{\text{min}} }{2} -  P^{ \frac{1 - \alpha}{2}} \tilde{g} ]  +  \text{Pr} [     z_1   >    P^{ \frac{5\alpha/3 - 1  + \epsilon}{2}} \cdot 2\gamma   \cdot \frac{d_{\text{min}} }{2}  +  P^{ \frac{1 - \alpha}{2}} \tilde{g}    ]     \label{eq:error2244cc}  \\
  & \leq       \text{Pr} [  z_1   >  P^{ \frac{5\alpha/3 - 1  + \epsilon}{2}} \cdot 2\gamma   \cdot \frac{d_{\text{min}} }{2}   -  P^{ \frac{1 - \alpha}{2}} \tilde{g}_{\max}   ]  +  \text{Pr} [  z_1   >  P^{ \frac{5\alpha/3 - 1  + \epsilon}{2}} \cdot 2\gamma   \cdot \frac{d_{\text{min}} }{2}   -  P^{ \frac{1 - \alpha}{2}} \tilde{g}_{\max}   ]  \label{eq:error2256cc} \\
    & =   2  \cdot     {\bf{Q}} \bigl(  P^{ \frac{5\alpha/3 - 1  + \epsilon}{2}} \cdot 2\gamma   \cdot \frac{d_{\text{min}} }{2}   -  P^{ \frac{1 - \alpha}{2}} \tilde{g}_{\max}  \bigr)    \non   \\ 
    & \leq    2  \cdot     {\bf{Q}} \bigl(  P^{ \frac{1 - \alpha}{2}} ( \gamma \delta  P^{ \frac{ \epsilon}{2}}   -  \sqrt{2})  \bigr)     \label{eq:error9982cc} 
  \end{align}
  where $\hat{x}_s$ is the estimate for $ x_s$ by choosing the point close to $ x_s$, based on the observation $y$;
  $d_{\min}$ is defined in \eqref{eq:minidis111}; 
  $\delta \in (0, 1]$ and $\gamma  \in \bigl(0, \frac{1}{4\sqrt{2}}\bigr]$ are two constants;
  \eqref{eq:error2244cc} follows from the fact that  $1- {\bf{Q}} (a)  = {\bf{Q}} (- a )$ for any $a \in \Rc$,  where the ${\bf{Q}}$-function is defined as ${\bf{Q}}(a )  \defeq  \frac{1}{\sqrt{2\pi}} \int_{a}^{\infty}  \exp( -\frac{ s^2}{2} ) d s$;
\eqref{eq:error2256cc} uses the fact that  $ |\tilde{g} | \leq  \tilde{g}_{\max} \defeq \sqrt{2},    \forall   \tilde{g}$;
\eqref{eq:error9982cc} follows from the result that $d_{\min}    \geq   \delta P^{- \frac{8\alpha/3 -2 }{2}} $ (see Lemma~\ref{lm:distance3432}).
When $   \gamma \delta  P^{ \frac{ \epsilon}{2}}   -  \sqrt{2} \geq 0 $,  the error probability $\text{Pr} [ x_s \neq \hat{x}_s ]$ can be further bounded as
  \begin{align}
  \text{Pr} [ x_s \neq \hat{x}_s ]   &\leq    \exp \Bigl(    -  \frac{ P^{1 - \alpha}  \bigl(  \gamma \delta  P^{ \frac{ \epsilon}{2}}   -  \sqrt{2} \bigr)^2 }{2} \Bigr)  \non
   \end{align}
 by using the identity that  $ {\bf{Q}} (a ) \leq   \frac{1}{2}\exp ( -  a^2 /2 ),  \    \forall a \geq 0.$ 
At this point, we can conclude that  the error probability for decoding $x_{s}$ from $y_1$ is 
 \begin{align}
 \text{Pr} [ x_s \neq \hat{x}_s ] \to 0  \quad  \text{as} \quad   P\to \infty.    \label{eq:error885256}                           
 \end{align}
Once  $x_{s} $ is decoded correctly, then the three symbols $q_0  =  \frac{Q_{\max}}{2\gamma} \cdot   u_{2}$,  $q_1  =  \frac{Q_{\max}}{2\gamma} \cdot   (   v_{2,c}  +  u_1)$ and  $ q_2  =  \frac{ Q_{\max}}{2\gamma} \cdot  v_{1,c}$ can be recovered from $x_{s}  =  g_0 q_0 + \sqrt{P^{1-  \alpha }} g_1 q_1 + \sqrt{P^{2- 2\alpha}} g_2 q_2 $ due to the fact that $g_0, g_1, g_2$ are rationally independent.

In the next step, we remove the decoded $x_{s}$ from $y_1$  (see \eqref{eq:yk182375}) and then decode $v_{1,p}$ from the following observation
\begin{align}
 y_1  -  \sqrt{P^{ 5\alpha/3 - 1  + \epsilon}} \cdot 2\gamma \cdot x_{s} &=  \sqrt{P^{ 1 - \alpha}}  h_{11}h_{22} v_{1,p}  + h_{12} h_{11}  v_{2,p}    +  z_{1} .    \non
\end{align}
Note that $v_{1,p}, v_{2,p} \in    \Omega (\xi   =\gamma \cdot \frac{ 1}{Q},   \   Q = P^{ \frac{ 1 - \alpha - \epsilon}{2}} )$  and $h_{12} h_{11}  v_{2,p} \leq  \frac{1}{\sqrt{2}}$. Then, from Lemma~\ref{lm:AWGNic} (see Section~\ref{sec:CJGau}) we conclude that the error probability for decoding $v_{1,p}$  is 
 \begin{align}
  \text{Pr} [ v_{1,p} \neq \hat{v}_{1,p} ]     \to 0  \quad  \text{as} \quad   P\to \infty . \label{eq:error155525}                           
 \end{align}
 
By combining the results in \eqref{eq:error885256} and \eqref{eq:error155525}, as well as the fact that $v_{1,c}$ can be recovered from $x_s$, we conclude that  the error probability of estimating  $\{v_{k,c}, v_{k,p} \}$ from $y_k$  is
 \begin{align}
 \text{Pr} [  \{ v_{k,c} \neq \hat{v}_{k,c} \} \cup  \{ v_{k,p} \neq \hat{v}_{k,p} \}  ]  \to 0         \quad \text {as}\quad  P\to \infty    \label{eq:error1c1p0595}
 \end{align}
for $k=1$ and $3/4 \leq \alpha \leq 1$,  given the signal design in \eqref{eq:constellationGsym1}-\eqref{eq:constellationGsym1u} and \eqref{eq:para34111}-\eqref{eq:xvkkk1341}.  This result holds true  for  all the channel coefficients  $\{h_{k\ell}\} \in (1, 2]^{2\times 2}$  except  for an outage set   $\Ho \subseteq (1,2]^{2\times 2}$ with Lebesgue measure  $\mathcal{L}(\Ho)$ satisfying  \[ \mathcal{L}(\Ho)  \to 0, \quad \text {as}\quad  P\to \infty \] 
(see \eqref{eq:LeBmeasure9999}).
Due to the symmetry, the result in \eqref{eq:error1c1p0595} also holds true for the case of $k=2$. At this point, we complete the proof.

\section{Converse \label{sec:converse}}

For the Gaussian interference channel defined in Section~\ref{sec:systemGaussian}, we  provide an outer bound on  the secure   capacity region, which is stated in the following  lemma.

\begin{lemma}  \label{lm:gupper}
For the two-user Gaussian  interference channel defined in Section~\ref{sec:systemGaussian},  the  secure  capacity region is  bounded by
\begin{align}
 R_1 + R_2  & \leq  
 \frac{1}{2} \log \Bigl(1+ P^{\alpha_{22}-\alpha_{12}} \cdot \frac{|h_{22}|^2}{|h_{12}|^2} +  P^{\alpha_{21} -(\alpha_{11} - \alpha_{12} )^+} \cdot \frac{|h_{21}|^2}{|h_{11}|^2}  \Bigr)      \non \\
  & \quad +  \frac{1}{2} \log \Bigl(1+ P^{\alpha_{11}-\alpha_{21}} \cdot \frac{|h_{11}|^2}{|h_{21}|^2} +  P^{\alpha_{12} -(\alpha_{22} - \alpha_{21} )^+} \cdot \frac{|h_{12}|^2}{|h_{22}|^2}  \Bigr) +   \log 10  \label{eq:gaussianupbound}   \\
 2R_1 + R_2  & \leq   \frac{1}{2} \log \bigl(1\!+\!   \frac{P^{ (\alpha_{11} - \alpha_{21} )^+}}{|h_{21}|^2}   \bigr)    +   \frac{1}{2} \log \bigl(1\!+\!   \frac{P^{ (\alpha_{22} - \alpha_{12} )^+}}{|h_{12}|^2}   \bigr)   + \frac{1}{2} \log \bigl(1\!+\! P^{\alpha_{11}}   |h_{11}|^2  \!+\! P^{\alpha_{12}}   |h_{12}|^2  \bigr)        \!+\! \log 9    \label{eq:detup4} \\ 
2R_2 + R_1  & \leq    \frac{1}{2} \log \bigl(1\!+\!   \frac{P^{ (\alpha_{22} - \alpha_{12} )^+}}{|h_{12}|^2}   \bigr)    +   \frac{1}{2} \log \bigl(1\!+\!   \frac{P^{ (\alpha_{11} - \alpha_{21} )^+}}{|h_{12}|^2}   \bigr)   + \frac{1}{2} \log \bigl(1\!+\! P^{\alpha_{22}}   |h_{22}|^2  \!+\! P^{\alpha_{21}}   |h_{21}|^2  \bigr)        \!+\!  \log 9    \label{eq:detup5}   \\
 R_1  & \leq      \frac{1}{2} \log \Bigl(1+ P^{\alpha_{11}-\alpha_{21}} \cdot \frac{|h_{11}|^2}{|h_{21}|^2} +  P^{\alpha_{22} + \alpha_{11} - \alpha_{21} } \cdot \frac{|h_{11}|^2 |h_{22}|^2}{|h_{21}|^2}  \Bigr)     \label{eq:detup1} \\ 
R_2 & \leq      \frac{1}{2} \log \Bigl(1+ P^{\alpha_{22}-\alpha_{12}} \cdot \frac{|h_{22}|^2}{|h_{12}|^2} +  P^{\alpha_{11} + \alpha_{22} - \alpha_{12} } \cdot \frac{|h_{22}|^2 |h_{11}|^2}{|h_{12}|^2}  \Bigr)        \label{eq:detup2} \\ 
  R_1  & \leq      \frac{1}{2} \log \bigl(1+ P^{\alpha_{11}} \cdot |h_{11}|^2 \bigr)     \label{eq:detup11} \\ 
    R_2  & \leq      \frac{1}{2} \log \bigl(1+ P^{\alpha_{22}} \cdot |h_{22}|^2 \bigr) .    \label{eq:detup22} 
\end{align}
\end{lemma}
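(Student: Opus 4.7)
Plan. The seven bounds in the lemma split into four groups, each handled by combining Fano's inequality, the secrecy constraints \eqref{eq:defsecrecy1}--\eqref{eq:defsecrecy2}, a choice of genie signal, and the fact that Gaussian inputs maximize differential entropy under a second-moment constraint.

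The direct-link bounds \eqref{eq:detup11} and \eqref{eq:detup22} are the easiest: Fano gives $nR_k \leq I(W_k; y_k^n) + n\epsilon \leq I(x_k^n; y_k^n) + n\epsilon$, and I would then overbound the interference plus noise at receiver $k$ by an i.i.d.\ $\mathcal{N}(0,1)$ sequence and maximize over unit-power Gaussian $x_k^n$.

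For the wiretap-type bounds \eqref{eq:detup1} and \eqref{eq:detup2}, I would combine Fano with the secrecy constraint \eqref{eq:defsecrecy2} to write
\begin{equation*}
nR_1 \leq H(W_1 \,|\, y_2^n) + n\epsilon \leq I(W_1; y_1^n \,|\, y_2^n) + 2n\epsilon \leq I(x_1^n; y_1^n \,|\, y_2^n) + 2n\epsilon,
\end{equation*}
then enhance the receiver by replacing $y_2^n$ with the noise-free side information $\hat y_2^n = \sqrt{P^{\alpha_{21}}}h_{21}\, x_1^n + \sqrt{P^{\alpha_{22}}}h_{22}\, x_2^n$, single-letterize by the standard concavity/Gaussian-maximizes-entropy argument, and evaluate on i.i.d.\ unit-power Gaussian $(x_1,x_2)$. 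The extra factor $1 + |h_{22}|^2 P^{\alpha_{22}}$ appearing inside the log in \eqref{eq:detup1} reflects that, with the noise-free $\hat y_2$ as genie, the receiver can cancel $x_2$'s contribution from $y_1$ up to its Gaussian prior variance.

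The sum-rate bound \eqref{eq:gaussianupbound} is a secrecy-augmented Etkin--Tse--Wang (ETW) converse. Starting from
\begin{equation*}
n(R_1 + R_2) \leq I(W_1; y_1^n \,|\, y_2^n) + I(W_2; y_2^n \,|\, y_1^n) + O(n\epsilon),
\end{equation*}
I would inject, for each term, a Gaussian genie whose variance matches the residual interference at the intended receiver, so that after single-letterization the two mutual informations split additively into the two $\tfrac{1}{2}\log(\cdot)$ summands in \eqref{eq:gaussianupbound}. The exponents $(\alpha_{11}-\alpha_{12})^+$ and $(\alpha_{22}-\alpha_{21})^+$ emerge from tuning the genie power to the weaker of the direct and cross links at each receiver, exactly as in the classical ETW argument. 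The weighted bounds \eqref{eq:detup4} and \eqref{eq:detup5} then follow by adding one copy of the wiretap-type bound on $R_1$ (resp.\ $R_2$) to the secrecy-sum argument, and closing off one of the resulting conditional mutual informations by a MAC cut-set at the appropriate receiver, which contributes the $\tfrac{1}{2}\log\!\bigl(1+P^{\alpha_{11}}|h_{11}|^2+P^{\alpha_{12}}|h_{12}|^2\bigr)$-type term.

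The main obstacle will be the sum-rate step: choosing the genies so that both the secrecy difference $I(W_k;y_k^n)-I(W_k;y_\ell^n)$ single-letterizes cleanly \emph{and} the exponents $(\alpha_{kk}-\alpha_{k\ell})^+$ emerge naturally requires separating, at each receiver, the ``strong''-versus-``weak''-interference subregimes---the $(\cdot)^+$ structure signals that when the direct link is weaker than the cross link the genie provides essentially no help. Two secondary subtleties must also be watched: the Markov chain $W_k \to x_k^n \to y_\ell^n$ must remain valid after enhancement, even though the encoder maps $(W_k, W_k', W_k'')$ rather than $W_k$ alone; and all additive constants must be tracked so that the final bound stays within the $11$-bit gap claimed in Theorem~\ref{thm:GaussianNCJ}.
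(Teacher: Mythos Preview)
Your overall structure is right—Fano, secrecy, genie, Gaussian-maximizes-entropy—but there is a concrete gap in your plan for \eqref{eq:detup1}--\eqref{eq:detup2}. After reaching $nR_1 \le I(x_1^n; y_1^n \mid y_2^n) + 2n\epsilon$, you propose to ``enhance the receiver by replacing $y_2^n$ with the noise-free side information $\hat y_2^n$.'' But making the \emph{conditioning} variable more informative does not upper-bound a conditional mutual information: one only has
\[
I(x_1^n; y_1^n \mid y_2^n) \;\le\; I(x_1^n; y_1^n \mid \hat y_2^n) \;+\; I(x_1^n; \hat y_2^n \mid y_2^n),
\]
and the residual term $I(x_1^n; \hat y_2^n \mid y_2^n)$ scales like $\tfrac12\log P$, so the bound is useless. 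The paper instead enhances the \emph{first} argument: it uses the Markov chain $w_1 \to \{\sqrt{P^{\alpha_{11}}}h_{11}x_1(t)+z_1(t)\}_t \to y_1^n$ to replace $y_1^n$ by the interference-free signal, \emph{then} brings in the actual noisy $y_2^n$ as genie and subtracts a scaled copy; what survives is a function of $x_2$ and the two noises, which single-letterizes to exactly \eqref{eq:detup1}. The order of operations (clean $y_1$ first, then condition on real $y_2$) is the point you are missing.

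For \eqref{eq:gaussianupbound}, your ``secrecy-augmented ETW'' description is in the right spirit but underspecifies the key step. The paper does not work with $I(W_k; y_k^n\mid y_\ell^n)$ directly; it reduces the sum rate to the entropy difference $\hen(y_2^n\mid w_1)-\hen(y_1^n\mid w_1)$ (plus the symmetric term) and then injects the auxiliary signal $s_{11}(t)=\sqrt{P^{(\alpha_{11}-\alpha_{12})^+}}h_{11}x_1(t)+\tilde z_1(t)$ on \emph{both} sides of the difference, so that after a chain-rule telescoping the difference collapses to $-I(y_1^n;x_1^n\mid s_{11}^n,w_1)$ plus four boundary terms $J_1,\dots,J_4$ that are each $O(n)$ with explicit constants (this is where $\log 10$ comes from). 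The exponent $(\alpha_{11}-\alpha_{12})^+$ is baked into the \emph{definition} of $s_{11}$, not recovered by a case split. Similarly, for \eqref{eq:detup4}--\eqref{eq:detup5} the paper's genie is a noisy copy $\tilde x_k^n = \sqrt{P^{\max\{\alpha_{kk},\alpha_{\ell k}\}}}x_k^n+\tilde z_k^n$ of \emph{each} input, not a MAC cut-set; the $\tfrac12\log(1+P^{(\alpha_{11}-\alpha_{21})^+}/|h_{21}|^2)$ terms arise from bounding $\hen(\tilde x_1^n)-\hen(s_{21}^n)$ via an MMSE-type subtraction. Your plan would need these specific genies to close.
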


 \vspace{5pt}
Before providing the proof of  Lemma~\ref{lm:gupper}, let us provide a result on the secure sum GDoF derived from Lemma~\ref{lm:gupper}. 

\begin{corollary} [Secure sum GDoF] \label{cor:symSGDoF}
For the two-user   Gaussian  interference channel,   the secure sum GDoF is upper  bounded by 
\begin{align}
d_{\text{sum}}   & \leq      ( \alpha_{11}  -  ( \alpha_{21} - \alpha_{22} )^+ )^+   + ( \alpha_{22}  -  ( \alpha_{12} - \alpha_{11} )^+ )^+  \label{eq:Gup12}  \\ 
d_{\text{sum}}   & \leq    \max\{  \alpha_{21}- (\alpha_{11 } - \alpha_{12})^+ ,  \  \alpha_{22}- \alpha_{12}, \ 0  \}      + \max\{  \alpha_{12}- (\alpha_{22 } - \alpha_{21})^+ ,  \  \alpha_{11}- \alpha_{21},  \ 0 \}     \label{eq:Gup3} \\ 
d_{\text{sum}}   & \leq \frac{1}{3}  \Bigl( \max\{ \alpha_{11}, \alpha_{12} \}  +  (  \alpha_{11}  - \alpha_{21})^+   +    (  \alpha_{22}  - \alpha_{12})^+   \non \\ 
    & \quad \  +    \max\{ \alpha_{22}, \alpha_{21} \}  +  (  \alpha_{11}  - \alpha_{21})^+   +    (  \alpha_{22}  - \alpha_{12})^+  \Bigr).   \label{eq:Gup45}  
\end{align}
\end{corollary}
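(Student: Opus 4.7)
The plan is to deduce each of the three GDoF upper bounds directly from the finite-$P$ rate bounds stated in Lemma~\ref{lm:gupper} by dividing through by $\frac{1}{2}\log P$ and passing to $P\to\infty$. The underlying routine identity is that for any finite positive constants $c_i$ and any real exponents $a_i$ independent of $P$,
\[
\lim_{P\to\infty}\frac{\frac{1}{2}\log\bigl(1+\sum_{i=1}^{m} c_i P^{a_i}\bigr)}{\frac{1}{2}\log P}=\max\{0,a_1,\ldots,a_m\},
\]
so that every additive constant (such as $\log 10$ or $\log 9$) and every fixed channel-phase factor $|h_{k\ell}|^2\in(1,4]$ vanishes in the limit. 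Coupled with the definition of $d_{\text{sum}}$ in \eqref{eq:defGDoF}, each rate inequality in Lemma~\ref{lm:gupper} becomes a GDoF inequality of the stated form.

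For \eqref{eq:Gup12}, I would first convert \eqref{eq:detup1} and \eqref{eq:detup11} into GDoF bounds on $d_1$, obtaining $d_1\le\max\{0,\alpha_{11}-\alpha_{21},\alpha_{22}+\alpha_{11}-\alpha_{21}\}$ from the former and $d_1\le\alpha_{11}$ from the latter. Since $\alpha_{22}\ge0$, the first bound collapses to $(\alpha_{11}+\alpha_{22}-\alpha_{21})^+$, and a short two-case analysis on the sign of $\alpha_{21}-\alpha_{22}$ verifies the identity
\[
\min\bigl\{\alpha_{11},\,(\alpha_{11}+\alpha_{22}-\alpha_{21})^+\bigr\}=\bigl(\alpha_{11}-(\alpha_{21}-\alpha_{22})^+\bigr)^+.
\]
The symmetric argument applied to \eqref{eq:detup2} and \eqref{eq:detup22} yields the companion bound on $d_2$, and summing the two produces \eqref{eq:Gup12}.

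For \eqref{eq:Gup3}, I would simply take the GDoF limit of the single sum-rate bound \eqref{eq:gaussianupbound}. The three exponents inside the first logarithm yield the first $\max\{\cdot\}$ in \eqref{eq:Gup3} and the three exponents inside the second logarithm yield the second $\max\{\cdot\}$, so no further manipulation is needed. For \eqref{eq:Gup45}, I would add \eqref{eq:detup4} and \eqref{eq:detup5} term by term, producing a finite-$P$ bound on $3(R_1+R_2)$, then take the GDoF limit and divide by $3$: the two $(\alpha_{11}-\alpha_{21})^+$ contributions (one from each inequality) give the coefficient $2(\alpha_{11}-\alpha_{21})^+$, similarly $2(\alpha_{22}-\alpha_{12})^+$, while the $\max\{\alpha_{11},\alpha_{12}\}$ and $\max\{\alpha_{22},\alpha_{21}\}$ terms come from the remaining logarithms; this reproduces \eqref{eq:Gup45} exactly.

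I do not anticipate any genuine obstacle: the entire corollary is an asymptotic translation of already-proved inequalities, and the only non-mechanical piece of bookkeeping is the sign case-split used to simplify $\min\{\alpha_{11},(\alpha_{11}+\alpha_{22}-\alpha_{21})^+\}$ into the $(\,\cdot\,)^+$-nested form appearing in \eqref{eq:Gup12}. Because Lemma~\ref{lm:gupper} already supplies all three rate bounds in a form designed to match the targeted GDoF expressions, no new converse ideas are required at this step.
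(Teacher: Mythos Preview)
Your proposal is correct and follows exactly the approach the paper uses: it derives \eqref{eq:Gup12} from \eqref{eq:detup1}--\eqref{eq:detup22}, \eqref{eq:Gup3} from \eqref{eq:gaussianupbound}, and \eqref{eq:Gup45} from the sum of \eqref{eq:detup4} and \eqref{eq:detup5}, each time taking the GDoF limit. Your write-up is in fact more detailed than the paper's one-line proof, including the explicit case-split verifying $\min\{\alpha_{11},(\alpha_{11}+\alpha_{22}-\alpha_{21})^+\}=(\alpha_{11}-(\alpha_{21}-\alpha_{22})^+)^+$, which the paper leaves implicit.
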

\begin{proof}
The proof is based on the result of Lemma~\ref{lm:gupper}, by focusing on the secure sum GDoF measurement. 
Specifically, \eqref{eq:Gup12} results from \eqref{eq:detup1}-\eqref{eq:detup22}; \eqref{eq:Gup3} stems from \eqref{eq:gaussianupbound}; while \eqref{eq:Gup45} follows from the combination of \eqref{eq:detup4} and \eqref{eq:detup5}.
\end{proof}

Note that, for the symmetric case with $\alpha_{11}= \alpha_{22}=1$ and $\alpha_{21}= \alpha_{12}=\alpha$,  the secure sum GDoF bound described in Corollary~\ref{cor:symSGDoF} is simplified as 
\[ d_{\text{sum}} \leq 2\cdot  \min \Bigl\{ \max\{  \alpha- (1 - \alpha)^+ ,  \  1- \alpha  \}  , \   (1 -  ( \alpha - 1 )^+ )^+, \    \frac{\max\{ 1, \alpha \}  +  2(  1  - \alpha)^+   }{3}  \Bigr\} \]
which serves as the converse of Theorem~\ref{thm:GDoF}. 
In the following we will provide the proof of Lemma~\ref{lm:gupper}.

\subsection{Proof of  bound \eqref{eq:gaussianupbound}  \label{sec:conBgaussianupbound}}

Let us now prove  bound \eqref{eq:gaussianupbound} in Lemma~\ref{lm:gupper}.  
At first we define that 
\begin{align}
  s_{kk}(t)  & \defeq   \sqrt{P^{(\alpha_{kk}-\alpha_{k\ell})^+}} h_{kk} x_{k}(t) +  \tilde{z}_{k}(t)    \label{eq:defs11} \\
 s_{\ell k}(t)  & \defeq   \sqrt{P^{\alpha_{\ell k}}} h_{\ell k} x_{k}(t)  +z_{\ell}(t),  \label{eq:defs12}
\end{align}
 for $k,\ell   \in \{1,2\}, k \neq \ell$,  where  $\tilde{z}_{k}(t) \sim \mathcal{N}(0, 1)$ is a virtual noise that is independent of the other noise and transmitted signals.  Let $s^{\bln}_{kk} \defeq \{ s_{kk}(t)\}_{t=1}^{\bln}$ and $s^{\bln}_{\ell k} \defeq \{ s_{\ell k}(t)\}_{t=1}^{\bln}$.
We begin with the rate of user~1:
\begin{align}
  \bln R_1 &= \Hen(w_1)    \nonumber \\
  &= \Imu(w_1; y^{\bln}_1) + \Hen(w_1| y^{\bln}_1)  \nonumber \\
  &\leq  \Imu(w_1; y^{\bln}_1) + \bln \epsilon_{1,n}    \label{eq:Fano}  \\
  &\leq  \Imu(w_1;  y^{\bln}_1) - \Imu(w_1; y^{\bln}_2) +  \bln \epsilon_{1,n} + \bln \epsilon   \label{eq:secrecy} 
\end{align}
where \eqref{eq:Fano} follows from  Fano's inequality,  $\lim_{n\to\infty} \epsilon_{1,n} = 0$;
\eqref{eq:secrecy} results from the secrecy constraint, i.e.,  $ \Imu(w_1; y^{\bln}_2)  \leq \bln \epsilon$ for an  arbitrary small  $\epsilon$. Similarly, for the rate of user~2 we have 
\begin{align}
  \bln R_2 & \leq  \Imu(w_2; y^{\bln}_2) - \Imu(w_2 ; y^{\bln}_1) +  \bln \epsilon_{2,n}  + \bln \epsilon    \label{eq:secrecy2} 
\end{align}
which, together with \eqref{eq:secrecy}, gives the following bound on the sum rate:
\begin{align}
 & \bln R_1 + \bln R_2 -\bln \epsilon_{1,n} -\bln \epsilon_{2,n} -2 \bln \epsilon     \nonumber \\
   \leq &\Imu(w_1; y^{\bln}_1) - \Imu(w_1;  y^{\bln}_2)  +  \Imu(w_2 ;  y^{\bln}_2) - \Imu(w_2 ;  y^{\bln}_1)     \nonumber \\
   = & \hen(y^{\bln}_1) - \hen(y^{\bln}_1| w_1)  -    \hen(y^{\bln}_2) +  \hen(y^{\bln}_2| w_1)   + \hen(y^{\bln}_2) - \hen(y^{\bln}_2| w_2)  -    \hen(y^{\bln}_1) +  \hen(y^{\bln}_1| w_2)       \nonumber \\
   = & \hen(y^{\bln}_2| w_1)  - \hen(y^{\bln}_1| w_1)   +  \hen(y^{\bln}_1| w_2)    - \hen(y^{\bln}_2| w_2).     \label{eq:sumrate1} 
\end{align}
To bound the right-hand side of \eqref{eq:sumrate1}, we provide the following lemma. 

 \vspace{10pt}
\begin{lemma}  \label{lm:differencebG}   
For $y_{1}(t)$ and $y_{2}(t)$ expressed in \eqref{eq:channelG}, we have
\begin{align} 
   \hen(y^{\bln}_2| w_1)  - \hen(y^{\bln}_1| w_1)   \leq &  \frac{n}{2} \log \Bigl(1+ P^{\alpha_{22}-\alpha_{12}} \cdot \frac{|h_{22}|^2}{|h_{12}|^2} +  P^{\alpha_{21} -(\alpha_{11} - \alpha_{12} )^+} \cdot \frac{|h_{21}|^2}{|h_{11}|^2}  \Bigr) +    \frac{n}{2} \log 10   \label{eq:differenceb11G}  \\  
 \hen(y^{\bln}_1| w_2)  - \hen(y^{\bln}_2| w_2)  \leq  &   \frac{n}{2} \log \Bigl(1+ P^{\alpha_{11}-\alpha_{21}} \cdot \frac{|h_{11}|^2}{|h_{21}|^2} +  P^{\alpha_{12} -(\alpha_{22} - \alpha_{21} )^+} \cdot \frac{|h_{12}|^2}{|h_{22}|^2}  \Bigr) +    \frac{n}{2} \log 10. \label{eq:differenceb22G}      
\end{align}
\end{lemma}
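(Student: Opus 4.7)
By symmetry, it suffices to prove \eqref{eq:differenceb11G}; \eqref{eq:differenceb22G} then follows by swapping the roles of users~$1$ and~$2$ (and replacing $s_{11}$ by $s_{22}$). My plan is to use the auxiliary noisy signal $s_{11}^{\bln}$ defined in \eqref{eq:defs11} as a common ``pivot'' between $y_1^{\bln}$ and $y_2^{\bln}$, so that a chain-rule decomposition cancels $\hen(s_{11}^{\bln}|w_1)$: combining $\hen(y_2^{\bln}|w_1)\leq \hen(y_2^{\bln},s_{11}^{\bln}|w_1)$ with $\hen(y_1^{\bln}|w_1)=\hen(s_{11}^{\bln}|w_1)+\hen(y_1^{\bln}|s_{11}^{\bln},w_1)-\hen(s_{11}^{\bln}|y_1^{\bln},w_1)$ yields
\begin{align*}
\hen(y_2^{\bln}|w_1)-\hen(y_1^{\bln}|w_1)\leq \bigl[\hen(y_2^{\bln}|s_{11}^{\bln},w_1)-\hen(y_1^{\bln}|s_{11}^{\bln},w_1)\bigr]+\hen(s_{11}^{\bln}|y_1^{\bln},w_1).
\end{align*}

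Next, I would evaluate the bracketed difference. Using the invertible relation $x_1=(s_{11}-\tilde z_1)/(\sqrt{P^{(\alpha_{11}-\alpha_{12})^+}}\,h_{11})$ and substituting into \eqref{eq:channelG}, I obtain
\begin{align*}
y_2-c_{21}s_{11}=\sqrt{P^{\alpha_{22}}}h_{22}x_2+z_2-c_{21}\tilde z_1,\qquad y_1-c_{11}s_{11}=\sqrt{P^{\alpha_{12}}}h_{12}x_2+z_1-c_{11}\tilde z_1,
\end{align*}
with $c_{21}\defeq\sqrt{P^{\alpha_{21}-(\alpha_{11}-\alpha_{12})^+}}\,h_{21}/h_{11}$ and $c_{11}\defeq\sqrt{P^{\min\{\alpha_{11},\alpha_{12}\}}}$. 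Since $\hen(y_k^{\bln}|s_{11}^{\bln},w_1)=\hen(y_k^{\bln}-c_{k1}s_{11}^{\bln}|s_{11}^{\bln},w_1)$, the bracketed term reduces to a difference of the form $\hen(aX+N_2)-\hen(bX+N_1)$ with common $X=x_2$. Applying the scaling identity $\hen(aX+Z)-\hen(bX+Z')=\log|a/b|+[\hen(X+Z/a)-\hen(X+Z'/b)]$ together with the monotonicity of $\hen(X+N)$ in $\mathrm{Var}(N)$, and treating the independent perturbation $-c_{21}\tilde z_1+z_2$ in $y_2-c_{21}s_{11}$ via Gaussian maximum entropy, I expect to bound the bracketed difference by $\tfrac{\bln}{2}\log\bigl(1+P^{\alpha_{22}-\alpha_{12}}|h_{22}|^2/|h_{12}|^2+P^{\alpha_{21}-(\alpha_{11}-\alpha_{12})^+}|h_{21}|^2/|h_{11}|^2\bigr)$ up to an additive constant.

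For the residual $\hen(s_{11}^{\bln}|y_1^{\bln},w_1)$, I would take the linear projection $s_{11}-c'y_1$ with $c'\defeq\sqrt{P^{-\min\{\alpha_{11},\alpha_{12}\}}}$, which annihilates the $x_1$ component; its per-symbol variance is at most $2+|h_{12}|^2\cdot P^{(\alpha_{12}-\alpha_{11})^+}$, so the Gaussian envelope contributes (together with the $|h_{k\ell}|^2\leq 4$ constants accumulated throughout) at most $\frac{\bln}{2}\log 10$ to the final additive slack after regrouping the logarithms.

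The main obstacle is the tightness in the middle step: a direct Gaussian maximum-entropy bound on $\hen(y_2^{\bln}|s_{11}^{\bln},w_1)$ alone yields only the coarser factor $P^{\alpha_{22}}|h_{22}|^2$, not the target ratio $P^{\alpha_{22}-\alpha_{12}}|h_{22}|^2/|h_{12}|^2$; the ratio emerges only by handling the two entropies together \emph{as a difference} via the scaling identity, which in turn requires a case split on the sign of $\alpha_{22}-\alpha_{12}$ that is then unified through the inequality $\log\max\{1,t\}\leq\log(1+t)$.
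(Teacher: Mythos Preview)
Your single–auxiliary approach via $s_{11}^{\bln}$ is genuinely different from the paper's route, which introduces \emph{two} auxiliaries $s_{11}^{\bln}$ and $s_{12}^{\bln}$ and exploits the identity $\hen(s_{12}^{\bln}|s_{11}^{\bln},w_1)=\hen(y_1^{\bln}|x_1^{\bln},s_{11}^{\bln},w_1)$ to collapse the difference into $-\Imu(y_1^{\bln};x_1^{\bln}|s_{11}^{\bln},w_1)-J_1+J_2-J_3+J_4$; the four residual terms are then bounded separately, and in particular $J_4=\hen(y_2^{\bln}|s_{12}^{\bln},s_{11}^{\bln},w_1)$ is handled by subtracting off \emph{both} $x_1$ and $x_2$, so that only Gaussian noise remains and the target ratio drops out directly. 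Your bracket argument gets the same ratio by a different mechanism: after dropping the conditioning on $s_{11}^{\bln},w_1$ (upward for the $y_2$ term, downward via further conditioning on $\tilde z_1^{\bln}$ for the $y_1$ term), the scaling identity plus Gaussian monotonicity in the noise variance gives $\hen(ax_2^{\bln}+\sigma_A z^{\bln})-\hen(bx_2^{\bln}+z^{\bln})\leq\frac{n}{2}\log\max\{\sigma_A^2,a^2/b^2\}$ via the case split $a/\sigma_A\lessgtr b$, which is indeed $\leq\frac{n}{2}\log(1+c_{21}^2+a^2/b^2)$. This is correct and arguably more elementary than the paper's double-auxiliary trick.

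There is, however, a concrete error in your treatment of $\hen(s_{11}^{\bln}|y_1^{\bln},w_1)$. With your choice $c'=\sqrt{P^{-\min\{\alpha_{11},\alpha_{12}\}}}$ you do annihilate the $x_1$ component, but the surviving $x_2$ coefficient in $s_{11}-c'y_1$ is $-\sqrt{P^{(\alpha_{12}-\alpha_{11})^+}}h_{12}$, and your own variance expression $2+|h_{12}|^2P^{(\alpha_{12}-\alpha_{11})^+}$ is \emph{not} a constant when $\alpha_{12}>\alpha_{11}$; no ``regrouping of logarithms'' absorbs this, since the target bound contains no $P^{(\alpha_{12}-\alpha_{11})^+}$ factor. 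The fix is to take instead $c'=\sqrt{P^{-\alpha_{12}}}$ (the paper's choice): this leaves a bounded $x_1$ residual $(\sqrt{P^{(\alpha_{11}-\alpha_{12})^+}}-\sqrt{P^{\alpha_{11}-\alpha_{12}}})h_{11}x_1$ with coefficient in $[0,1]\cdot h_{11}$, and an $x_2$ coefficient of exactly $-h_{12}$, so the per-symbol variance is at most $1+|h_{11}|^2+|h_{12}|^2+1\leq 10$, giving $\hen(s_{11}^{\bln}|y_1^{\bln},w_1)\leq\frac{n}{2}\log(20\pi e)$. Note also that because your initial step $\hen(y_2^{\bln}|w_1)\leq\hen(y_2^{\bln},s_{11}^{\bln}|w_1)$ discards the term $\hen(s_{11}^{\bln}|y_2^{\bln},w_1)\geq\frac{n}{2}\log(2\pi e)$, even after the fix your additive constant comes out as $\frac{n}{2}\log(20\pi e)$ rather than the paper's $\frac{n}{2}\log 10$; keeping that term as an equality recovers the tighter constant.
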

\begin{proof}
Let us first prove the bound in \eqref{eq:differenceb11G}:
\begin{align}
  &  \hen(y^{\bln}_2| w_1)  - \hen(y^{\bln}_1| w_1)       \nonumber \\
  =&  \hen(s_{11}^{\bln}, y^{\bln}_2| w_1)  -   \underbrace{\hen(s_{11}^{\bln}| y^{\bln}_2, w_1)}_{\defeq J_1}    - \hen(s_{11}^{\bln}, y^{\bln}_1| w_1)    + \underbrace{\hen(s_{11}^{\bln} | y^{\bln}_1, w_1) }_{\defeq J_2}  \label{eq:bd3141}   \\ 
 = &   \hen( y^{\bln}_2| s_{11}^{\bln}, w_1) - \hen( y^{\bln}_1|s_{11}^{\bln},  w_1)   - J_1  + J_2      \nonumber   \\
= &  \hen(s_{12}^{\bln},  y^{\bln}_2| s_{11}^{\bln}, w_1) -  \underbrace{\hen(s_{12}^{\bln} | y^{\bln}_2,  s_{11}^{\bln}, w_1)}_{\defeq J_3}    - \hen( y^{\bln}_1|s_{11}^{\bln},  w_1)    - J_1  + J_2 \non \\
 =&    \hen(s_{12}^{\bln}| s_{11}^{\bln}, w_1)  +   \underbrace{\hen( y^{\bln}_2| s_{12}^{\bln}, s_{11}^{\bln}, w_1)}_{\defeq J_4}   - \hen( y^{\bln}_1|s_{11}^{\bln},  w_1)  - J_1  + J_2 -J_3 \label{eq:bd8225}      \\
  =&   \hen(y^{\bln}_1| x_1^{\bln}, s_{11}^{\bln}, w_1)     -  \hen( y^{\bln}_1|s_{11}^{\bln},  w_1)    -  J_1   +  J_2  - J_3  + J_4 \label{eq:bd2535}      \\ 
  =&   -  \Imu (y^{\bln}_1;  x_1^{\bln}| s_{11}^{\bln}, w_1)  - J_1  + J_2 -J_3 +J_4 \label{eq:bd992211}  
 \end{align}
where  $J_1\defeq \hen(s_{11}^{\bln}| y^{\bln}_2, w_1)$, $ J_2 \defeq \hen(s_{11}^{\bln} | y^{\bln}_1, w_1) $, $J_3 \defeq \hen(s_{12}^{\bln} | y^{\bln}_2, s_{11}^{\bln}, w_1)$ and $J_4\defeq \hen( y^{\bln}_2| s_{12}^{\bln}, s_{11}^{\bln}, w_1)$;
  the steps from \eqref{eq:bd3141} to \eqref{eq:bd8225}   follow from   chain rule;
  \eqref{eq:bd2535}  stems from that 
\begin{align}
   & \hen(s_{12}^{\bln}| s_{11}^{\bln}, w_1)      \non\\
 = &  \hen(s_{12}^{\bln})   \label{eq:bd8325}  \\
 = &  \hen(s_{12}^{\bln}| x^{\bln}_1, s_{11}^{\bln}, w_1)   \label{eq:bd1794}      \\
 = &  \hen( \{ \sqrt{P^{\alpha_{12}}} h_{12} x_{2}(t)  +z_{1}(t)  \}_{t=1}^{\bln} | x^{\bln}_1, s_{11}^{\bln}, w_1)  \nonumber   \\
 = &  \hen( \{    \sqrt{P^{\alpha_{11}}} h_{11} x_{1}(t)  + \sqrt{P^{\alpha_{12}}} h_{12} x_{2}(t)    +z_{1}(t)  \}_{t=1}^{\bln} | x^{\bln}_1, s_{11}^{\bln}, w_1)   \label{eq:bd7420}       \\
 = &  \hen(y^{\bln}_1| x_1^{\bln}, s_{11}^{\bln}, w_1)    \nonumber
 \end{align}
where \eqref{eq:bd8325} and \eqref{eq:bd1794} use the fact that $s_{12}^{\bln}$ is independent of $s_{11}^{\bln}, w_1$ and $ x^{\bln}_1$;
\eqref{eq:bd7420} follows from that $\hen(a|b)=\hen(a+b |b)$ for any continuous random variables $a$ and $b$.    
Going back to \eqref{eq:bd992211}, we further have 
\begin{align}
   &\hen(y^{\bln}_2| w_1)  - \hen(y^{\bln}_1| w_1)   \non\\
   & =  -  \Imu (y^{\bln}_1;  x_1^{\bln}| s_{11}^{\bln}, w_1)  - J_1  + J_2 -J_3 +J_4     \label{eq:bd887733} \\
   & \leq  - J_1  + J_2 -J_3 +J_4     \label{eq:bd33405} \\
   &\leq   \frac{n}{2} \log \Bigl(1+ P^{\alpha_{22}-\alpha_{12}} \cdot \frac{|h_{22}|^2}{|h_{12}|^2} +  P^{\alpha_{21} -(\alpha_{11} - \alpha_{12} )^+} \cdot \frac{|h_{21}|^2}{|h_{11}|^2}  \Bigr) +    \frac{n}{2} \log 10  \label{eq:bd92374}  
 \end{align}
 where \eqref{eq:bd887733} is from \eqref{eq:bd992211};  
 \eqref{eq:bd33405}   stems from  the nonnegativity of mutual information; 
 \eqref{eq:bd92374} follows from Lemma~\ref{lm:Jbounds} (see below).
Similarly, by interchanging the roles of user~1 and user~2, we also have 
\begin{align}
  & \hen(y^{\bln}_1| w_2)  - \hen(y^{\bln}_2| w_2) 
  \leq     \frac{n}{2} \log \Bigl(1+ P^{\alpha_{11}-\alpha_{21}} \cdot \frac{|h_{11}|^2}{|h_{21}|^2} +  P^{\alpha_{12} -(\alpha_{22} - \alpha_{21} )^+} \cdot \frac{|h_{12}|^2}{|h_{22}|^2}  \Bigr) +    \frac{n}{2} \log 10   \label{eq:bd9256}.     
 \end{align}
 \end{proof}
 
 \vspace{10pt}
\begin{lemma}  \label{lm:Jbounds}
For $J_1 = \hen(s_{11}^{\bln}| y^{\bln}_2, w_1)$, $ J_2 = \hen(s_{11}^{\bln} | y^{\bln}_1, w_1) $, $J_3 = \hen(s_{12}^{\bln} | y^{\bln}_2, s_{11}^{\bln}, w_1)$ and $J_4= \hen( y^{\bln}_2| s_{12}^{\bln}, s_{11}^{\bln}, w_1)$,  we have
\begin{align}
J_1  & \geq   \frac{n}{2}\log (2\pi e)   \label{eq:Jb1} \\
J_3  & \geq   \frac{n}{2}\log (2\pi e)   \label{eq:Jb3}  \\
J_2  & \leq   \frac{n}{2} \log (20\pi e )    \label{eq:Jb2}  \\
J_4  & \leq  \frac{n}{2} \log \Bigl(2\pi e\Bigl(1+ P^{\alpha_{22}-\alpha_{12}} \cdot \frac{|h_{22}|^2}{|h_{12}|^2} +  P^{\alpha_{21} -(\alpha_{11} - \alpha_{12} )^+} \cdot \frac{|h_{21}|^2}{|h_{11}|^2}  \Bigr)\Bigr).    \label{eq:Jb4}  
\end{align}
\end{lemma}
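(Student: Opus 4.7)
The plan is to treat the four bounds separately, exploiting two structural facts about the definitions in Section~\ref{sec:conBgaussianupbound}: the virtual noise $\tilde z_1$ was introduced to be independent of all other noises, inputs, and messages, while the real noise $z_1$ appears only in $y_1$ (and hence in $s_{12}$) but not in $y_2$ or in $s_{11}$.

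For the lower bounds $J_1$ and $J_3$, the idea is to further condition on the relevant channel input. For $J_1$, since conditioning reduces entropy one has $J_1 \geq \hen(s_{11}^{\bln} | y_2^{\bln}, w_1, x_1^{\bln})$; once $x_1^{\bln}$ is fixed, the definition of $s_{11}^{\bln}$ makes it a deterministic shift of $\tilde z_1^{\bln}$, so the entropy collapses to $\hen(\tilde z_1^{\bln})=\tfrac{\bln}{2}\log(2\pi e)$ by independence. The argument for $J_3$ is identical after further conditioning on $x_2^{\bln}$: $s_{12}^{\bln}$ reduces to $z_1^{\bln}$, and $z_1^{\bln}$ is independent of $(y_2^{\bln}, s_{11}^{\bln}, w_1, x_2^{\bln})$ because $y_2$ carries only $z_2$ among the noises and $s_{11}$ carries only $\tilde z_1$.

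For $J_4$ the plan is algebraic: solve the defining equations of $s_{11}$ and $s_{12}$ for $x_1$ and $x_2$, substitute into $y_2=\sqrt{P^{\alpha_{21}}}h_{21} x_1+\sqrt{P^{\alpha_{22}}}h_{22} x_2+z_2$, and collect terms to obtain the identity
\begin{align*}
& y_2 - \sqrt{P^{\alpha_{22}-\alpha_{12}}}\,\tfrac{h_{22}}{h_{12}}\, s_{12} - \sqrt{P^{\alpha_{21}-(\alpha_{11}-\alpha_{12})^+}}\,\tfrac{h_{21}}{h_{11}}\, s_{11} \\
& \qquad = z_2 - \sqrt{P^{\alpha_{22}-\alpha_{12}}}\,\tfrac{h_{22}}{h_{12}}\, z_1 - \sqrt{P^{\alpha_{21}-(\alpha_{11}-\alpha_{12})^+}}\,\tfrac{h_{21}}{h_{11}}\, \tilde z_1.
\end{align*}
Subtracting a linear function of the conditioning variables preserves the conditional entropy; dropping the conditioning can only increase it; and a final Gaussian maximum-entropy step on the sum of independent unit-variance noises on the right-hand side matches the stated bound exactly.

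The bound on $J_2$ is the only one that needs a small case split. When $\alpha_{11}\geq \alpha_{12}$, the analogue of the $J_4$ manipulation works: the identity $s_{11}-y_1/\sqrt{P^{\alpha_{12}}} = -h_{12} x_2 - z_1/\sqrt{P^{\alpha_{12}}}+\tilde z_1$ has second moment at most $|h_{12}|^2+P^{-\alpha_{12}}+1\leq 6$, using $h_{12}\in(1,2]$ and $P^{-\alpha_{12}}\leq 1$, yielding at most $\tfrac{\bln}{2}\log(12\pi e)$. When $\alpha_{11}<\alpha_{12}$ the same affine shift no longer gives a $P$-independent variance, so I fall back on $J_2\leq \hen(s_{11}^{\bln})\leq \tfrac{\bln}{2}\log(2\pi e\,\E|s_{11}|^2)\leq \tfrac{\bln}{2}\log(10\pi e)$ using $\E|s_{11}|^2\leq |h_{11}|^2+1\leq 5$. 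Both sub-bounds fit comfortably inside the stated $\tfrac{\bln}{2}\log(20\pi e)$. The only mild obstacle is keeping the $(\cdot)^+$ exponents straight in the $J_4$ identity and noticing that the direct shift-and-cancel trick for $J_2$ breaks in the weak-direct-link regime, which is why the alternative variance bound must be invoked there.
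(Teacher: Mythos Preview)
Your treatment of $J_1$, $J_3$, and $J_4$ is identical to the paper's: further condition on $x_1^{\bln}$ (resp.\ $x_2^{\bln}$) to strip off the signal and reduce to pure noise entropy, and for $J_4$ form the exact affine combination of $s_{11}$ and $s_{12}$ that cancels both inputs from $y_2$, then Gaussian-bound the noise residual.

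Your $J_2$ argument is correct but slightly clumsier than the paper's. The paper uses the \emph{same} shift $s_{11}(t)-\sqrt{P^{-\alpha_{12}}}\,y_1(t)$ in both regimes and simply observes that the residual coefficient on $x_1$, namely $\sqrt{P^{(\alpha_{11}-\alpha_{12})^+}}-\sqrt{P^{\alpha_{11}-\alpha_{12}}}$, lies in $[0,1]$ for all $\alpha_{11},\alpha_{12}\geq 0$ (it is zero when $\alpha_{11}\geq\alpha_{12}$ and at most $1$ otherwise). So the variance is always at most $1+4+4+1=10$, yielding $\tfrac{\bln}{2}\log(20\pi e)$ directly with no case split. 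Your assertion that the affine shift ``no longer gives a $P$-independent variance'' when $\alpha_{11}<\alpha_{12}$ is therefore mistaken; it does, and your fallback to the cruder bound $\hen(s_{11}^{\bln})$ is unnecessary, though of course still valid.
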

\begin{proof}
See Appendix~\ref{app:Jbounds}. 
\end{proof}

  Finally, by incorporating the results \eqref{eq:differenceb11G} and \eqref{eq:differenceb22G} of Lemma~\ref{lm:differencebG}  into  \eqref{eq:sumrate1}, it gives the following bound on the sum rate
 \begin{align}
 &R_1 + R_2  - \epsilon_{1,n} - \epsilon_{2,n}  - 2  \epsilon \non\\  
 \leq&  \frac{1}{2} \log \Bigl(1+ P^{\alpha_{22}-\alpha_{12}} \cdot \frac{|h_{22}|^2}{|h_{12}|^2} +  P^{\alpha_{21} -(\alpha_{11} - \alpha_{12} )^+} \cdot \frac{|h_{21}|^2}{|h_{11}|^2}  \Bigr)      \non \\
  + & \frac{1}{2} \log \Bigl(1+ P^{\alpha_{11}-\alpha_{21}} \cdot \frac{|h_{11}|^2}{|h_{21}|^2} +  P^{\alpha_{12} -(\alpha_{22} - \alpha_{21} )^+} \cdot \frac{|h_{12}|^2}{|h_{22}|^2}  \Bigr) +   \log 10    .
\end{align}
 By setting  $n\to \infty, \  \epsilon_{1,n} \to 0, \ \epsilon_{2,n} \to 0$ and  $\epsilon \to 0$, we  get the desired bound \eqref{eq:gaussianupbound}.

\subsection{Proof of bounds \eqref{eq:detup1} and \eqref{eq:detup2} \label{sec:detup12} }

Let us now prove bound \eqref{eq:detup1}. 
Beginning with Fano's inequality, we can bound the rate of user~1 as:
\begin{align}
  \bln R_1   
  &\leq  \Imu(w_1; y^{\bln}_1) + \bln \epsilon_{1,n}  \non \\
  &\leq  \Imu(w_1;  \{  \sqrt{P^{\alpha_{11}}} h_{11} x_{1}(t) +z_{1}(t) \}_{t=1}^{\bln}) + \bln \epsilon_{1,n}    \label{eq:2remove}  \\
    &\leq  \Imu(w_1;  \{  \sqrt{P^{\alpha_{11}}} h_{11} x_{1}(t) +z_{1}(t) \}_{t=1}^{\bln}  ,  y^{\bln}_2) + \bln \epsilon_{1,n}    \label{eq:2genie}  \\
    &\leq   \Imu(w_1;  \{  \sqrt{P^{\alpha_{11}}} h_{11} x_{1}(t) +z_{1}(t) \}_{t=1}^{\bln},  y^{\bln}_2) - \Imu(w_1; y^{\bln}_2)  +  \bln \epsilon  + \bln \epsilon_{1,n}  \label{eq:2R1sec88}    \\
&=  \Imu(w_1; \{  \sqrt{P^{\alpha_{11}}} h_{11} x_{1}(t) +z_{1}(t) \}_{t=1}^{\bln}  |  y^{\bln}_2)    +  \bln \epsilon  + \bln \epsilon_{1,n}   \non \\ 
&=  \Imu(w_1; \{  \sqrt{P^{\alpha_{11}}} h_{11} x_{1}(t) +z_{1}(t) \}_{t=1}^{\bln} |  \{    \bar{y}_2(t)    \}_{t=1}^{\bln} )    +  \bln \epsilon  + \bln \epsilon_{1,n}   \label{eq:s12y2444} \\ 
&=  \Imu(w_1; \{   \bar{y}_2(t)   - \sqrt{P^{\alpha_{11}}} h_{11} x_{1}(t) - z_{1}(t) \}_{t=1}^{\bln} |  \{    \bar{y}_2(t)    \}_{t=1}^{\bln} )    +  \bln \epsilon  + \bln \epsilon_{1,n}   \non \\ 
&=  \Imu(w_1; \{  \sqrt{P^{\alpha_{22}+ \alpha_{11} -  \alpha_{21}}}  \frac{h_{11}h_{22}}{h_{21}} x_{2}(t) +       \sqrt{P^{\alpha_{11} -  \alpha_{21}}}  \frac{h_{11}}{h_{21}} z_{2}(t)  - z_{1}(t) \}_{t=1}^{\bln} |  \{    \bar{y}_2(t)    \}_{t=1}^{\bln} )    +  \bln \epsilon  + \bln \epsilon_{1,n}   \non \\ 
& \leq    \frac{n}{2} \log \Bigl(1+ P^{\alpha_{11}-\alpha_{21}} \cdot \frac{|h_{11}|^2}{|h_{21}|^2} +  P^{\alpha_{22} + \alpha_{11} - \alpha_{21} } \cdot \frac{|h_{11}|^2 |h_{22}|^2}{|h_{21}|^2}  \Bigr)          + \bln \epsilon  + \bln \epsilon_{1,n}   \label{eq:s12y8375}
\end{align}
where   $\lim_{n\to\infty} \epsilon_{1,n} = 0$;
\eqref{eq:2remove} stems from the Markov chain  of $w_1 \to \{  \sqrt{P^{\alpha_{11}}} h_{11} x_{1}(t) +z_{1}(t) \}_{t=1}^{\bln} \to  y^{\bln}_1$;  
\eqref{eq:2genie} results from the fact that adding information does not decrease the mutual information;
 \eqref{eq:2R1sec88}  results  from  the secrecy constraint, i.e., $ \Imu(w_1; y^{\bln}_2)  \leq \bln \epsilon$ for an  arbitrary small  $\epsilon$ (cf.~\eqref{eq:defsecrecy1});
  \eqref{eq:s12y2444} uses the definition that \[ \bar{y}_2(t) \defeq  \sqrt{P^{\alpha_{11} -  \alpha_{21}}}  \frac{h_{11}}{h_{21}}y_2(t) =      \sqrt{P^{\alpha_{11}}} h_{11} x_{1}(t) +   \sqrt{P^{\alpha_{22}+ \alpha_{11} -  \alpha_{21}}}  \frac{h_{11}h_{22}}{h_{21}} x_{2}(t) +       \sqrt{P^{\alpha_{11} -  \alpha_{21}}}  \frac{h_{11}}{h_{21}} z_{2}(t) ; \]
\eqref{eq:s12y8375} follows from  the fact that $ \Imu(w_1; \{  \sqrt{P^{\alpha_{22}+ \alpha_{11} -  \alpha_{21}}}  \frac{h_{11}h_{22}}{h_{21}} x_{2}(t) +       \sqrt{P^{\alpha_{11} -  \alpha_{21}}}  \frac{h_{11}}{h_{21}} z_{2}(t)  - z_{1}(t) \}_{t=1}^{\bln} |  \{    \bar{y}_2(t)    \}_{t=1}^{\bln} )  
\leq   \sum_t  \hen(   \sqrt{P^{\alpha_{22}+ \alpha_{11} -  \alpha_{21}}}  \frac{h_{11}h_{22}}{h_{21}} x_{2}(t) +       \sqrt{P^{\alpha_{11} -  \alpha_{21}}}  \frac{h_{11}}{h_{21}} z_{2}(t)  - z_{1}(t) )  -  \sum_t  \hen(z_{1}(t)) $ and the fact that Gaussian input maximizes the differential entropy. 
Letting  $n\to \infty, \  \epsilon_{1,n} \to 0 $ and  $\epsilon \to 0$, it gives bound \eqref{eq:detup1}.  By interchanging the roles of user~1 and user~2, bound \eqref{eq:detup2} can be proved in a similar way.

\subsection{Proof of bounds \eqref{eq:detup11} and \eqref{eq:detup22} \label{sec:detup1122} }

By following from \eqref{eq:2remove}, we have
\begin{align}
  \bln R_1   -  \bln \epsilon_{1,n}     &\leq  \Imu(w_1;  \{  \sqrt{P^{\alpha_{11}}} h_{11} x_{1}(t) +z_{1}(t) \}_{t=1}^{\bln}) + \bln \epsilon_{1,n}   \non \\
     & \leq    \frac{n}{2} \log \bigl(1+ P^{\alpha_{11}} \cdot  |h_{11}|^2   \bigr)   \label{eq:s12y83255}
\end{align}
which gives the bound in \eqref{eq:detup11}.  The bound in \eqref{eq:detup22} can be  proved in a similar way by interchanging the roles of user~1 and user~2.

\subsection{Proof of bounds \eqref{eq:detup4} and \eqref{eq:detup5} \label{sec:detup45} }

Let us now prove bound \eqref{eq:detup4}.  
Let \[\tilde{x}_{k}(t) \defeq \sqrt{P^{\max\{\alpha_{k k}, \alpha_{\ell k}\}}}x_{k}(t) +  \tilde{z}_{k}(t) \]  and  $ \tilde{x}^{\bln}_{k} \defeq \{ \tilde{x}_{k}(t)  \}_{t=1}^{\bln} $ for $k,\ell   \in \{1,2\}, k \neq \ell$,  where  $\tilde{z}_{k}(t) \sim \mathcal{N}(0, 1)$ is a virtual noise that is independent of the other noise and transmitted signals. Recall that \[s_{\ell k}(t)  \defeq   \sqrt{P^{\alpha_{\ell k}}} h_{\ell k} x_{k}(t)  +z_{\ell}(t)\] for $k,\ell   \in \{1,2\}, k \neq \ell$ (cf.~\eqref{eq:defs12})
Beginning with Fano's inequality, the secure rate of user~1 is bounded as:
\begin{align}
 & \bln R_1 -  \bln \epsilon_{1,n}    \non\\
&\leq  \Imu(w_1; y^{\bln}_1)   \non \\
  &\leq  \Imu(w_1; y^{\bln}_1)    - \Imu(w_1; y^{\bln}_2) + \bln \epsilon    \label{eq:2R1sec325}  \\
  &\leq  \Imu(w_1; y^{\bln}_1, \tilde{x}^{\bln}_{1}, \tilde{x}^{\bln}_{2},    y^{\bln}_2 )   - \Imu(w_1; y^{\bln}_2) + \bln \epsilon    \label{eq:2R1add223}  \\
  &=   \Imu(w_1; y^{\bln}_1, \tilde{x}^{\bln}_{1}, \tilde{x}^{\bln}_{2} |   y^{\bln}_2 )    + \bln \epsilon   \non  \\
    &=   \hen(y^{\bln}_1, \tilde{x}^{\bln}_{1}, \tilde{x}^{\bln}_{2} |   y^{\bln}_2 )  - \hen(y^{\bln}_1, \tilde{x}^{\bln}_{1}, \tilde{x}^{\bln}_{2} |   y^{\bln}_2, w_1 )   + \bln \epsilon   \non  \\
        &=   \hen(y^{\bln}_1,y^{\bln}_2 , \tilde{x}^{\bln}_{1}, \tilde{x}^{\bln}_{2} ) -  \hen(y^{\bln}_2 )  - \hen(y^{\bln}_1, \tilde{x}^{\bln}_{1}, \tilde{x}^{\bln}_{2} |   y^{\bln}_2, w_1 )   + \bln \epsilon   \non  \\
        &=   \hen(\tilde{x}^{\bln}_{1}, \tilde{x}^{\bln}_{2} ) -  \hen(y^{\bln}_2 )  + \hen(y^{\bln}_1,y^{\bln}_2 | \tilde{x}^{\bln}_{1}, \tilde{x}^{\bln}_{2} )  - \hen(y^{\bln}_1, \tilde{x}^{\bln}_{1}, \tilde{x}^{\bln}_{2} |   y^{\bln}_2, w_1 )   + \bln \epsilon     \label{eq:2R1Markove763}   
 \end{align}
where \eqref{eq:2R1sec325}  results  from a secrecy constraint (cf.~\eqref{eq:defsecrecy1});
\eqref{eq:2R1add223}  stems from the fact that adding information does not decrease the mutual information;  
 On the other hand, we have 
\begin{align}
   \bln R_1    
&\leq  \Imu(w_1; y^{\bln}_1)  + \bln \epsilon_{1,n} \non \\
  &\leq  \Imu(x^{\bln}_{1}; y^{\bln}_1 )  + \bln \epsilon_{1,n}   \label{eq:2R1Markove82435}  \\
  &=    \hen(y^{\bln}_1 )  -  \hen( y^{\bln}_1 | x^{\bln}_{1})   +  \bln \epsilon_{1,n}    \non   \\
  &=    \hen(y^{\bln}_1 )  -  \hen( s^{\bln}_{12} | x^{\bln}_{1})   +  \bln \epsilon_{1,n}     \label{eq:2R1y122}  \\
  &=    \hen(y^{\bln}_1 )  -  \hen( s^{\bln}_{12})   + \bln \epsilon_{1,n}     \label{eq:2R1ind254}  
 \end{align}
where  
\eqref{eq:2R1Markove82435}  follows from  the Markov chain  of $w_1 \to  x^{\bln}_{1}  \to  y^{\bln}_1$;  
\eqref{eq:2R1y122} results from the fact that $y_1(t) =    \sqrt{P^{\alpha_{11}}} h_{11} x_{1}(t) +   s_{12}(t) $;
\eqref{eq:2R1ind254}  follows from the independence between $x^{\bln}_{1}$ and $s^{\bln}_{12}$.
In a similar way, we have 
\begin{align}
   \bln R_2    
&\leq    \hen(y^{\bln}_2 )  -  \hen( s^{\bln}_{21})   + \bln \epsilon_{2,n} .    \label{eq:2R2ind888}  
 \end{align}
Finally, by combining \eqref{eq:2R1Markove763}, \eqref{eq:2R1ind254} and \eqref{eq:2R2ind888}, it gives
\begin{align}
 &  2\bln R_1 + \bln R_2 -  2\bln \epsilon_{1,n} - \bln \epsilon_{2,n} - \bln \epsilon    \non  \\
 &\leq    \hen(\tilde{x}^{\bln}_{1}) -  \hen( s^{\bln}_{21})  +   \hen( \tilde{x}^{\bln}_{2} ) -  \hen( s^{\bln}_{12})  + \hen(y^{\bln}_1 )     + \hen(y^{\bln}_1,y^{\bln}_2 | \tilde{x}^{\bln}_{1}, \tilde{x}^{\bln}_{2} )  - \hen(y^{\bln}_1, \tilde{x}^{\bln}_{1}, \tilde{x}^{\bln}_{2} |   y^{\bln}_2, w_1 )  \non\\
  &\leq     \hen(\tilde{x}^{\bln}_{1}) -  \hen( s^{\bln}_{21})  +   \hen( \tilde{x}^{\bln}_{2} ) -  \hen( s^{\bln}_{12})  +  \hen(y^{\bln}_1 )       + \hen(y^{\bln}_1,y^{\bln}_2 | \tilde{x}^{\bln}_{1}, \tilde{x}^{\bln}_{2} )  -  \frac{3n}{2}\log (2\pi e)  \label{eq:2R1sum7145}\\
   &\leq    \hen(\tilde{x}^{\bln}_{1}) -  \hen( s^{\bln}_{21})  +   \hen( \tilde{x}^{\bln}_{2} ) -  \hen( s^{\bln}_{12}) +  \frac{n}{2} \log \bigl(1+ P^{\alpha_{11}}   |h_{11}|^2  + P^{\alpha_{12}}   |h_{12}|^2  \bigr)    \non\\&\quad + \hen(y^{\bln}_1,y^{\bln}_2 | \tilde{x}^{\bln}_{1}, \tilde{x}^{\bln}_{2} )  -  \frac{2n}{2}\log (2\pi e)  \label{eq:2R1sum56773}\\ 
      &\leq    \hen(\tilde{x}^{\bln}_{1}) -  \hen( s^{\bln}_{21})  +   \hen( \tilde{x}^{\bln}_{2} ) -  \hen( s^{\bln}_{12}) + \frac{n}{2} \log \bigl(1+ P^{\alpha_{11}}   |h_{11}|^2  + P^{\alpha_{12}}   |h_{12}|^2  \bigr)        +   n \log 9   \label{eq:2R1sum46727}
 \end{align}
where \eqref{eq:2R1sum7145} follows from the derivation that  $\hen(y^{\bln}_1, \tilde{x}^{\bln}_{1}, \tilde{x}^{\bln}_{2} |   y^{\bln}_2, w_1 ) \geq  \hen(y^{\bln}_1, \tilde{x}^{\bln}_{1}, \tilde{x}^{\bln}_{2} |   y^{\bln}_2, w_1, x^{\bln}_{1}, x^{\bln}_{2} )    = \hen(z^{\bln}_1, \tilde{z}^{\bln}_{1}, \tilde{z}^{\bln}_{2} )  = \frac{3n}{2}\log (2\pi e)$;  
\eqref{eq:2R1sum56773} holds true because $ \hen(y^{\bln}_1 ) \leq \frac{n}{2} \log \bigl(2\pi e( 1+ P^{\alpha_{11}}   |h_{11}|^2  + P^{\alpha_{12}}   |h_{12}|^2 ) \bigr) $;
\eqref{eq:2R1sum46727} uses the fact that 
 \begin{align}
 & \hen(y^{\bln}_1,y^{\bln}_2 | \tilde{x}^{\bln}_{1}, \tilde{x}^{\bln}_{2} ) \non\\
  =& \hen\bigl( \{ y_1 (t) -    P^{\frac{\alpha_{11}  -  \max\{\alpha_{11}, \alpha_{21}\}}{2}} h_{11}  \tilde{x}_{1}(t)   -    P^{\frac{\alpha_{12}  -  \max\{\alpha_{22}, \alpha_{12}\}}{2}} h_{12}  \tilde{x}_{2}(t)  \}_{t=1}^{\bln}, \non\\ & \quad  \{ y_2 (t)  -    P^{\frac{\alpha_{22}  -  \max\{\alpha_{22}, \alpha_{12}\}}{2}} h_{22}  \tilde{x}_{2}(t)   -    P^{\frac{\alpha_{21}  -  \max\{\alpha_{11}, \alpha_{21}\}}{2}} h_{21}  \tilde{x}_{1}(t)   \}_{t=1}^{\bln}   | \tilde{x}^{\bln}_{1}, \tilde{x}^{\bln}_{2} \bigr)   \non\\
  \leq  & \sum_{t=1}^{\bln} \hen \bigl( z_1(t)    -   P^{\frac{\alpha_{11}  -  \max\{\alpha_{11}, \alpha_{21}\}}{2}} h_{11}  \tilde{z}_{1}(t)    -    P^{\frac{\alpha_{12}  -  \max\{\alpha_{22}, \alpha_{12}\}}{2}} h_{12}  \tilde{z}_{2}(t)    \bigr)  \non\\   + & \sum_{t=1}^{\bln} \hen \bigl(  z_2 (t) -    P^{\frac{\alpha_{22}  -  \max\{\alpha_{22}, \alpha_{12}\}}{2}} h_{22}  \tilde{z}_{2}(t)   -    P^{\frac{\alpha_{21}  -  \max\{\alpha_{11}, \alpha_{21}\}}{2}} h_{21}  \tilde{z}_{1}(t)  \bigr)  \non\\
    \leq  &    \frac{n}{2} \log \bigl( 2\pi \bigl(1+ P^{\alpha_{11}  -  \max\{\alpha_{11}, \alpha_{21}\}} |h_{11}|^2  + P^{\alpha_{12}  -  \max\{\alpha_{22}, \alpha_{12}\}}   |h_{12}|^2   \bigr)\bigr)     \non\\   +&  \frac{n}{2} \log \bigl( 2\pi \bigl(1+ P^{\alpha_{22}  -  \max\{\alpha_{22}, \alpha_{12}\}} |h_{22}|^2  + P^{\alpha_{21}  -  \max\{\alpha_{11}, \alpha_{21}\}}   |h_{21}|^2   \bigr)\bigr)  \non\\
     \leq  &    \frac{n}{2} \log ( 2\pi  \times 9  )    +  \frac{n}{2} \log  ( 2\pi  \times 9 ) . 
 \end{align}
 Let us focus on $\hen(\tilde{x}^{\bln}_{1}) -  \hen( s^{\bln}_{21})$ in the right-hand side of \eqref{eq:2R1sum46727}:
  \begin{align}
  \hen(\tilde{x}^{\bln}_{1}) -  \hen( s^{\bln}_{21})
  =&     \hen(\tilde{x}^{\bln}_{1}) -  \hen(\tilde{z}^{\bln}_{1})   + \hen(z^{\bln}_{2}) -   \hen( s^{\bln}_{21})    \non\\
  =&     \Imu(      \tilde{x}^{\bln}_{1} ;  x^{\bln}_{1})   -    \Imu( s^{\bln}_{21} ;  x^{\bln}_{1})    \non\\
 \leq  &     \Imu(      \tilde{x}^{\bln}_{1} ;  x^{\bln}_{1} | s^{\bln}_{21})      \non\\
    = &     \hen(      \tilde{x}^{\bln}_{1}  | s^{\bln}_{21})   -  \underbrace{\hen(      \tilde{x}^{\bln}_{1} | x^{\bln}_{1}, s^{\bln}_{21}) }_{  = \hen( \tilde{z}^{\bln}_{1})}     \non\\
   = &     \hen(      \tilde{x}^{\bln}_{1}  | s^{\bln}_{21})   -   \frac{n}{2}\log (2\pi e)    \non\\
   \leq  &   \sum_{t=1}^{\bln}  \hen(      \tilde{x}_{1}(t)  | s_{21} (t))   -   \frac{n}{2}\log (2\pi e)    \non\\
    =  &   \sum_{t=1}^{\bln}  \hen\Bigl(      \tilde{x}_{1}(t)  -  P^{\frac{ \max\{\alpha_{11}, \alpha_{21}\} - \alpha_{21} }{2}}  \frac{s_{21} (t) }{h_{21} } \big| s_{21} (t)\Bigr)   -   \frac{n}{2}\log (2\pi e)    \non\\
  \leq   &   \sum_{t=1}^{\bln}  \hen\Bigl(      \tilde{z}_{1}(t)  -  P^{\frac{ \max\{\alpha_{11}, \alpha_{21}\} - \alpha_{21} }{2}}  \frac{z_{2} (t) }{h_{21} } \Bigr)   -   \frac{n}{2}\log (2\pi e)    \non\\
    =   &     \frac{n}{2} \log \bigl(1+ P^{ (\alpha_{11} - \alpha_{21} )^+} \cdot  \frac{1}{|h_{21}|^2}   \bigr).   \label{eq:2R1sum54356}
 \end{align}
  Similarly, we have 
  \begin{align}
  \hen( \tilde{x}^{\bln}_{2} ) -  \hen( s^{\bln}_{12})  \leq     \frac{n}{2} \log \bigl(1+ P^{ (\alpha_{22} - \alpha_{12} )^+} \cdot  \frac{1}{|h_{12}|^2}   \bigr).    \label{eq:2R1sum9285}
 \end{align}
At this point, by incorporating  \eqref{eq:2R1sum54356} and \eqref{eq:2R1sum9285} into \eqref{eq:2R1sum46727}, it gives
\begin{align}
 &  2 R_1 +  R_2 -  2 \epsilon_{1,n} -  \epsilon_{2,n} -  \epsilon    \non  \\
      &\leq   \frac{1}{2} \log \bigl(1+   \frac{P^{ (\alpha_{11} - \alpha_{21} )^+}}{|h_{21}|^2}   \bigr)    +   \frac{1}{2} \log \bigl(1+   \frac{P^{ (\alpha_{22} - \alpha_{12} )^+}}{|h_{12}|^2}   \bigr)  + \frac{1}{2} \log \bigl(1+ P^{\alpha_{11}}   |h_{11}|^2  + P^{\alpha_{12}}   |h_{12}|^2  \bigr)        +  \log 9 .  \non 
 \end{align}
By setting  $n\to \infty, \  \epsilon_{1,n}, \epsilon_{2,n} \to 0 $ and  $\epsilon \to 0$, it gives bound \eqref{eq:detup4}.
By interchanging the roles of user~1 and user~2, bound \eqref{eq:detup5} can be proved in a similar way.

\section{Conclusion}   \label{sec:conclusion}

This work considered the two-user Gaussian interference channel with confidential messages. For the symmetric setting,  this work completed the optimal secure sum GDoF characterization for all the interference regimes.
For the general setting, this work showed that a simple scheme without cooperative jamming (i.e., GWC-TIN scheme) can achieve  the secure sum capacity to within a constant gap, when the conditions of \eqref{eq:capGaussian1} and \eqref{eq:capGaussian2} are satisfied. 
In this GWC-TIN scheme,  each transmitter uses a Gaussian wiretap codebook, while each receiver treats interference as noise when decoding the desired message. 
For the symmetric case, this simple scheme is optimal when the interference-to-signal ratio $\alpha$ is no more than $2/3$. However, when the ratio $\alpha$ is more than $2/3$, we showed that this simple scheme is not optimal anymore and a scheme with cooperative jamming is proposed to achieve the optimal secure sum GDoF.  
In the future work we will try to understand when it is  necessary to use cooperative jamming and when it is not,  for the secure communication over the other  networks.

\appendices

\section{Proof of Lemma~\ref{lm:Jbounds}} \label{app:Jbounds}

Remind that  $J_1= \hen(s_{11}^{\bln}| y^{\bln}_2, w_1)$, $ J_2 = \hen(s_{11}^{\bln} | y^{\bln}_1, w_1) $, $J_3 = \hen(s_{12}^{\bln} | y^{\bln}_2, s_{11}^{\bln}, w_1)$, $J_4= \hen( y^{\bln}_2| s_{12}^{\bln}, s_{11}^{\bln}, w_1)$,  $s_{11}(t)  =  \sqrt{P^{(\alpha_{11}-\alpha_{12})^+}} h_{11} x_{1}(t) +  \tilde{z}_{1}(t)$, and $ s_{12}(t) =  \sqrt{P^{\alpha_{12}}} h_{12} x_{2}(t)  +z_{1}(t)$.  
At first we focus on the lower bound of $J_1$:
\begin{align}
J_1  & = \hen(s_{11}^{\bln}| y^{\bln}_2, w_1)  \nonumber \\
       & \geq \hen(s_{11}^{\bln}|  x_1^{\bln}, y^{\bln}_2, w_1)     \label{eq:Jb4256}  \\
      & = \hen(   \{   \sqrt{P^{(\alpha_{11}-\alpha_{12})^+}} h_{11} x_{1}(t) +  \tilde{z}_{1}(t) \}_{t=1}^{\bln}  |  x_1^{\bln}, y^{\bln}_2, w_1)     \nonumber  \\
       & = \hen(   \{ \tilde{z}_{1}(t) \}_{t=1}^{\bln}  |  x_1^{\bln}, y^{\bln}_2, w_1)     \label{eq:Jb8488}  \\
       & = \hen(   \{ \tilde{z}_{1}(t) \}_{t=1}^{\bln} )    \nonumber  \\
           & =  \frac{n}{2}\log (2\pi e) \nonumber   
\end{align}
where \eqref{eq:Jb4256} follows from the fact that conditioning reduces differential entropy;   \eqref{eq:Jb8488} follows from the fact that $\hen(a|b)=\hen(a-b |b)$ for any continuous random variables $a$ and $b$; the last equality holds true because $\hen( \tilde{z}_{1}(t)) = \frac{1}{2}\log (2\pi e) $.   Similarly, we have 
\begin{align}
J_3  & = \hen(s_{12}^{\bln} | y^{\bln}_2, s_{11}^{\bln}, w_1)  \nonumber \\
       & \geq  \hen(s_{12}^{\bln} |  x_2^{\bln}, y^{\bln}_2, s_{11}^{\bln}, w_1)    \nonumber \\
       & = \hen(   \{ z_{1}(t) \}_{t=1}^{\bln} )    \nonumber  \\
       & =  \frac{n}{2}\log (2\pi e) . \nonumber 
\end{align}

Now we focus on the  upper bound of $J_2$:
\begin{align}
J_2   
 = &\hen(s_{11}^{\bln} | y^{\bln}_1, w_1)  \nonumber \\
  = &\sum_{t=1}^{\bln}  \hen(s_{11}(t) | s_{11}^{t-1}, y^{\bln}_1, w_1)  \label{eq:Jb0263}  \\
  \leq & \sum_{t=1}^{\bln}  \hen(s_{11}(t) |  y_1(t))  \label{eq:Jb9255}  \\
    = & \sum_{t=1}^{\bln}  \hen\bigl(s_{11}(t) -  \sqrt{P^{-\alpha_{12} }}y_1(t) |  y_1(t)\bigr)  \label{eq:Jb2595}  \\
     = & \sum_{t=1}^{\bln}  \hen \Bigl(  \tilde{z}_{1}(t)  + (  \sqrt{P^{(\alpha_{11}-\alpha_{12})^+}} - \sqrt{P^{\alpha_{11}-\alpha_{12} }}  )  h_{11} x_{1}(t) 
     -   h_{12} x_{2}(t)  - \sqrt{P^{-\alpha_{12} }} z_{1}(t)    \big|  y_1(t) \Bigr)  \nonumber  \\
     \leq & \sum_{t=1}^{\bln}  \hen \Bigl(    \tilde{z}_{1}(t)  + (  \sqrt{P^{(\alpha_{11}-\alpha_{12})^+}} - \sqrt{P^{\alpha_{11}-\alpha_{12} }}  )  h_{11} x_{1}(t)  
      -    h_{12} x_{2}(t)  - \sqrt{P^{-\alpha_{12} }} z_{1}(t)     \Bigr)   \label{eq:Jb5367}  \\
       \leq&   \frac{n}{2} \log (2\pi e(1+  \underbrace{(  \sqrt{P^{(\alpha_{11}-\alpha_{12})^+}} - \sqrt{P^{\alpha_{11}-\alpha_{12} }}  )^2 }_{\leq 1}  \cdot \underbrace{|h_{11}|^2}_{\leq 4}+    \underbrace{|h_{12}|^2}_{\leq 4} + \underbrace{P^{-\alpha_{12} } }_{\leq 1} ))  \label{eq:Jb44112} \\
       \leq &   \frac{n}{2} \log (20 \pi e )    \label{eq:Jb00998} 
\end{align}
where \eqref{eq:Jb0263} results from chain rule;
 \eqref{eq:Jb9255}  and \eqref{eq:Jb5367} follow from the fact that conditioning reduces differential entropy;
  \eqref{eq:Jb2595} uses the fact that $\hen(a|b)=\hen(a-\beta b |b)$ for a constant $\beta$; 
  \eqref{eq:Jb44112}  follows from  the fact that  $ \hen \bigl(   \tilde{z}_{1}(t)  +  \beta_0  x_{1}(t)  - \beta_1  x_{2}(t)  -  \beta_2 z_{1}(t)   \bigr)  \leq  \frac{1}{2} \log (2\pi e(1+ \beta_0^2 + \beta_1 ^2 +   \beta_2^2  ))$ for constants $\beta_0$, $\beta_1$ and $\beta_2$;
  \eqref{eq:Jb00998}  uses the identities  $0\leq \sqrt{P^{(\alpha_{11}-\alpha_{12})^+}} - \sqrt{P^{\alpha_{11}-\alpha_{12} }}   \leq  1$ and  $ P^{-\alpha_{12} } \leq 1 $. 
  Recall that  $P\geq 1$ and $\alpha_{k\ell} \geq 0,   h_{k\ell} \in (1,2],  \forall k, \ell  \in \{1,2\}$.
Similarly, we have the following bound on $J_4$:
\begin{align}
J_4  & = \hen( y^{\bln}_2| s_{12}^{\bln}, s_{11}^{\bln}, w_1)  \nonumber \\
        & = \sum_{t=1}^{\bln}  \hen(y_{2}(t) | y_{2}^{t-1}, s_{12}^{\bln}, s_{11}^{\bln}, w_1)  \label{eq:Jb02631}  \\
        & \leq  \sum_{t=1}^{\bln}  \hen(y_{2}(t) | s_{12}(t), s_{11}(t) )  \label{eq:Jb92551}  \\
        & =  \sum_{t=1}^{\bln}  \hen\bigl(y_{2}(t) -  \sqrt{P^{\alpha_{22}-\alpha_{12}}}  \frac{h_{22}}{h_{12}} s_{12}(t)  
        -  \sqrt{P^{\alpha_{21} -(\alpha_{11} - \alpha_{12} )^+}}  \frac{h_{21}}{h_{11}}  s_{11}(t)   | s_{12}(t), s_{11}(t) \bigr)  \label{eq:Jb25951}  \\
        & =  \sum_{t=1}^{\bln}  \hen \Bigl( z_{2}(t)   -  \sqrt{P^{\alpha_{22}-\alpha_{12}}}  \frac{h_{22}}{h_{12}}  z_{1}(t) 
        -  \sqrt{P^{\alpha_{21} -(\alpha_{11} - \alpha_{12} )^+}}  \frac{h_{21}}{h_{11}}   \tilde{z}_{1}(t)       \big| s_{12}(t), s_{11}(t)  \Bigr)  \nonumber  \\
        & \leq  \sum_{t=1}^{\bln}   \hen \Bigl( z_{2}(t)   -  \sqrt{P^{\alpha_{22}-\alpha_{12}}} \frac{h_{22}}{h_{12}}  z_{1}(t)  
        -  \sqrt{P^{\alpha_{21} -(\alpha_{11} - \alpha_{12} )^+}}  \frac{h_{21}}{h_{11}}   \tilde{z}_{1}(t)   \Bigr)   \label{eq:Jb53671}  \\
       & \leq    \frac{n}{2} \log \Bigl(2\pi e\Bigl(1+ P^{\alpha_{22}-\alpha_{12}} \cdot \frac{|h_{22}|^2}{|h_{12}|^2} +  P^{\alpha_{21} -(\alpha_{11} - \alpha_{12} )^+} \cdot \frac{|h_{21}|^2}{|h_{11}|^2}  \Bigr)\Bigr)  \nonumber 
\end{align}
where \eqref{eq:Jb02631} results from chain rule;
 \eqref{eq:Jb92551}  and \eqref{eq:Jb53671} follow from the fact that conditioning reduces differential entropy;
     \eqref{eq:Jb25951} uses the fact that $\hen(a|b,c)=\hen(a-\beta_1b -\beta_2 c |b, c)$ for  constants $\beta_1$ and  $\beta_2$;
 the last inequality  stems from the fact that $\hen \bigl( z_{2}(t)   -  \beta_3   z_{1}(t)    -  \beta_4   \tilde{z}_{1}(t)     \bigr)   \leq \frac{1}{2} \log (2\pi e(1+ \beta_3^2 + \beta_4^2))$ for  constants $\beta_3$ and  $\beta_4$. At this point we complete the proof.

\section{The gap between secure sum capacity  upper  and lower bounds for  Theorem~\ref{thm:GaussianNCJ} } \label{sec:gap}

As discussed in Section~\ref{sec:noCJGau}, the GWC-TIN scheme achieves the  secure sum capacity lower bound  
\begin{align*}
  &C^{lb}_{\text{sum}}  \defeq \frac{1}{2} \log \bigl(  1+    \frac{|h_{11}|^2 P^{\alpha_{11} - \alpha_{21} }}{1+ |h_{12}|^2 }  \bigr)  -  \frac{1}{2} \log (1+ |h_{21}|^2)  +  \frac{1}{2} \log \bigl(  1+     \frac{|h_{22}|^2 P^{\alpha_{22} -  \alpha_{12} }}{ 1+ |h_{21}|^2 }\bigr)  -   \frac{1}{2} \log (1+ |h_{12}|^2) 
  \end{align*}  (see \eqref{eq:Rk111} and \eqref{eq:Rk432}).
In Lemma~\ref{lm:gupper} (see  Section~\ref{sec:converse}), we provide a secure sum capacity upper bound in  \eqref{eq:gaussianupbound}, that is,
\begin{align*}
 C_{\text{sum}}  \leq  C^{ub}_{\text{sum}}  \defeq  &  
 \frac{1}{2} \log \Bigl(1+ P^{\alpha_{22}-\alpha_{12}} \cdot \frac{|h_{22}|^2}{|h_{12}|^2} +  P^{\alpha_{21} -(\alpha_{11} - \alpha_{12} )^+} \cdot \frac{|h_{21}|^2}{|h_{11}|^2}  \Bigr)      \non \\
  + & \frac{1}{2} \log \Bigl(1+ P^{\alpha_{11}-\alpha_{21}} \cdot \frac{|h_{11}|^2}{|h_{21}|^2} +  P^{\alpha_{12} -(\alpha_{22} - \alpha_{21} )^+} \cdot \frac{|h_{12}|^2}{|h_{22}|^2}  \Bigr) +   \log 10 . 
\end{align*}
 If the following two conditions are satisfied, 
 \begin{align}
\alpha_{22} + (\alpha_{11 } - \alpha_{12})^+  &\geq   \alpha_{21}+  \alpha_{12}   \label{eq:capGaussian111} \\
\alpha_{11} + (\alpha_{22 } - \alpha_{21})^+  &\geq   \alpha_{21}+  \alpha_{12} \label{eq:capGaussian222} 
\end{align}
(see \eqref{eq:capGaussian1} and \eqref{eq:capGaussian2}), then  the gap between $C^{ub}_{\text{sum}} $ and $C^{lb}_{\text{sum}} $ is bounded by    
\begin{align}
&C^{ub}_{\text{sum}} -  C^{lb}_{\text{sum}}   \non\\
 \leq   &    \frac{1}{2} \log \bigl(1+ 4 P^{\alpha_{22}-\alpha_{12}} +  4 P^{\alpha_{21} -(\alpha_{11} - \alpha_{12} )^+}    \bigr) 
   +  \frac{1}{2} \log \bigl(1+ 4 P^{\alpha_{11}-\alpha_{21}} +  4 P^{\alpha_{12} -(\alpha_{22} - \alpha_{21} )^+}   \bigr) +   \log 10     \non\\
   & -  \frac{1}{2} \log \bigl(  1+    \frac{ P^{\alpha_{11} - \alpha_{21} }}{5 }  \bigr)   -  \frac{1}{2} \log \bigl(  1+     \frac{ P^{\alpha_{22} -  \alpha_{12} }}{ 5 }\bigr)  +    \log 5     \label{eq:gap5525}  \\
\leq   &    \frac{1}{2} \log \bigl(\frac{1+ 8 P^{\alpha_{22}-\alpha_{12}}  }{  1+ \frac{1}{5} P^{\alpha_{22}-\alpha_{12}} }   \bigr)  +  \frac{1}{2} \log \bigl(\frac{1+ 8 P^{\alpha_{11}-\alpha_{21}}  }{  1+ \frac{1}{5} P^{\alpha_{11}-\alpha_{21}} }   \bigr)   +   \log 50       \label{eq:gap7725}  \\
\leq   &     \log 40   +   \log 50       \label{eq:gap2946}  \\
\leq   &   11       \label{eq:gap0985}  
\end{align}
where  \eqref{eq:gap5525} uses the fact that $h_{k\ell} \in (1,2],  \forall k, \ell  \in \{1,2\}$;
\eqref{eq:gap7725} follows from the conditions in \eqref{eq:capGaussian111} and \eqref{eq:capGaussian222};
\eqref{eq:gap2946} results  from the  identity that $ \frac{1+ a_1 b}{1+ a_2 b} \leq  \frac{a_1 }{a_2}$ for any positive numbers $a_1, a_2$ and $b$ such that $a_1 \geq 1  \geq a_2 > 0$.
Recall that  $P\geq 1$ and $\alpha_{k\ell} \geq 0,  \forall k, \ell  \in \{1,2\}$.
Note that, the gap can be further reduced by optimizing the computations in the converse and achievability.

\section{Proof of Lemma~\ref{lm:AWGNic}  \label{sec:AWGNic} }

This section  provides  the proof of Lemma~\ref{lm:AWGNic} (see Section~\ref{sec:CJGau}).
Recall that we consider the communication of $x$  over  a  channel model given in \eqref{eq:aqmy1}, that is, 
  \begin{align}
   y=  \sqrt{P^{\alpha_1}}  h x + \sqrt{P^{\alpha_2}} g + z   \non
 \end{align}
where  $x \in \Omega (\xi,  Q)$ and $g  \in  \Sc_{g}$ is a discrete random variable such that $ |g | \leq  g_{\max}, \ \forall g \in  \Sc_{g}$,  where $g_{\max}$ is a positive and finite constant  independent of $P$.   
The  minimum distance of the constellation  for $\sqrt{P^{\alpha_1}}  h x$ is   $d_{\text{min}} (\sqrt{P^{\alpha_1}} hx) =\sqrt{P^{\alpha_1}}h \cdot \xi$.
For this channel model, the probability of error for decoding $x$ from $y$ is 
 \begin{align}
  \text{Pr} (e) &=\text{Pr} [ x \neq \hat{x} ]     \non\\
 &=  \sum_{ i = -Q }^{Q} \text{Pr} [  x = \xi \cdot  i ]  \cdot  \text{Pr} [ x \neq \hat{x} | x = \xi \cdot  i ]   \non \\
  &\leq  \sum_{ i = -Q }^{Q} \text{Pr} [  x = \xi \cdot i ]  \cdot  \Bigl(   \text{Pr} [   z   <  - P^{\frac{\alpha_2}{2}} g  -   d_{\text{min}} /2  ]  +  \text{Pr} [     z   >  - P^{\frac{\alpha_2}{2}} g   +  d_{\text{min}} /2  ]   \Bigr)  \non \\
    &=   \text{Pr} [   z   >   P^{\frac{\alpha_2}{2}} g  +   d_{\text{min}} /2  ]  +  \text{Pr} [     z   >  - P^{\frac{\alpha_2}{2}} g   +  d_{\text{min}} /2  ]     \label{eq:error2244}  \\
  & \leq       \text{Pr} [   z   >   P^{\frac{\alpha_2}{2}} g  +   d_{\text{min}} /2    \  |  \  g =  - g_{\max}  ]  +  \text{Pr} [     z   >  -P^{\frac{\alpha_2}{2}}  g   +  d_{\text{min}} /2  \  |  \  g = g_{\max} ]     \label{eq:error2256} \\
    & =   2  \cdot     {\bf{Q}} \bigl(  \frac{  d_{\text{min}} /2 -  P^{\frac{\alpha_2}{2}} g_{\max}  }{ \sigma} \bigr)     \label{eq:error2277}   
  \end{align}
  where $\hat{x}$ is the estimate for $x$ by choosing the closest point in $\Omega (\xi,  Q)$, based on the observation $y$;
  \eqref{eq:error2244} follows from the fact that  $1- {\bf{Q}} (a)  = {\bf{Q}} (- a )$ for any $a \in \Rc$,  where the ${\bf{Q}}$-function is defined as ${\bf{Q}}(a )  \defeq  \frac{1}{\sqrt{2\pi}} \int_{a}^{\infty}  \exp( -\frac{ s^2}{2} ) d s$.      
When $   d_{\text{min}} /2  \geq  P^{\frac{\alpha_2}{2}} g_{\max}   $,  the error probability can be further bounded as
  \begin{align}
   \text{Pr} (e)  &\leq    \exp \Bigl(    -  \frac{  (d_{\text{min}} /2 - P^{\frac{\alpha_2}{2}}  g_{\max}  )^2 }{2 \sigma^2} \Bigr),  \non
    \end{align}
 by using the identity that  $ {\bf{Q}} (a ) \leq   \frac{1}{2}\exp ( -  a^2 /2 ),  \    \forall a \geq 0.$ 
At this point, for the case of $\alpha_1 - \alpha_2 >0$, by setting  $Q$ and $\xi$ such that 
  \begin{align*}
   Q =  \frac{P^{\frac{\bar{\alpha}}{2}} \cdot h \gamma }{2 g_{\max} },    \quad  \quad    \xi =  \gamma \cdot \frac{ 1}{Q},     \quad   \forall \bar{\alpha} \in (0, \alpha_1 -\alpha_2)
    \end{align*}
where $\gamma \in (0, 1/\sqrt{2}]$ is a constant independent of $P$, then it holds true that $d_{\text{min}} /2 >    P^{\frac{\alpha_2}{2}} g_{\max}$ and the probability of error for decoding a symbol $x$ from $y$ is \[ \text{Pr} (e) \to 0  \quad  \text{as} \quad   P\to \infty.\]

\section{Proof of Lemma~\ref{lm:rateerror2334}  \label{sec:rateerror2334} }

We will prove that  when $2/3 < \alpha \leq 3/4$, given the signal design in \eqref{eq:constellationGsym1}-\eqref{eq:constellationGsym1u} and \eqref{eq:para111}-\eqref{eq:xvkkk12334},  the error probability of estimating  $\{v_{k,c}, v_{k,p} \}$ from $y_k$  is
 \begin{align}
 \text{Pr} [  \{ v_{k,c} \neq \hat{v}_{k,c} \} \cup  \{ v_{k,p} \neq \hat{v}_{k,p} \}  ]  \to 0         \quad \text {as}\quad  P\to \infty    \non
 \end{align}
for $k=1,2$, where  $ \hat{v}_{k,c}$ and  $\hat{v}_{k,p}  $ are the corresponding estimates for  $v_{k,c}$ and $v_{k,p}$, respectively, based on the observation $y_k$ expressed in \eqref{eq:yvk12334} and \eqref{eq:yvk22334}. 

Due to the symmetry we will focus on the proof for the first user ($k=1$). 
In the first step,  we estimate $v_{1,c} \in \Omega ( \xi =  \gamma_{v_{1,c}} \cdot \frac{ 1}{Q} ,   \   Q =  P^{ \frac{ 3\alpha -2 - \epsilon }{2}} )$ from $y_1$  by treating the other signals as noise, where $y_1$ is expressed in \eqref{eq:yvk12334}. Note that $y_1$ in \eqref{eq:yvk12334} can be rewritten as 
\begin{align}
 y_1  =    \sqrt{P} h_{11}h_{22}  v_{1,c}      +  \sqrt{P^{ \alpha}} g   +  z_{1}    \label{eq:yvk101}
\end{align}
where 
  \begin{align}
 g  &  \defeq      h_{12} h_{11}(   v_{2,c}  +  u_1)    +   \sqrt{P^{ -(1-\alpha) }}  h_{12} h_{21} u_{2}  + \sqrt{P^{ -( 2\alpha -1)}} h_{11}h_{22} v_{1,p}    +   \sqrt{P^{ - \alpha}}  h_{12} h_{11}  v_{2,p} . \non
 \end{align}
It holds true that \[ |g | \leq  5 \sqrt{2} \] for any realizations of $g$,  given the signal design in  \eqref{eq:constellationGsym1}-\eqref{eq:constellationGsym1u} and \eqref{eq:para111}-\eqref{eq:xvkkk12334},  under the regime of $2/3 <  \alpha \leq  3/4$.
Then, from Lemma~\ref{lm:AWGNic} we can conclude that  the error probability of estimating $v_{1,c}$ from $y_1$  is 
 \begin{align}
\text{Pr} [v_{1,c} \neq \hat{v}_{1,c}] \to 0, \quad \text {as}\quad  P\to \infty.   \label{eq:error776}
\end{align}

In the second step, we  remove the decoded $v_{1,c}$ from $y_1$ and then estimate $v_{2,c} + u_1 \in   2\cdot \Omega (\xi   =  \gamma_{v_{2,c}} \cdot \frac{ 1}{Q},   \   Q =  P^{ \frac{ 3\alpha -2 - \epsilon}{2}} )$  from  the following observation 
 \begin{align}
y_1-  \sqrt{P} h_{11} v_{1,c}   =       \sqrt{P^{ \alpha}}  h_{12} h_{11}(   v_{2,c}  +  u_1)    +   \sqrt{P^{ 2\alpha - 1 }}  g'  +  z_{1}       \label{eq:yvk102}
\end{align}
 where  $2\cdot \Omega (\xi,  Q)  \defeq   \{ \xi \cdot a :   \    a \in  \Zc  \cap [-2Q,   2Q]   \}$,  and   $ g'  \defeq   h_{12} h_{21} u_{2}  +  \sqrt{P^{ 2 - 3\alpha }} h_{11}h_{22} v_{1,p}    +     \sqrt{P^{ 1-2\alpha  }}   h_{12} h_{11}  v_{2,p}$.  It holds true that $ |g' | \leq  3  \sqrt{2}$ for any realizations of $g'$ in this case with $2/3 <  \alpha \leq  3/4$.
Let $s_{vu} \defeq v_{2,c} + u_1$  and let  $\hat{s}_{vu}$ be the estimate of $s_{vu}$. Then from Lemma~\ref{lm:AWGNic} we can conclude that  
 \[\text{Pr} [s_{vu}  \neq  \hat{s}_{vu} |  v_{1,c} = \hat{v}_{1,c}] \to 0, \quad \text {as}\quad  P\to \infty.\]
At this point, we have
  \begin{align}
 &\text{Pr} [s_{vu}  \neq  \hat{s}_{vu} ]  \non\\
  = & \text{Pr} [v_{1,c} = \hat{v}_{1,c}]  \cdot  \text{Pr} [s_{vu}  \neq  \hat{s}_{vu} |  v_{1,c} = \hat{v}_{1,c}]  +  
 \text{Pr} [v_{1,c} \neq \hat{v}_{1,c}]  \cdot  \text{Pr} [s_{vu}  \neq  \hat{s}_{vu} |  v_{1,c} \neq  \hat{v}_{1,c}]      \non\\
  \leq &\text{Pr} [s_{vu}  \neq  \hat{s}_{vu} |  v_{1,c} = \hat{v}_{1,c}] +  \text{Pr} [v_{1,c} \neq \hat{v}_{1,c}]   \to 0    \quad \text {as }\quad  P \to \infty.    \label{eq:error887}
\end{align}

In the third step, we remove the decoded $v_{2,c} + u_1$  from $y_1$ and then decode $  u_{2}  \in  \Omega (\xi   =  \gamma_{u_{2}} \cdot \frac{ 1}{Q},    \   Q =  P^{ \frac{ 3\alpha -2 - \epsilon}{2}} )$.
With the similar steps as before, from Lemma~\ref{lm:AWGNic} we can conclude  
 \begin{align}
 \text{Pr} [ u_{2}  \neq    \hat{u}_{2}] &   \to 0    \quad \text {as }\quad  P \to \infty.   \label{eq:error998}
\end{align}

In the final step, similarly, we remove the decoded $u_{2}$ from $y_1$ and then decode $v_{1,p} \in    \Omega (\xi   =\gamma_{v_{1,p}} \cdot \frac{ 1}{Q},   \   Q = P^{ \frac{ 1 - \alpha - \epsilon}{2}} )$, 
with error probability given as
  \begin{align}
 \text{Pr} [ v_{1,p} \neq  \hat{v}_{1,p}] &   \to 0    \quad \text {as }\quad  P \to \infty.   \label{eq:error2950}
\end{align}

By combining the results of \eqref{eq:error776} and \eqref{eq:error2950}, it holds true that 
the error probability of estimating $\{v_{1,c}, v_{1,p} \}$ from $y_1$ is
\begin{align}
 \text{Pr} [  \{ v_{1,c} \neq \hat{v}_{1,c} \} \cup  \{ v_{1,p} \neq \hat{v}_{1,p} \}  ]   \leq   \text{Pr} [  v_{1,c} \neq \hat{v}_{1,c} ]  +  \text{Pr} [  v_{1,p} \neq \hat{v}_{1,p}  ]  \to   0         \quad \text {as}\quad  P\to \infty   \non 
\end{align}
which completes the proof for the case of $k=1$. 
Due to the symmetry, the proof for the case of $k=2$ follows from the above steps, with the roles of users interchanged.

\section{Proof of Lemma~\ref{lm:rateerror322}  \label{sec:rateerror322} }

The proof of Lemma~\ref{lm:rateerror322} is very similar to the proof of Lemma~\ref{lm:rateerror2334}.
In this case we will prove that,  when $3/2 \leq \alpha \leq 2$, and given the signal design in \eqref{eq:constellationGsym1}-\eqref{eq:constellationGsym1u} and \eqref{eq:para333}-\eqref{eq:xvkkk322},  the error probability of estimating $v_{k,c}$ from $y_k$  is
 \begin{align}
\text{Pr} [ v_{k,c} \neq \hat{v}_{k,c}  ]  \to 0         \quad \text {as}\quad  P\to \infty     \non
 \end{align}
for $k=1,2$, where  $ \hat{v}_{k,c}$  is  the corresponding estimate for  $v_{k,c}$  based on the observation $y_k$ expressed in \eqref{eq:yvk1322} and \eqref{eq:yvk2322}. 

Due to the symmetry we will focus on the proof for the first user ($k=1$). 
In the first step,  we estimate $u_{2} \in \Omega ( \xi =  \gamma_{u_{2}} \cdot \frac{ 1}{Q} ,   \   Q =  P^{ \frac{2 - \alpha  - \epsilon }{2}} )$ from $y_1$  by treating the other signals as noise, where $y_1$ is expressed in \eqref{eq:yvk1322}. Note that $y_1$ in \eqref{eq:yvk1322} can be rewritten as 
\begin{align}
 y_1  =    \sqrt{P^{ \alpha }}  h_{12} h_{21} u_{2}      +  \sqrt{P} \tilde{g}   +  z_{1}    \label{eq:yvk101322}
\end{align}
where 
  \begin{align}
 \tilde{g}  &  \defeq      h_{12} h_{11}(   v_{2,c}  +  u_1)  +   \sqrt{P^{ -( \alpha-1)}} h_{11}h_{22} v_{1,c}    . \non
 \end{align}
It holds true that \[ |\tilde{g} | \leq  3 \sqrt{2} \] for any realizations of $\tilde{g}$, under the regime of $3/2 \leq   \alpha \leq  2$.
Then, from Lemma~\ref{lm:AWGNic} we can conclude that  the error probability of estimating $u_{2}$ from $y_1$  is 
 \begin{align}
\text{Pr} [u_{2} \neq \hat{u}_{2}] \to 0, \quad \text {as}\quad  P\to \infty.   \label{eq:error776322}
\end{align}

In the second step, we  remove the decoded $u_{2}$ from $y_1$ and then estimate $v_{2,c} + u_1 \in   2\cdot \Omega ( \xi =  \gamma_{u_{1}} \cdot \frac{ 1}{Q} ,   \   Q =  P^{ \frac{2 - \alpha  - \epsilon }{2}} )$ from  the following observation 
 \begin{align}
y_1-   \sqrt{P^{ \alpha }}  h_{12} h_{21} u_{2}  =      \sqrt{P}  h_{12} h_{11}(   v_{2,c}  +  u_1)    +   \sqrt{P^{ 2- \alpha}} h_{11}h_{22} v_{1,c}   +  z_{1}.       \label{eq:yvk102322}
\end{align}
 It holds true that $ |h_{11}h_{22} v_{1,c} | \leq   \sqrt{2}$ for any realizations of $v_{1,c}$.
Let $s_{vu} \defeq v_{2,c} + u_1$  and let  $\hat{s}_{vu}$ be the estimate of $s_{vu}$. Then, from Lemma~\ref{lm:AWGNic} we can conclude that  
 \[\text{Pr} [s_{vu}  \neq  \hat{s}_{vu} |  u_{2} = \hat{u}_{2}   ] \to 0, \quad \text {as}\quad  P\to \infty\]
and that
  \begin{align}
 \text{Pr} [s_{vu}  \neq  \hat{s}_{vu} ]   \leq &\text{Pr} [s_{vu}  \neq  \hat{s}_{vu} |  u_{2} = \hat{u}_{2}   ] +  \text{Pr} [u_{2} \neq  \hat{u}_{2} ]   \to 0    \quad \text {as }\quad  P \to \infty.    \label{eq:error887322}
\end{align}

In the final step,  we remove the decoded $v_{2,c} + u_1$ from $y_1$ and then decode $v_{1,c} \in   \Omega ( \xi =  \gamma_{v_{1,c}} \cdot \frac{ 1}{Q} ,   \   Q =  P^{ \frac{2 - \alpha  - \epsilon }{2}} )$  from  the following observation  \[y_1-   \sqrt{P^{ \alpha }}  h_{12} h_{21} u_{2}  - \sqrt{P}  h_{12} h_{11}(   v_{2,c}  +  u_1) =   \sqrt{P^{ 2- \alpha}} h_{11}h_{22} v_{1,c}    +  z_{1}.      \]  At this point one can easily conclude that the associated error probability is  
  \begin{align}
 \text{Pr} [ v_{1,c} \neq  \hat{v}_{1,c}] &   \to 0    \quad \text {as }\quad  P \to \infty   \label{eq:error2950322}
\end{align}
which completes the proof for the case $k=1$, as well as the proof for the case $k=2$ due to the symmetry.

\section{Proof of Lemma~\ref{lm:rateerror132}  \label{sec:rateerror132} }

We will prove Lemma~\ref{lm:rateerror132} in this section.  Specifically,  we will prove that, when  $1 \leq \alpha \leq 3/2$ and given the signal design in \eqref{eq:constellationGsym1}-\eqref{eq:constellationGsym1u} and \eqref{eq:para13211}-\eqref{eq:xvkkk1132}, then for almost all the channel coefficients  $\{h_{k\ell}\} \in (1, 2]^{2\times 2}$,  the error probability of estimating  $v_{k,c}$ from $y_k$  is   
\begin{align}
 \text{Pr} [  v_{k,c} \neq \hat{v}_{k,c}  ]  \to 0         \quad \text {as}\quad  P\to \infty   \non
 \end{align}
for $k=1,2$, where  $ \hat{v}_{k,c}$  is the corresponding estimate for  $v_{k,c}$ based on the observation $y_k$ expressed in \eqref{eq:yvk1322132} and \eqref{eq:yvk2322132}. 
Similar to the proof of Lemma~\ref{lm:rateerror341}, this proof will use the approaches of noise removal and signal separation. 

Due to the symmetry we will focus on the proof for the first user ($k=1$). 
Let us first rewrite  $y_1$ expressed in \eqref{eq:yvk1322132}  as 
\begin{align}
y_{1} &=    \sqrt{P^{ 2- \alpha}} h_{11}h_{22} v_{1,c}   +    \sqrt{P}  h_{12} h_{11}(   v_{2,c}  +  u_1)    +   \sqrt{P^{ \alpha }}  h_{12} h_{21} u_{2}   +  z_{1}    \non \\
         &=    \sqrt{P^{ 2- 4 \alpha/3 +  \epsilon}} \cdot 2\gamma \cdot ( \bar{g}_0 \bar{q}_0 + \sqrt{P^{ \alpha -1 }}\bar{g}_1 \bar{q}_1 + \sqrt{P^{2\alpha -2 }} \bar{g}_2 \bar{q}_2 )   +  z_{1}    \non\\
         &=    \sqrt{P^{ 2- 4 \alpha/3 +  \epsilon}} \cdot 2\gamma \cdot \bar{x}_s   +  z_{1}    \label{eq:yvk4835132}
\end{align}
where  $\bar{x}_s  \defeq ( \bar{g}_0 \bar{q}_0 + \sqrt{P^{ \alpha -1 }}\bar{g}_1 \bar{q}_1 + \sqrt{P^{2\alpha -2 }} \bar{g}_2 \bar{q}_2 )$ and 
 \[ \bar{g}_0\defeq  h_{11}h_{22},    \quad  \bar{g}_1\defeq   h_{12} h_{11}  , \quad   \bar{g}_2 \defeq h_{12} h_{21} \] 
    \[ \bar{q}_0  \defeq  \frac{Q_{\max}}{2\gamma} \cdot   v_{1,c}  ,    \quad  \bar{q}_1  \defeq  \frac{Q_{\max}}{2\gamma} \cdot   (   v_{2,c}  +  u_1), \quad   \bar{q}_2  \defeq  \frac{ Q_{\max}}{2\gamma} \cdot  u_{2},   \quad  Q_{\max} \defeq P^{ \frac{ \alpha/3 - \epsilon }{2}}\]
   for a given constant $\gamma  \in \bigl(0, \frac{1}{4\sqrt{2}}\bigr]$ (see \eqref{eq:gammadef}), $v_{1,c} \in \Omega ( \xi =  2\gamma \cdot \frac{ 1}{Q} ,   Q =  P^{ \frac{ \alpha/3 - \epsilon }{2}} )$,  $v_{2,c}  +  u_1 \in 2 \cdot \Omega ( \xi =  2\gamma \cdot \frac{ 1}{Q} ,  Q =  P^{ \frac{ \alpha/3 - \epsilon }{2}} )$  and $u_{2} \in \Omega ( \xi =  2\gamma \cdot \frac{ 1}{Q} ,   \   Q =  P^{ \frac{ \alpha/3 - \epsilon }{2}} ) $.
Based on our definitions, it holds true that $\bar{q}_0, \bar{q}_1, \bar{q}_2 \in \Zc$,   $|\bar{q}_0| \leq Q_{\max}$, $|\bar{q}_1| \leq 2 Q_{\max}$,  $|\bar{q}_2| \leq Q_{\max}$,   $ \sqrt{P^{ \alpha -1 }} \in  \Zc^+$ and $\sqrt{P^{ 2\alpha -2}} \in \Zc^+$ for this case with $1 \leq \alpha \leq 3/2$.
Let us consider the minimum distance for $\bar{x}_s$ defined as
  \begin{align}
\bar{d}_{\min}  (\bar{g}_0, \bar{g}_1, \bar{g}_2)   \defeq    \min_{\substack{ \bar{q}_0, \bar{q}_2, \bar{q}_0', \bar{q}_2' \in \Zc  \cap [- Q_{\max},    Q_{\max}]  \\  \bar{q}_1, \bar{q}_1' \in \Zc  \cap [- 2Q_{\max},   2 Q_{\max}]   \\  (\bar{q}_0, \bar{q}_1, \bar{q}_2) \neq  (\bar{q}_0', \bar{q}_1', \bar{q}_2')  }}  | \bar{g}_0  (\bar{q}_0 - \bar{q}_0') + \sqrt{P^{ \alpha - 1}} \bar{g}_1 (\bar{q}_1 - \bar{q}_1') + \sqrt{P^{2\alpha -2}} \bar{g}_2 (\bar{q}_2 - \bar{q}_2')  |.     \label{eq:minidis111132}
 \end{align}
The following Lemma~\ref{lm:distance132} provides  a result regarding the lower bound on the minimum distance $\bar{d}_{\min}$.

\begin{lemma}  \label{lm:distance132}
Consider the case $\alpha \in [1, 3/2]$, and  consider some constants $\delta \in (0, 1]$ and  $\epsilon >0$.   Given the signal design in \eqref{eq:constellationGsym1}-\eqref{eq:constellationGsym1u} and \eqref{eq:para13211}-\eqref{eq:xvkkk1132},  then the minimum distance $\bar{d}_{\min}$ defined in \eqref{eq:minidis111132} is bounded by
 \begin{align}
d_{\min}    \geq   \delta P^{- \frac{2- 4\alpha/3 }{2}}   \label{eq:distancegeq132}
 \end{align}
for all  the channel  coefficients $\{h_{k\ell}\} \in (1, 2]^{2\times 2} \setminus \Hob$, and the Lebesgue measure of the outage set  $\Hob \subseteq (1,2]^{2\times 2}$ , denoted by $\mathcal{L}(\Hob)$,  satisfies  
 \begin{align}
\mathcal{L}(\Hob) \leq 258048 \delta   \cdot     P^{ - \frac{ \epsilon  }{2}}.   \label{eq:LebmeaB132}
 \end{align}
 \end{lemma}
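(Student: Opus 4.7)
The plan is to mirror the proof of Lemma~\ref{lm:distance3432} almost verbatim, since the situation for $\alpha\in[1,3/2]$ is structurally identical: we must again bound the minimum of an integer linear combination of the form $\bar{g}_0\bar{q}_0+\sqrt{P^{\alpha-1}}\bar{g}_1\bar{q}_1+\sqrt{P^{2\alpha-2}}\bar{g}_2\bar{q}_2$ away from zero for most channel realizations. Observe that under the symmetric-channel normalization (with $\alpha_{11}=\alpha_{22}=1$ and $\alpha_{12}=\alpha_{21}=\alpha$, and $\sqrt{P^{\alpha_{k\ell}}}=2^{m_{k\ell}}$), the scalars $\sqrt{P^{\alpha-1}}$ and $\sqrt{P^{2\alpha-2}}$ are positive integers, so that Lemma~\ref{lm:NMb} applies directly to the set in $(\bar{g}_0,\bar{g}_1,\bar{g}_2)$-space. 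Moreover the triple $(\bar{g}_0,\bar{g}_1,\bar{g}_2)=(h_{11}h_{22},h_{12}h_{11},h_{12}h_{21})$ is merely a permutation of the triple used in Lemma~\ref{lm:distance3432}, so each $\bar{g}_i\in(1,4]$.

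First, I would instantiate Lemma~\ref{lm:NMb} with $\beta\defeq\delta P^{-(2-4\alpha/3)/2}$, $A_1\defeq P^{(\alpha-1)/2}$, $A_2\defeq P^{\alpha-1}$, and $Q_0\defeq 2Q_{\max}$, $Q_1\defeq 4Q_{\max}$, $Q_2\defeq 2Q_{\max}$, where $Q_{\max}\defeq P^{(\alpha/3-\epsilon)/2}$. Define the outage set $\bar B$ in the $(\bar g_0,\bar g_1,\bar g_2)\in(1,4]^3$ cube exactly as in \eqref{eq:Boutage11}-\eqref{eq:Boutage22}. Next I would compute $\tilde Q_1$ and $\tilde Q_2$: for $\alpha\ge 1$, one has $P^{\alpha-1}\ge 1$ and $P^{(\alpha-1)/2}\ge 1$, so $\tilde Q_1=4Q_{\max}$ and $\tilde Q_2=2Q_{\max}\cdot\min\{1,16P^{-(\alpha-1)/2}\}$. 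Plugging these into the bound of Lemma~\ref{lm:NMb} and selecting the appropriate minima (specifically $\min\{Q_2,Q_0/A_2\}=2Q_{\max}/P^{\alpha-1}$ and $\min\{Q_1,Q_0/A_1\}=2Q_{\max}/P^{(\alpha-1)/2}$, etc.), the dominant term is of the order $\beta\cdot Q_{\max}^{2}/P^{(\alpha-1)/2}$. Substituting $\beta=\delta P^{-(2-4\alpha/3)/2}$ and $Q_{\max}=P^{(\alpha/3-\epsilon)/2}$, the $P$-exponents telescope to $-\epsilon/2$, giving $\mathcal L(\bar B)\le C\,\delta\, P^{-\epsilon/2}$ for some explicit constant $C$ (I would aim to track constants so that $C\le 129024$ as in \eqref{eq:LeBmeasure}).

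Second, I would transfer the bound from $\bar g$-coordinates back to the original channel coordinates $(h_{11},h_{12},h_{21},h_{22})$ by repeating the change-of-variables argument in \eqref{eq:LeBmeasure882441}-\eqref{eq:LeBmeasure9999}. Fix $h_{11}\in(1,2]$; then $(h_{12},h_{21},h_{22})\mapsto(\bar g_0,\bar g_1,\bar g_2)=(h_{11}h_{22},h_{12}h_{11},h_{12}h_{21})$ has Jacobian $h_{11}^{2}\bar g_1^{-1}\le$ a universal constant (after including the $h_{11}^{-1}\bar g_1^{-1}\le 1$ factor as in \eqref{eq:LeBmeasure882441}). Integrating in $h_{11}$ over $(1,2]$ multiplies the bound by at most $2$, yielding $\mathcal L(\bar{\mathcal H}_{\text{out}})\le 258048\,\delta\,P^{-\epsilon/2}$. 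For any $(h_{k\ell})\notin\bar{\mathcal H}_{\text{out}}$, the corresponding $(\bar g_0,\bar g_1,\bar g_2)\notin\bar B$, whence by construction $\bar d_{\min}\ge\delta P^{-(2-4\alpha/3)/2}$.

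The main obstacle is purely bookkeeping: verifying that the reordering of the three products of $h$'s in the definition of $(\bar g_0,\bar g_1,\bar g_2)$ does not spoil either (i)~the bijectivity and Jacobian control of the change of variables (one must check that $\bar g_1=h_{11}h_{12}$ appears in the denominator after eliminating $h_{12}$, analogous to the factor $g_1^{-1}$ in \eqref{eq:LeBmeasure882441}), and (ii)~the integer constraint required by Lemma~\ref{lm:NMb}, which needs $\sqrt{P^{\alpha-1}},\sqrt{P^{2\alpha-2}}\in\mathbb Z^{+}$; this holds because $\alpha=m_{12}/m_{11}$ with $\sqrt{P^{\alpha-1}}=2^{m_{12}-m_{11}}$ under the parametrization in Section~\ref{sec:systemGaussian}. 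Once these two checks are in place, every arithmetic step of the proof of Lemma~\ref{lm:distance3432} carries through with only the substitution of the new exponents $(A_1,A_2,\beta)$, and the stated bounds \eqref{eq:distancegeq132}-\eqref{eq:LebmeaB132} follow.
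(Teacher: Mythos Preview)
Your proposal is correct and follows essentially the same route as the paper's own proof: the paper instantiates Lemma~\ref{lm:NMb} with exactly your choices $\bar\beta=\delta P^{-(2-4\alpha/3)/2}$, $\bar A_1=P^{(\alpha-1)/2}$, $\bar A_2=P^{\alpha-1}$, $Q_0=2Q_{\max}$, $Q_1=4Q_{\max}$, $Q_2=2Q_{\max}$, computes $\tilde Q'_1=4Q_{\max}$ and $\tilde Q'_2=2Q_{\max}\min\{1,16P^{-(\alpha-1)/2}\}$, and tracks the constants to reach $\Lc(\bar B)\le 129024\,\delta P^{-\epsilon/2}$ before invoking the change-of-variables step \eqref{eq:LeBmeasure882441}--\eqref{eq:LeBmeasure9999} verbatim to obtain \eqref{eq:LebmeaB132}. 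One minor slip: the Jacobian of $(h_{12},h_{21},h_{22})\mapsto(\bar g_0,\bar g_1,\bar g_2)$ with $h_{11}$ fixed is $h_{11}\bar g_1$ (not $h_{11}^2\bar g_1^{-1}$), so the inverse Jacobian factor is $h_{11}^{-1}\bar g_1^{-1}\le 1$, exactly as in \eqref{eq:LeBmeasure882441}; your conclusion is unaffected.
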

 \begin{proof}
This proof is very similar to the proof for Lemma~\ref{lm:distance3432}.
We will consider the case of $1 \leq \alpha \leq 3/2$. Let 
\[ \bar{\beta} \defeq  \delta  P^{- \frac{2- 4\alpha/3  }{2}} , \quad \bar{A}_1 \defeq   P^{\frac{\alpha -1}{2} } ,  \quad  \bar{A}_2 \defeq   P^{ \alpha -1} \]
 \[ \bar{g}_0\defeq  h_{11}h_{22},    \quad  \bar{g}_1\defeq   h_{12} h_{11}  , \quad   \bar{g}_2 \defeq h_{12} h_{21} \] 
\[ Q_0 \defeq 2 Q_{\max},    \quad Q_1\defeq 4 Q_{\max} , \quad Q_2 \defeq 2 Q_{\max},  \quad  Q_{\max} \defeq P^{ \frac{ \alpha/3 - \epsilon }{2}}       \]
for some $\epsilon >0$ and $\delta \in (0, 1]$.
Let us define the event
\begin{align}
\bar{B}(\bar{q}_0, \bar{q}_1, \bar{q}_2)  \defeq \{ (\bar{g}_0, \bar{g}_1, \bar{g}_2)  \in (1,4]^3 :  |  \bar{g}_0\bar{q}_0 + \bar{A}_1 \bar{g}_1 \bar{q}_1 + \bar{A}_2 \bar{g}_2 \bar{q}_2    | < \bar{ \beta} \}       \label{eq:Boutage11132}
\end{align}
and set 
\begin{align}
\bar{B}  \defeq   \bigcup_{\substack{ \bar{q}_0, \bar{q}_1, \bar{q}_2 \in \Zc:  \\  (\bar{q}_0, \bar{q}_1, \bar{q}_2) \neq  0,  \\  |\bar{q}_k| \leq Q_k  \ \forall k }}  B(\bar{q}_0, \bar{q}_1, \bar{q}_2) .   \label{eq:Boutage22132}
\end{align}
From  Lemma~\ref{lm:NMb},  the Lebesgue measure of $\bar{B}$, denoted by $\Lc (\bar{B})$,  is bounded by
\begin{align}
\Lc (\bar{B}) & \leq   504 \bar{\beta} \Bigl(  2 \min \bigl\{ 2 Q_{\max},  \frac{ 2 Q_{\max}}{ P^{ \alpha -1 }} \bigr\}  +  \tilde{Q}'_2 \cdot \min \bigl\{ 4 Q_{\max} ,  \frac{2 Q_{\max} }{P^{\frac{ \alpha -1}{2} }},  \frac{P^{ \alpha -1} \tilde{Q}'_2}{P^{\frac{\alpha -1}{2} }}   \bigr\} \non\\ & \quad\quad\quad + 2 \min \bigl\{ 4 Q_{\max},  \frac{ 2 Q_{\max}}{ P^{\frac{\alpha -1}{2} }} \bigr\}  + \tilde{Q}'_1 \cdot \min \bigl\{ 2 Q_{\max} ,  \frac{2 Q_{\max} }{P^{ \alpha -1}},  \frac{P^{\frac{  \alpha -1 }{2} }  \cdot 4 Q_{\max}}{P^{ \alpha -1}}   \bigr\}        \Bigr)   \non\\
 &= 504 \bar{\beta} \Bigl(  \frac{ 4 Q_{\max}}{ P^{ \alpha -1 }}  +  \tilde{Q}'_2 \cdot   \frac{2 Q_{\max} }{P^{\frac{ \alpha -1}{2} }}  + \frac{ 4 Q_{\max}}{ P^{\frac{ \alpha -1 }{2} }}   + \tilde{Q}'_1 \cdot \frac{2 Q_{\max} }{P^{ \alpha -1}}       \Bigr)  \non \\
  &= 504 \bar{\beta} \Bigl(  \frac{ 4 Q_{\max}}{ P^{  \alpha -1 }}  + 2 Q_{\max} \cdot \min\{1, 16 P^{-\frac{  \alpha -1}{2} }\}  \cdot   \frac{2 Q_{\max} }{P^{\frac{ \alpha -1}{2} }}  + \frac{ 4 Q_{\max}}{ P^{\frac{\alpha -1}{2} }}   + 4 Q_{\max} \cdot \frac{2 Q_{\max} }{P^{ \alpha -1}}       \Bigr)  \non\\
   &\leq  504 \bar{\beta} \Bigl(  \frac{ 4 Q_{\max}}{ P^{ \alpha -1 }}  + 2 Q_{\max} \cdot  16 \cdot   \frac{2 Q_{\max} }{P^{ \alpha -1 }}  + \frac{ 4 Q_{\max}}{ P^{\frac{ \alpha -1}{2} }}   + 4 Q_{\max} \cdot \frac{2 Q_{\max} }{P^{ \alpha -1 }}       \Bigr)   \non\\
   & \leq  504 \bar{\beta} \cdot    \frac{ 16Q_{\max}}{ P^{\frac{\alpha -1}{2} }}  \cdot \max \{  16  \frac{Q_{\max}}{ P^{\frac{ \alpha -1}{2} }} , 1 \}  \non\\ 
      & =  504 \bar{\beta} \cdot    16 P^{ \frac{ 1- 2\alpha/3  - \epsilon  }{2}} \cdot \max \{  16 P^{ \frac{ 1- 2\alpha/3  - \epsilon  }{2}} , 1 \}  \non\\
            & \leq  504 \bar{\beta} \cdot    16 P^{ \frac{1-  2\alpha/3  - \epsilon  }{2}} \cdot   16 P^{ \frac{1- 2\alpha/3    }{2}}   \non\\ 
             & =  504 \delta P^{- \frac{2- 4\alpha/3  }{2}}  \cdot    16 P^{ \frac{ 1- 2\alpha/3  - \epsilon  }{2}} \cdot   16 P^{ \frac{ 1- 2\alpha/3    }{2}}   \non\\   
                & =  129024 \delta   \cdot     P^{ - \frac{ \epsilon  }{2}}          \label{eq:LeBmeasure132}
\end{align}
for a constant $\delta \in (0, 1]$, where
\begin{align*}
\tilde{Q}'_1 &= \min\Bigl\{4 Q_{\max}, \   8 \cdot \frac{\max\{2 Q_{\max}, \  P^{ \alpha -1}\cdot 2 Q_{\max} \}}{P^{\frac{  \alpha -1 }{2} }}\Bigr\} = \min\Bigl\{4 Q_{\max},   \  8 P^{\frac{ \alpha -1}{2} }\cdot 2 Q_{\max} \Bigr\} =4 Q_{\max}\\
\tilde{Q}'_2 &\defeq \min\Bigl\{2 Q_{\max}, \   8\cdot \frac{\max\{2 Q_{\max},  \  P^{\frac{ \alpha -1}{2} } \cdot4 Q_{\max}\}}{ P^{ \alpha - 1}}\Bigr\}  = 2 Q_{\max} \cdot \min\{1,  \ 16 P^{-\frac{\alpha -1}{2} }\}.
\end{align*}
The set  $\bar{B}$ can be considered as an outage set. 
For any  triple $(\bar{g}_0, \bar{g}_1, \bar{g}_2)$  outside the outage set $\bar{B}$, i.e., $(\bar{g}_0, \bar{g}_1, \bar{g}_2)\notin \bar{B}$, it is apparent that $\bar{d}_{\min} (\bar{g}_0, \bar{g}_1, \bar{g}_2)   \geq   \delta P^{- \frac{2- 4\alpha/3 }{2}}$. 
In our setting $  \bar{g}_0\defeq  h_{11}h_{22}$,    $ \bar{g}_1\defeq   h_{12} h_{11}$ and  $\bar{g}_2 \defeq h_{12} h_{21} $. 
Let us define  $\Hob$  as the collection of the quadruples $(h_{11}, h_{12}, h_{22}, h_{21} ) \in (1, 2]^{2\times 2}$   such that the corresponding triples $(\bar{g}_0, \bar{g}_1, \bar{g}_2)$ are in the outage set $B$, that is,
\[ \Hob \defeq \{  (h_{11}, h_{12}, h_{22}, h_{21} ) \in (1, 2]^{2\times 2} :      (\bar{g}_0, \bar{g}_1, \bar{g}_2) \in \bar{B}  \} . \]
By following the similar steps in \eqref{eq:LeBmeasure882441}-\eqref{eq:LeBmeasure9999}, one can  bound the Lebesgue measure of $\Hob$ as: 
\begin{align}
\Lc (\Hob )  \leq  258048 \delta   \cdot     P^{ - \frac{ \epsilon  }{2}}.              \label{eq:LeBmeasure9999132}  
\end{align}
Now we complete the proof of this lemma.
 \end{proof}

At this point, we go back to the expression of $y_1$ in \eqref{eq:yvk4835132}, i.e., \[y_{1}   =    \sqrt{P^{ 2- 4 \alpha/3 +  \epsilon}} \cdot 2\gamma \cdot \bar{x}_s   +  z_{1}.\]
Based on our definition, the minimum distance for $\bar{x}_{s}$ is  $\bar{d}_{\min}$ defined in \eqref{eq:minidis111132}.  Lemma~\ref{lm:distance132} reveals that, under the channel condition $\{h_{k\ell}\} \notin \Hob$, the  minimum distance for $\bar{x}_{s}$ is bounded by  $\bar{d}_{\min}    \geq   \delta P^{- \frac{2- 4\alpha/3}{2}} $, which implies that  $\bar{x}_s$ can be estimated from $y_{1}$ and  the corresponding  error probability vanishes  as  $P\to \infty$.
Note that in this case, the minimum distance for $ \sqrt{P^{ 2- 4 \alpha/3 +  \epsilon}} \cdot 2\gamma \cdot \bar{x}_s$ is lower bounded by $\sqrt{P^{  \epsilon}} \cdot 2\gamma \delta$.
Once  $\bar{x}_{s} $ is decoded correctly, then the three symbols $ \bar{q}_0  \defeq  \frac{Q_{\max}}{2\gamma} \cdot   v_{1,c}$,   $\bar{q}_1  \defeq  \frac{Q_{\max}}{2\gamma} \cdot   (   v_{2,c}  +  u_1)$,   and  $\bar{q}_2  \defeq  \frac{ Q_{\max}}{2\gamma} \cdot  u_{2}$ can be recovered from $\bar{x}_s  = ( \bar{g}_0 \bar{q}_0 + \sqrt{P^{ \alpha -1 }}\bar{g}_1 \bar{q}_1 + \sqrt{P^{2\alpha -2 }} \bar{g}_2 \bar{q}_2 )$ due to the fact that $\bar{g}_0, \bar{g}_1, \bar{g}_2$ are rationally independent.
Then, we can conclude that the error probability for decoding $v_{k,c}$  is 
 \begin{align}
  \text{Pr} [ v_{k,c} \neq \hat{v}_{k,c} ]     \to 0  \quad  \text{as} \quad   P\to \infty  \label{eq:error155525132}                           
 \end{align}
 for $k=1$ and $1 \leq \alpha \leq 3/2$,  given the signal design in \eqref{eq:constellationGsym1}-\eqref{eq:constellationGsym1u} and  \eqref{eq:para13211}-\eqref{eq:xvkkk1132}.  This result holds true  for  all the channel coefficients  $\{h_{k\ell}\} \in (1, 2]^{2\times 2}$  except  for an outage set   $\Hob \subseteq (1,2]^{2\times 2}$ with Lebesgue measure  $\mathcal{L}(\Hob)$ satisfying  \[ \mathcal{L}(\Hob)  \to 0, \quad \text {as}\quad  P\to \infty \] 
(see \eqref{eq:LeBmeasure9999132}).
Due to the symmetry, the result in \eqref{eq:error155525132} also holds true for the case of $k=2$.

%


\end{document}